\tikzstyle{tikzfig}=[baseline=-0.25em,scale=0.5]
\tikzstyle{none}=[inner sep=0mm]
\newcommand{\tikzfig}[1]{%
{\tikzstyle{every picture}=[tikzfig]
\IfFileExists{#1.tikz}
  {\input{#1.tikz}}
  {%
    \IfFileExists{./figures/#1.tikz}
      {\input{./figures/#1.tikz}}
      {\tikz[baseline=-0.5em]{\node[draw=red,font=\color{red},fill=red!10!white] {\textit{#1}};}}%
  }}%
}
\tikzstyle{every loop}=[]
\NewCommandCopy{\proofqedsymbol}{\qedsymbol}
\newcommand{\exampleqedsymbol}{{$\triangle$}}
\renewcommand{\qedsymbol}{\exampleqedsymbol}%
\renewcommand{\qedsymbol}{\exampleqedsymbol}%
\definecolor{Dark Ruby Red}{HTML}{580507}
\definecolor{Dark Blue Sapphire}{HTML}{053641}
\definecolor{Dark Gamboge}{HTML}{be7c00}
\definecolor{Desire}{HTML}{eb3b5a} 
\definecolor{Boyzone}{HTML}{2d98da} 
\definecolor{Royal Blue}{HTML}{3867d6} 
\definecolor{NYC Taxi}{HTML}{f7b731} 
\definecolor{Beniukon Orange}{HTML}{fa8231}
\definecolor{Algal Fuel}{HTML}{20bf6b} 
\definecolor{Innuendo}{HTML}{a5b1c2} 
\definecolor{Twinkle Blue}{HTML}{d1d8e0} 
\definecolor{Blue Horizon}{HTML}{4b6584} 
\definecolor{Gloomy Purple}{HTML}{8854d0} 
\colorlet{cBlue}{Royal Blue}
\colorlet{cYellow}{NYC Taxi}
\colorlet{cOrange}{Beniukon Orange}
\colorlet{cGreen}{Algal Fuel}
\colorlet{cRed}{Desire}
\colorlet{cGrey}{Innuendo}
\colorlet{cDarkGrey}{Blue Horizon}
\colorlet{cLightGrey}{Twinkle Blue}
\colorlet{cPurple}{Gloomy Purple}
\renewcommand{\epsilon}{\varepsilon}
\newif\ifproofappendix
\newrobustcmd\introinrestatable[1]{%
\ifproofappendix%
\kl{#1}%
\else%
\intro{#1}%
\fi%
}
\newrobustcmd\introinrestatableopt[1]{%
\ifproofappendix%
\kl[#1]{#1}%
\else%
\intro[#1]{#1}%
\fi%
}
\newrobustcmd\recall[1]{
  \proofappendixtrue%
    #1*
  \proofappendixfalse%
}
\definecolor{green}{RGB}{0,120,0}
\definecolor{hlyellow}{RGB}{250, 250, 190}
\definecolor{diegoeditcolor}{RGB}{210,210,255}
\definecolor{meghyneditcolor}{RGB}{210,255,210}
\definecolor{pierreeditcolor}{RGB}{255, 225, 186}
\definecolor{light-gray}{gray}{0.9}
\newcommand{\changed}[1]{#1}
\newcommand{\siderev}[1]{}
\newcommand{\rev}[1]{}
\renewcommand{\phi}{\varphi}
\renewcommand{\leq}{\leqslant}
\renewcommand{\geq}{\geqslant}
\renewcommand{\emptyset}{\varnothing}
\newcommand{\partsof}[1]{\wp(#1)}
\newcommand{\IQ}{\mathbb{Q}} 
\newcommand{\IN}{\mathbb{N}} 
\knowledgenewrobustcmd{\dcup}{\mathop{\cmdkl{\uplus}}} 
\newcommand{\set}[1]{\{#1\}}
\newrobustcmd{\defeq}{\mathrel{\hat{=}}}
\newcommand{\+}[1]{\mathcal{#1}}
\newrobustcmd{\Nat}{\mathbb{N}}
\newrobustcmd\pset[1]{\+P(#1)} 
\newrobustcmd{\pto}{\rightharpoonup} 
\newrobustcmd{\poly}{\mathop{\textrm{poly}}} 
\knowledgenewrobustcmd{\polyrx}{\mathrel{\cmdkl{\le_{\textit{poly}}}}} 
\knowledgenewrobustcmd{\polyeq}{\mathrel{\cmdkl{\equiv_{\textit{poly}}}}} 
\knowledgenewrobustcmd{\homto}{\mathrel{\cmdkl{\xrightarrow{\smash{\textit{hom}}}}}}
\knowledgenewrobustcmd{\Chomto}[1][C]{\mathrel{\cmdkl{\xrightarrow{\smash{{#1}\textit{-hom}}}}}}
\newcommand{\A}{\+{A}}
\newcommand{\W}{\+{W}}
\knowledgenewrobustcmd{\Const}{\cmdkl{\textup{Const}}}
\knowledgenewrobustcmd{\Var}{\cmdkl{\textup{Var}}}
\newcommand{\subendo}{\textup{\textsf{n}}}
\newcommand{\subexo}{\textup{\textsf{x}}}
\newcommand{\D}{\+{D}} 
\knowledgenewrobustcmd{\Dn}[1][\D]{#1_{\cmdkl{\subendo}}}
\knowledgenewrobustcmd{\Dx}[1][\D]{#1_{\cmdkl{\subexo}}}
\knowledgenewrobustcmd{\atoms}{\cmdkl{\textit{atoms}}}
\knowledgenewrobustcmd{\vars}{\cmdkl{\textit{vars}}}
\knowledgenewrobustcmd{\const}{\cmdkl{\textit{const}}}
\knowledgenewrobustcmd{\mterms}{\cmdkl{\textit{term}}}
\knowledgenewrobustcmd{\CQneg}{\ensuremath{\cmdkl{\textup{CQ}^\lnot}}}
\knowledgenewrobustcmd{\RAneg}{\ensuremath{\cmdkl{\textup{1RA}^-}}}
\newrobustcmd\pitwo{\ensuremath{\Pi^p_2}}
\newrobustcmd\sigmatwo{\ensuremath{\Sigma^p_2}}
\knowledgenewrobustcmd{\Sh}{\cmdkl{\mathrm{Sh}}} 
\knowledgenewrobustcmd{\Sym}{\cmdkl{\mathbb{P}}} 
\knowledgenewrobustcmd{\scorefun}[1][q]{\cmdkl{\mathbf{v}_{#1}}} 
\knowledgenewrobustcmd{\Shapley}[1]{\cmdkl{\textup{SVC}_{#1}}} 
\knowledgenewrobustcmd{\maxShapley}[1]{\cmdkl{\textup{max-SVC}_{#1}}} 
\knowledgenewrobustcmd{\ShapleyConst}[1]{\cmdkl{\textup{SVC}^{\textit{const}}_{#1}}}
\knowledgenewrobustcmd{\ShapleynConst}[1]{\cmdkl{\textup{SVC}^{\textsf{\textup{n}},\textit{const}}_{#1}}}
\knowledgenewrobustcmd{\Shapleyn}[1]{\cmdkl{\textup{SVC}^{\textsf{\textup{n}}}_{#1}}} 
\knowledgenewrobustcmd{\GMC}[1]{\cmdkl{\textup{GMC}_{#1}}}
\knowledgenewrobustcmd{\FGMC}[1]{\cmdkl{\textup{FGMC}_{#1}}}
\knowledgenewrobustcmd{\FGMCConst}[1]{\cmdkl{\textup{FGMC}^{\textit{const}}_{#1}}}
\knowledgenewrobustcmd{\FMCConst}[1]{\cmdkl{\textup{FMC}^{\textit{const}}_{#1}}}
\knowledgenewrobustcmd{\MC}[1]{\cmdkl{\textup{MC}_{#1}}}
\knowledgenewrobustcmd{\FMC}[1]{\cmdkl{\textup{FMC}_{#1}}}
\knowledgenewrobustcmd{\Proba}{\cmdkl{\textup{Pr}}} 
\knowledgenewrobustcmd{\PQE}[1]{\cmdkl{\textup{PQE}_{#1}}} 
\knowledgenewrobustcmd{\PQEP}[2]{\cmdkl{\textup{PQE}_{#1}(#2)}} 
\knowledgenewrobustcmd{\SPQE}[1]{\cmdkl{\textup{SPQE}_{#1}}} 
\knowledgenewrobustcmd{\SPPQE}[1]{\cmdkl{\textup{SPPQE}_{#1}}} 
\knowledgenewrobustcmd{\PQEPhalf}[1]{\cmdkl{\textup{PQE}_{#1}\left(\frac 1 2\right)}}
\knowledgenewrobustcmd{\PQEPhalfOne}[1]{\cmdkl{\textup{PQE}_{#1}\left(\frac 1 2 ; 1\right)}}
\knowledgenewrobustcmd{\aC}{\cmdkl{C}} 
\knowledgenewrobustcmd{\Ai}[1][]{\cmdkl{\A^{i}_{#1}}} 
\newcommand{\Ain}{\Ai[\subendo]}
\newcommand{\Aix}{\Ai[\subexo]}
\knowledgenewrobustcmd{\incgraph}[1]{\cmdkl{\mathbf{G}_{#1}}} 
\newtheorem{claim}{Claim}[section]
\newtheorem{remark}{Remark}[section]
\Crefname{claim}{claim}{claims}
\Crefname{claim}{Claim}{Claims}
\begin{document}

\title{When is Shapley Value Computation a Matter of Counting?}

\author{Meghyn Bienvenu}
\email{meghyn.bienvenu@cnrs.fr}
\affiliation{%
  \institution{Univ. Bordeaux, CNRS, Bordeaux INP, LaBRI, UMR 5800}
  \city{F-33400, Talence}
  \country{France}
  \postcode{F-33400}
}
\orcid{0000-0001-6229-8103}
\author{Diego Figueira}
\email{diego.figueira@cnrs.fr}
\affiliation{%
  \institution{Univ. Bordeaux, CNRS, Bordeaux INP, LaBRI, UMR 5800}
  \city{F-33400, Talence}
  \country{France}
  \postcode{F-33400}
}
\orcid{0000-0003-0114-2257}
\author{Pierre Lafourcade}
\email{pierre.lafourcade@u-bordeaux.fr}
\orcid{0009-0004-4810-1289}
\affiliation{%
  \institution{Univ. Bordeaux, CNRS, Bordeaux INP, LaBRI, UMR 5800}
  \city{F-33400, Talence}
  \country{France}
  \postcode{F-33400}
}


\begin{abstract}

The Shapley value provides a natural means of quantifying the 
contributions of facts to database query answers.
In this work, we seek to broaden our understanding of
Shapley value computation (SVC) in the database setting by revealing how it
relates to Fixed-size Generalized Model Counting (FGMC), which is the problem of 
computing the number of sub-databases of a given size and containing a given set of assumed facts 
that satisfy a fixed query.
Our focus will be on explaining the difficulty of SVC via FGMC, and to this end,
we identify 
general conditions on queries which enable reductions from FGMC to SVC.
As a byproduct, 
we not only obtain alternative explanations for existing hardness results for SVC,
but also new complexity results. 
In particular, we establish "FP"-"shP" complexity dichotomies
for constant-free unions of connected CQs and connected homomorphism-closed graph queries. 
We also consider some variants of the SVC problem, by disallowing assumed facts or quantifying the contributions of constants rather than facts. 

 \end{abstract}

\begin{CCSXML}
  <ccs2012>
       <concept>
           <concept_id>10003752.10010070.10010111.10003623</concept_id>
           <concept_desc>Theory of computation~Data provenance</concept_desc>
           <concept_significance>500</concept_significance>
           </concept>
       <concept>
           <concept_id>10003752.10010070.10010111.10011736</concept_id>
           <concept_desc>Theory of computation~Incomplete, inconsistent, and uncertain databases</concept_desc>
           <concept_significance>100</concept_significance>
           </concept>
     </ccs2012>
\end{CCSXML}

  \ccsdesc[500]{Theory of computation~Data provenance}
  \ccsdesc[100]{Theory of computation~Incomplete, inconsistent, and uncertain databases}

\keywords{Shapley value, model counting, query answering, probabilistic databases, conjunctive queries, homomorphism-closed queries}

\received{December 2023}
\received[revised]{February 2024}
\received[accepted]{March 2024}

\maketitle

\noindent
\raisebox{-.4ex}{\HandRight}\ \ This pdf contains internal links: clicking on a "notion@@notice" leads to its \AP ""definition@@notice"".\footnote{This result was achieved by using the "knowledge" package and its companion tool "knowledge-clustering".}

\color{black}
\section{Introduction}
\label{sec:intro}


The Shapley value \cite{shapley:book1952} is a well-known measure for distributing wealth among players in cooperative games.
It enjoys many desirable properties
and has found application in numerous areas, including databases, where it provides a natural means of 
quantifying the contribution of a database element (typically a tuple) to a query result.
The problem of Shapley value computation ($\Shapley{}$) for database queries 
has received considerable attention lately \cite{livshitsShapleyValueTuples2021,khalilComplexityShapleyValue2023,deutchComputingShapleyValue2022a,karaShapleyValueModel2023,reshefImpactNegationComplexity2020}. 
It is known to be computationally challenging -- "shP"-hard and "FPshP"-complete in data complexity under polynomial-time reductions -- 
which  motivates the question of identifying those 
queries for which $\Shapley{}$ 
 is tractable ("ie", in "FP"). It has been conjectured \cite{deutchComputingShapleyValue2022a,karaShapleyValueModel2023} that a "FP"/"shP"-hard dichotomy holds for "unions of conjunctive queries", as is the case for query evaluation in tuple-independent probabilistic databases \cite{dalviDichotomyProbabilisticInference2012} and for the generalized model counting problem \cite{kenigDichotomyGeneralizedModel2021}.
At present, this conjecture has 
been only confirmed for the subclass of 
self-join-free conjunctive queries (sjf-CQs) \cite{livshitsShapleyValueTuples2021}. 
There has however been progress on classifying the complexity of $\Shapley{}$ for 
 other query classes,
with "FP"/"shP"-hard dichotomies proven for 
sjf-CQs with safe negations \cite{reshefImpactNegationComplexity2020} and regular path queries \cite{khalilComplexityShapleyValue2023}.

The dichotomy established in \cite{livshitsShapleyValueTuples2021} for $\Shapley{}$ for sjf-CQs
identifies precisely the same tractable queries (the hierarchical ones) as the dichotomies for probabilistic query evaluation ($\PQE{}$)
and generalized model counting ($\GMC{}$),
despite having been obtained through different means. 
This raised the intriguing question of whether the Shapley dichotomy was a consequence of these existing dichotomies,
and more generally,  what is the precise relationship holding between these problems. 
Subsequent work \cite{deutchComputingShapleyValue2022a} provided a partial answer by exhibiting a general
polynomial-time  reduction (for all queries, not just sjf-CQs) of $\Shapley{}$ to $\PQE{}$, thereby showing how the tractability
results for $\PQE{}$ can be transferred to $\Shapley{}$. 

\begin{figure}
	\centering
	\hspace{-.1\linewidth}
	\begin{subfigure}{0.51\linewidth}
		\resizebox{\linewidth}{!}{\definecolor{vlightgray}{HTML}{EBEBEB}
\definecolor{puregreen}{rgb}{.15, .55, 0}

\begin{tikzpicture}[yscale=1.1]
	\small
	\node [] (18) at (4.25, -2.75) {};
	\node [] (19) at (4.25, 3.25) {};
	\node [] (20) at (-3.2, 3.25) {};
	\node [] (22) at (-4, -2.75) {};
	\node [] (24) at (4.25, -1.25) {};
	\node [] (25) at (-3.2, -1.25) {};
	\node [] (27) at (4.25, -1) {};
	\node [] (29) at (-4, -1) {};
	\node [] (30) at (-3.2, 1) {};
	\node [] (33) at (-4, 1.25) {};
	\node [] (38) at (-3.2, 1.25) {};
	\node [] (39) at (-1.25, 2) {};
	\fill [color=lightgray] (33.center) rectangle (20.center);
	\fill [color=lightgray] (29.center) rectangle (30.center);
	\fill [color=lightgray] (22.center) rectangle (25.center);
	\fill [color=vlightgray] (30.center) rectangle (27.center);
	\fill [color=vlightgray] (25.center) rectangle (18.center);
	\fill [color=vlightgray] (38.center) rectangle (19.center);

	\node [draw, rectangle, fill=white] (0) at (-2.2, 0.5) {$\MC{}$};
	\node [draw, rectangle, fill=white] (1) at (0.5, 0.5) {$\GMC{}$};
	\node [draw, rectangle, color=red, fill=white] (3) at (-0.7, 1.75) {$\SPQE{}$};
	\node [draw, rectangle, color=red, fill=white] (4) at (1.75, 1.75) {$\SPPQE{}$};
	\node [draw, rectangle, fill=white] (5) at (-2.2, 2.75) {$\PQEPhalf{}$};
	\node [draw, rectangle, fill=white] (6) at (0.5, 2.75) {$\PQEPhalfOne{}$};
	\node [draw, rectangle, fill=white] (7) at (3, 1.75) {$\PQE{}$};
	\node [draw, rectangle, fill=white] (9) at (-0.7, -0.5) {$\FMC{}$};
	\node [draw, rectangle, fill=white] (10) at (1.75, -0.5) {$\FGMC{}$};
	\node [draw, rectangle, fill=white] (11) at (-0.7, -1.75) {$\Shapleyn{}$};
	\node [draw, rectangle, fill=white] (12) at (1.75, -1.75) {$\Shapley{}$};
	\node [align=center,text width = 3.5em] (13) at (3, 0.5) {\begin{spacing}{0.5}
			\scriptsize "UCQ" dichotomy\\ \scriptsize
			\cite{dalviDichotomyProbabilisticInference2012,kenigDichotomyGeneralizedModel2021}
	\end{spacing}};
	\node [] (14) at (3, 2.75) {\scriptsize sjf-$\RAneg$ dichotomy \cite{finkDichotomiesQueriesNegation2016}};
	\node [] (15) at (1.75, -2.55) {\scriptsize "sjf-CQ" dichotomy \cite{livshitsShapleyValueTuples2021}};
	\node [] (16) at (3, -2.25) {\scriptsize "RPQ" dichotomy \cite{khalilComplexityShapleyValue2023}};
	\node [] (17) at (0.5, -2.25) {\scriptsize sjf-$\CQneg$ dichotomy \cite{reshefImpactNegationComplexity2020}};
	\node [rotate=90, align=center, anchor=center,text width = 4em] (34) at (-3.5, -2){\begin{spacing}{0.5}Shapley value\end{spacing}};
	\node [rotate=90, align=center, anchor=center,text width = 5em] (35) at (-3.5, 0) {\begin{spacing}{0.5}model counting\end{spacing}};
	\node [rotate=90, align=center, anchor=center,text width = 5em] (36) at (-3.5, 2.25) {\begin{spacing}{0.5}probabilistic evaluation\end{spacing}};
	\node [align=center, anchor=north,text width = 7em] (37) at (-2.2, 0.2) {\begin{spacing}{0.5}
			\scriptsize hardness for "unbounded" "hom-closed" "graph queries" \cite{amarilliUniformReliabilityUnbounded2023a}
		\end{spacing}};
	\node (37a)  at (37.north) [below=.03] {};

	\draw [color=puregreen, thick] (6) to (14);
	\draw [color=puregreen, thick] (14) to (7);
	\draw [color=puregreen, thick] (7) to (13);
	\draw [color=puregreen, thick] (13) to (1);
	\draw [color=puregreen, thick] (15) to (12);
	\draw [color=puregreen, thick] (16) to (12);
	\draw [color=puregreen, thick] (17) to (12);
	\draw [color=puregreen, thick] (37a.center) to (0);
	\draw [->, >= stealth, thick] (5) to (6);
	\draw [->, >= stealth, thick] (6) to (4);
	\draw [->, >= stealth, thick] (4) to (7);
	\draw [->, >= stealth, thick] (5) to (3);
	\draw [->, >= stealth, thick] (3) to (4);
	\draw [->, >= stealth, thick] (0) to (1);
	\draw [->, >= stealth, thick] (1) to (10);
	\draw [->, >= stealth, thick] (0) to (9);
	\draw [->, >= stealth, thick] (9) to (10);
	\draw [->, >= stealth, thick, color=red] (11) to (9);
	\draw [->, >= stealth, bend left=60, looseness=1.25, thick, color=red, dashed] (9) to (11);
	\draw [->, >= stealth, bend left=60, looseness=1.25, thick, color=red, dashed] (10) to (12);
	\draw [->, >= stealth, thick] (12) to (10);
	\draw [->, >= stealth, bend right=75, thick] (12) to%
	node[midway,below right] {\scriptsize\cite{deutchComputingShapleyValue2022a}}
	(7);
	\draw [->, >= stealth, thick] (11) to (12);
	\draw [<->, >= stealth, thick, color=red] (9) to (3);
	\draw [<->, >= stealth, thick, color=red] (10) to (4);
	\draw [<->, >= stealth, thick] (1) to (6);
	\draw [<->, >= stealth, thick] (0) to (5);
\end{tikzpicture}
 }
		\caption{}
		\label{fig:reductions}
	\end{subfigure}
	\begin{subfigure}{0.51\linewidth}
		\centering
		\definecolor{myblue}{HTML}{b3cde3}
\definecolor{mygreen}{HTML}{ccebc5}
\definecolor{myred}{HTML}{fbb4ae}

\resizebox{\linewidth}{!}{
\begin{tikzpicture}[xscale=.8,yscale=.8]
	\footnotesize
	\begin{pgfonlayer}{nodelayer}
		\node [draw, rectangle, rounded corners, align=center, fill=myblue] (0) at (-1.5, 2) {"dss" "hom-closed"\\on "graphs"};
		\node  at (0.east) [right=-.05] {\color{red}$\star$};
		\node [draw, rectangle, rounded corners, fill=myblue] (1) at (1.75, 3) {"dss" $C$-"hom-closed"};
		\node [draw, rectangle, rounded corners, fill=myblue] (2) at (.75, 4) {"RPQ"};
		\node  at (2.east) [right=-.05] {\color{green}$\star$\color{red}$\star$};
		\node [draw, rectangle, rounded corners] (3) at (1.75, 5) {$C$-"hom-closed"};
		\node [rectangle, rounded corners] (4) at (-2.3, 3) {\phantom{cc-disjoint-CRPQ}};
		\node [] (4a) at (4.south) [left=.4] {};
		\node [draw, rectangle, rounded corners,
		pattern={Lines[angle=45,distance=8pt,line width=4pt]},pattern color=myblue,
		preaction={fill=mygreen}
		] at (4) {\phantom{"cc-disjoint-CRPQ"}};
		\node [rectangle, rounded corners] (4) at (-2.3, 3) {"cc-disjoint-CRPQ"};
		\node  at (4.east) [right=-.05] {\color{red}$\star$};
		\node [draw, rectangle, rounded corners, fill=myblue] (5) at (-3.75, 4) {"conn." "hom-closed"};
		\node [draw, rectangle, rounded corners] (6) at (-2, 5) {"hom-closed"};
		\node [draw, rectangle, rounded corners,
		pattern={Lines[angle=45,distance=8pt,line width=4pt]},pattern color=myblue,
		preaction={fill=mygreen}] (7) at (-3, 1) {"sjf-CRPQ"};
		\node  at (7.east) [right=-.05] {\color{red}$\star$};
		\node [draw, rectangle, rounded corners, align=center, fill=myblue] (8) at (-4.75, 2) {"conn." "hom-closed"\\on "graphs"};
		\node  at (8.east) [right=-.05] {\color{red}$\star$};
		\node [draw, rectangle, rounded corners, fill=myblue] (9) at (-4.75, 0) {"conn." "UCRPQ"};
		\node  at (9.east) [right=-.05] {\color{red}$\star$};
		\node [draw, rectangle, rounded corners, fill=myblue] (10) at (-5.5, 3.0) {"conn." "UCQ"};
		\node  at (10.east) [right=-.05] {\color{red}$\star$};
		\node [rectangle, rounded corners,
		pattern={Lines[angle=45,distance=8pt,line width=4pt]},pattern color=myblue,
		preaction={fill=mygreen}] (11) at (-1.3, 0) {\phantom{"sjf-CQ"}};
		\begin{scope}
			\clip[rounded corners] (11.south west) rectangle (11.north east);
			\fill[color=myred] (11.south west) rectangle (11.north);
		\end{scope}
		\node [draw, rectangle, rounded corners] at (-1.3, 0) {"sjf-CQ"};
		\node  at (11.east) [right=-.05] {\color{green}$\star$\color{red}$\star$};
		\node [draw, rectangle, rounded corners, fill=myred] (12) at (-1.3, 1) {"CQ"};
		\node  at (12.east) [right=-.05] {$\dagger$};
		\node [draw, rectangle, rounded corners, fill=myred] (13) at (1.75, 1) {"sjf-CQ"};
		\node  at (13.east) [right=-.05] {{\color{green}$\star$\color{red}$\star$}$\dagger$};
		\node [] (14) at (0, -0.25) {};
		\node [] (15) at (0, 6) {};
		\node [] (16) at (-6.5, -0.5) {};
		\node [] (17) at (-6.5, 6) {};
		\node [] (18) at (3.5, -0.5) {};
		\node [] (19) at (3.5, 6) {};
		\node [] (20) at (0, 6) {};
		\node [below left, color=gray] (21) at (0, 6) {\textit{without "constants"}};
		\node [below right, color=gray] (21) at (0, 6) {\textit{with "constants"}};
	\end{pgfonlayer}
	\begin{pgfonlayer}{edgelayer}
		\draw [->, >=stealth,  thick] (11) to (12);
		\draw [->, >=stealth,  thick] (11) to (13);
		\draw [->, >=stealth,  thick] (1) to (3);
		\draw [->, >=stealth,  thick] (2) to (3);
		\draw [->, >=stealth,  bend right=75, looseness=.8, thick] (13) to (3);
		\draw [->, >=stealth,  thick] (7) to (4a.center);
		\draw [->, >=stealth,  thick] (9) to (8);
		\draw [->, >=stealth,  thick] (8) to (5);
		\draw [->, >=stealth,  thick] (10) to (5);
		\draw [->, >=stealth,  thick] (4) to (6);
		\draw [->, >=stealth,  thick] (5) to (6);
		\draw [->, >=stealth,  thick] (6) to (3);
		\draw [->, >=stealth,  thick] (0) to (1);
		\draw [color=lightgray, dashed] (18) to (19);
		\draw [color=lightgray, dashed] (19) to (17);
		\draw [color=lightgray, dashed] (17) to (16);
		\draw [color=lightgray, dashed] (16) to (18);
		\draw [color=lightgray, dashed] (14) to (15);
	\end{pgfonlayer}
\end{tikzpicture}
}
\resizebox{\linewidth}{!}{
\small
\begin{tabular}{lll}
	{\color{myblue}$\blacksquare$} \Cref{lem:pseudoconn} & \multicolumn{2}{l}{{\color{green}$\star$}/{\color{red}$\star$} = dichotomy exists (prior work/this paper)}\\
	{\color{myred}$\blacksquare$} \Cref{lem:leak} & \multicolumn{2}{l}{"dss" = "duplicable singleton support"}                                                                                                \\
	{\color{mygreen}$\blacksquare$} \Cref{lem:decomposable} & \multicolumn{2}{l}{$\dagger$= not preserving $\FGMC{}$-equiv.\ with the same query}
\end{tabular}
}
 		\caption{}
		\label{fig:query-lang}
	\end{subfigure}
	\hspace{-.1\linewidth}
	\vspace{-2ex}
	\caption{
		(a) A (clickable) summary of the reductions. An arrow from  $A$ to $B$ means a "polynomial-time Turing-reduction" from $A$ to $B$. Red arrows indicate our contributions.\\
		\phantom{}\hspace{9.25mm}(b) A (clickable) summary of the reductions from $\FGMC{}$ for the classes of queries captured by our results, and the "FP"/"shP"-hard  dichotomies that follow. The colors indicate the lemma(s) used for the proof: alternating diagonal stripes indicate that both lemmas are needed for the proof, while the vertical separation of "sjf-CQ" indicates two alternative ways to obtain the result.
	}
	\label{fig:figures}
%
\end{figure}


Can hardness results for $\Shapley{}$ similarly be explained via $\PQE{}$ or $\GMC{}$?
A recent result \cite{karaShapleyValueModel2023} further demonstrated the closeness 
of the settings by reproving the hardness of $\Shapley{}$ using 
the same model counting problem for Boolean functions 
as was originally used to show hardness of $\PQE{}$ for non-"hierarchical" "sjf-CQs" \cite{dalviEfficientQueryEvaluation2004}. 
However, a reduction from $\PQE{} $ or $\GMC{}$ to $\Shapley{}$ 
has yet to be exhibited, no doubt due to the fact that 
$\Shapley{}$ outputs a complex weighted sum which cannot be obviously employed for probability or model counting computations. 
Our paper investigates this missing link and shows that the matching dichotomy conditions are no 
coincidence, as it is possible to exhibit reductions from $\PQE{} $ and $\GMC{}$ to $\Shapley{}$
for broad classes of queries, 
which 
not only explain existing hardness results for $\Shapley{}$, but also yield some new complexity results.

%
%
%

\subsection*{Contributions}
Our first conceptual contribution (Section \ref{sec:problems}) is to introduce suitable specializations of the $\PQE{}$ problem, namely $\SPQE{}$ and $\SPPQE{}$, obtained by restricting the probability values of facts,
and showing that they are polynomial-time equivalent to variants $\FMC{}$ and $\FGMC{}$  of the model counting $\MC{}$ and generalized model counting $\GMC{}$ problems ("cf" ``model counting'' and ``probabilistic evaluation'' boxes of \Cref{fig:reductions}). 
While some of these problems had been implicitly used in proofs, or defined for related settings, their formal introduction 
in the database setting serves to bring some clarity and order to the literature 
and provides us with suitable analogues of $\Shapley{}$ in the probabilistic and counting domains. 
Indeed, the existing reduction of $\Shapley{}$ to $\PQE{}$ 
can more precisely be expressed as a reduction to the more restricted $\FGMC{}$ and $\SPPQE{}$ problems.

Our main technical contributions (presented in Section \ref{sec:mainresults}
and proven in Section \ref{sec:mainproofs})
are the first known reductions in the opposite direction:
from $\FGMC{}$  (equivalently, $\SPPQE{}$) to $\Shapley{}$. 
We in fact provide three such reductions
(\Cref{lem:pseudoconn,lem:leak,lem:decomposable}),
each employing somewhat different structural conditions and 
covering different classes of ($C$-)homomorphism-closed queries. 
\changed{Importantly\siderev{Added comment to highlight query-preserving reductions and obtained PTIME equivalences.}, in many cases our reductions preserve the query, i.e.\  reduce $\FGMC{\changed{q}}$ to $\Shapley{\changed{q}}$\footnote{\changed{As we consider data complexity, the query is treated as fixed, 
so each query $q$ will give rise to a separate SVC / (F)(G)MC problem (see Section \ref{sec:problems} for formal definitions). }}.
We thereby establish the polynomial-time equivalence of $\FGMC{\changed{q}}$ and $\Shapley{\changed{q}}$ for a range of query classes,
providing strong new evidence that SVC is nothing more than a counting problem. }



As a consequence of our reductions, it is now possible to obtain the 
"FP"/"shP"-hard dichotomies for $\Shapley{}$ for "sjf-CQ"s and "regular path queries"
as corollaries of existing dichotomies 
for $\PQE{}$ and $\GMC{}$ (intuitively, by following the arrows in Figure \ref{fig:reductions} from and to
known results). 
Moreover, due to the general nature of our lemma hypotheses, additional new 
results can be obtained. In particular, we establish "FP"/"shP" dichotomies
for constant-free unions of connected CQs and connected homomorphism-closed graph queries. 
These results are presented in Section  \ref{sec:mainresults} and summarized in Figure \ref{fig:query-lang}.

In Section \ref{sec:extensions}, we further demonstrate the utility of our proof techniques by showing how they can be applied to other scenarios. 
  %
  First, we show that essentially the same reductions, modulo an additional hypothesis, can be 
  used to show hardness of Shapley value computation over databases without assumed (``"exogenous"'') facts
  (existing reductions make the crucial use of the presence of ``exogenous'' facts).
 Second,  a simple adaptation of our techniques allows us to 
 partially recapture existing results for 
"CQs with safe negation@self-join-free conjunctive queries with safe negations" \cite{reshefImpactNegationComplexity2020}. 
Thirdly, we show that computing the maximum Shapley value is just as hard as $\Shapley{}$ as  far as our reductions go.
Finally, we initiate the study of Shapley value of database constants (rather than facts) and employ our methods to show how this problem can be formally related to the analogous model counting problems. 

\section{Preliminaries}
\label{sec:prelim}


\AP
We fix disjoint infinite sets $\intro*\Const$, $\intro*\Var$ of ""constants"" and ""variables"", respectively. 
\AP
\changed{We write $A \intro*\dcup B$ to denote the union $A \cup B$ of two disjoint sets $A,B$.}
For any syntactic object $O$ ("eg" database, query), we will use $\intro*\vars(O)$ and $\intro*\const(O)$ 
to denote the sets of "variables" and "constants" contained in $O$, and let $\intro*\mterms(O)\defeq \vars(O) \cup \const(O)$ denote its set of ""terms"". 
\AP A ""(relational) schema"" 
is a finite set of relation symbols, each associated with a (positive) arity. 
\AP
A ""(relational) atom"" over a "schema" $\Sigma$ takes the form $R(\bar t)$ where $R$ is a ""relation name"" from $\Sigma$ of some arity $k$, and $\bar t \in (\Const \cup \Var)^k$.\
\AP
A ""fact"" is an "atom" which contains only "constants".
\AP
A ""database"" $\D$ over a "schema" $\Sigma$ is a finite set of "facts" over $\Sigma$, and we call it a 
""graph database"" if $\Sigma$ is a binary schema ("ie",\ 
consisting 
of binary relations only). 


\AP
For sets of "atoms" $S_1,S_2$, a ""homomorphism"" from $S_1$ to $S_2$ is a function 
$h: \mterms(S_1) \rightarrow  \mterms(S_2)$ 
such that $R(h(t_1), \dotsc, h(t_k)) \in S_2$ for every 
$R(t_1, \dotsc, t_k) \in S_1$. We write $S_1 \intro*\homto S_2$ to indicate the existence of such $h$. 
\AP
If further we have that $h(c) = c$ for every $c \in C \subseteq \Const$, 
we call $h$ a ""$C$-homomorphism"" and write $S_1 \intro*\Chomto S_2$.

\AP
We say that a set of  "atoms" $S$ is ""connected""  if so is its associated undirected 
""incidence
graph"" 
$\intro*\incgraph S$, whose nodes are $S \cup \mterms(S)$ and whose edges are $\{\, \{t,\alpha\} \mid t \in \mterms(\alpha), \alpha \in S\}$.

\paragraph{Query languages}\AP 
To simplify the presentation, we shall only consider Boolean queries, which we will often refer to simply as ``queries''. 
\AP
A ""Boolean query"" $q$ specifies a true-or-false property of databases. 
\AP
We write $\D \models q$ to denote that database $\D$ satisfies query~$q$, in which case we may also call $\D$ a ""support"" for $q$. 

\AP
A query $q$ is ""closed under homomorphisms"", or \reintro{hom-closed}, if for every pair $\D,\D'$ of databases 
such that $\D \homto \D'$ and $\D \models q$, we have $\D' \models q$ as well. We define similarly ""closure under $C$-homomorphisms"", or \reintro{$C$-hom-closed}, to capture queries with constants. 
For ($C$-)hom-closed queries, it is natural to consider the smallest databases that satisfy the query. 
\AP Formally, we call $\D$ a ""minimal support"" for $q$
if $\D \models q$ and $\D' \not \models q$ for every $\D' \subsetneq \D$. 
\AP
We say a "fact" $\alpha$ is ""relevant to"" a query $q$ if it appears in some "minimal support" of $q$.
\AP A query $q$ is \reintro{connected} if every "minimal support" for $q$ is "connected".

We will consider queries formulated in (fragments of) standard query languages. 
\AP
A (Boolean) ""conjunctive query"" (henceforth just \reintro{CQ}) over $\Sigma$
is a conjunction of "atoms" over $\Sigma$, all of whose variables are existentially quantified. 
We use $\intro*\atoms(q)$ to denote the set of "atoms" of a "CQ"~$q$. 
\AP
A database $\D$ ""satisfies"" a "CQ" $q$ if  $\atoms(q) \Chomto \D$ where $C=\const(q)$.
\AP We call a CQ $q$ ""self-join-free"" if no two "atoms" of $q$ have the same "relation name", and we use ""sjf-CQ"" for the class of all "self-join-free" "CQs".
\AP A ""union of conjunctive queries"" (\reintro{UCQ}) is a finite disjunction of CQs, with  $\D \models q_1 \vee \dotsb \vee q_\ell$
 iff 
$\D \models q_i$ for some $1 \leq i \leq \ell$. 
\AP ""Infinitary unions of CQs"" (\reintro{UCQ$^\infty$}) are defined similarly but the disjunction may be over a countably infinite set of CQs.

\AP
A ""graph query"" is any query over a binary "schema".
\AP 
A ""path atom"" over a binary "schema"~$\Sigma$ takes the form $L(t,t')$, with $t,t' \in \Const \cup \Var$
and $L$ a regular expression over the alphabet~$\Sigma$.
A (Boolean) ""regular path query"", or \reintro{RPQ}, over $\Sigma$ is  
a "path atom" 
$L(a,b)$ over $\Sigma$, with $a,b \in \Const$. Satisfaction of RPQs is 
as follows:
$\D \models L(a,b)$
 iff there exists a word $w = R_1 \ldots R_\ell$ in the language of $L$
 and a sequence $c_0, c_1, \ldots, c_{\ell - 1}, c_\ell$ of constants 
 such that $c_0=a$, $c_\ell=b$, and $R_i(c_{i-1}, c_i) \in \D$ for every $1 \leq i \leq \ell$. 
\AP
A (Boolean) ""conjunctive regular path query"" (\reintro{CRPQ}) over 
$\Sigma$
is an existentially quantified conjunction of "path atoms" over $\Sigma$. 
A "graph database" $\D$ satisfies a CRPQ $q$ if there exists a mapping $h:\mterms(q) \rightarrow \const(\D)$
such that $h(c)=c$ for every $c \in \const(q)$ and $\D \models L(h(t),h(t'))$  for every "path atom" $L(t,t')$ of $q$. 
We shall also consider 
""unions of conjunctive regular path queries"" (\reintro{UCRPQs}), defined as expected. 

\paragraph{Reductions}
\AP
We will use the standard notion of polynomial-time Turing-reductions between numeric problems $\Psi_1,\Psi_2$. Concretely, we write $\Psi_1 \intro*\polyrx \Psi_2$, and say that there is a ""polynomial-time reduction"" from $\Psi_1$ to $\Psi_2$, if there exists a polynomial-time algorithm for computing $\Psi_1$ using unit-cost calls to $\Psi_2$.
If $\Psi_1 \polyrx \Psi_2$ and $\Psi_2 \polyrx \Psi_1$, we write $\Psi_1 \intro*\polyeq \Psi_2$.

 \section{Studied Problems}\label{sec:problems}

In this section, we formally introduce Shapley value computation, which is the focus of the paper, as well as related model counting and probabilistic querying problems. 

\AP
For the definition of Shapley values, as well as the task of generalized model counting, 
we shall require that the input database $\D$ has been partitioned into two sets 
$\Dn$ and $\Dx$ of ""endogenous"" and ""exogenous"" "facts" respectively. 
\AP
The notation $\D = (\intro*\Dn, \intro*\Dx)$ 
will be used to 
refer to such ""partitioned databases"". All databases will henceforth be assumed to 
be partitioned unless otherwise noted. 

\subsection{Shapley Value Computation}
The "Shapley value" has been introduced by Lloyd Shapley in 1952  \cite{shapley:book1952} to distribute the wealth of a "cooperative game", and it follows three axioms of good behaviour. 
Intuitively, these axioms state that the value of two isomorphic "games" is the same, that the sum of all players values equals the total wealth of the "game" and that the value of a sum of two "games" equals the sum of the individual games' values. 
Remarkably, there is only one function satisfying these axioms: the Shapley value.

\AP
A ""cooperative game"" is given by a finite set of players $P$ and a "wealth function" $\scorefun[] : \partsof P \to \IQ$ that represents the wealth generated by every possible coalition of players, such that $\scorefun[](\emptyset)=0$.
\AP
To define the "Shapley value", assume 
players arrive one by one in a random order, and each earns the difference between the coalition's wealth before she arrives and after. 
The ""Shapley value"" of a player $p\in P$ is her expected earnings in this scenario, which can be expressed as: 
\AP
\begin{equation}\label{def_sh}
	\intro*\Sh(P,\scorefun[],p) \defeq \frac{1}{|P|!}\sum_{\sigma\in \Sym(P)} \left(\scorefun[](\sigma_{<p} \cup \{p\}) - \scorefun[](\sigma_{<p})\right)
\end{equation}
\AP
where $\intro*\Sym(P)$ denotes the set of permutations of $P$ and $\sigma_{<p}$ the set of players that appear strictly before $p$ in the permutation $\sigma$. The following equivalent formula -- 
more convenient for our proofs -- can be obtained by grouping together 
 the $\sigma$ having the same $\sigma_{<p}$.

\begin{equation}\label{formul_sh}
	\reintro*\Sh(P, \scorefun[], p) = \sum_{B\subseteq P \setminus \set p } \!\!\!  \frac{|B|!(|P| - |B| -1)!}{|P|!}\left(\scorefun[](B \cup \{p\}) - \scorefun[](B)\right)
\end{equation}

\AP
The "cooperative games" that will interest us 
are those associated with a "Boolean query" $q$ and a "partitioned database" $\D = (\Dn, \Dx)$, where the set of players 
is the set $\Dn$ of "endogenous" "facts", and the ""wealth function"" 
$\intro*\scorefun$ assigns $v_S-v_{\subexo}$ to each subset $S \subseteq \Dn$, where $v_S=1$ (resp.\ $v_{\subexo}=1$) if $S \cup \Dx \models q$ (resp.\ if $\Dx \models q$), and $0$ otherwise. 

The main focus of this paper is to understand the complexity of computing the "Shapley value" of facts in such "games". 
For a fixed "Boolean query" $q$, this is the task of computing, for a given input "database" $\D$ and "fact" $\alpha \in \D_n$, the value  
$\Sh(\Dn, \scorefun, \alpha)$. 
\AP
We will use $\intro*\Shapley q$ to refer to this computational task, and $\intro*\Shapleyn q$ to the task when restricted to "partitioned databases" with only "endogenous" "facts" ("ie", of the form $\D = (\Dn, \Dx)$ with $\Dx = \emptyset$).

\begin{remark}[Non-Boolean case] Shapley value computation could be defined for non-Boolean queries by asking 
for the contribution of a fact to a given tuple being a query answer. 
This variant straightforwardly reduces to our Boolean version, by substituting the query's free variables with constants
from the answer tuple. Observe however that this reduction yields Boolean queries \emph{with constants}, thus motivating
the interest of obtaining results for queries with constants. \end{remark}

\subsection{Model Counting}
We now introduce several versions of the model counting problem.
\AP
For any "Boolean query" $q$, the ""generalized model counting problem"" on $q$, or $\intro*\GMC q$ problem, is the task of computing, for a given input "database" $\D = (\Dn, \Dx)$,
the number of subsets $S \subseteq \Dn$ such that $S \dcup \Dx \models q$ (such subsets will be called ""generalized supports for $q$ in  $\D$""). 
\AP
On the other hand, the ""fixed-size GMC problem"", or $\intro*\FGMC q$, is the task of computing, for an \siderev{Improved definition of $\FGMC{}$}\changed{input number} $n \in \Nat$ and "database" $\D = (\Dn, \Dx)$, the number of subsets $S \subseteq \Dn$ of size exactly $n$ such that $S \dcup \Dx \models q$.\footnote{\changed{We emphasize that in $\FGMC q$ the size $n$ is part of the input (the name of ``fixed-size'', coined in \cite{karaShapleyValueModel2023} in a slightly different context, may lead to confusion).}}
\AP
The ""model counting problem"" $\intro*\MC q$ and ""fixed-size model counting problem"" $\intro*\FMC q$ correspond to the previous problems when restricted so that the input "database" contains no "exogenous" "facts" ("ie", $\Dx =\emptyset$). 
%
%

\subsection{Probabilistic Query Evaluation}
\AP
A ""tuple-independent probabilistic database"" is a pair $\D = (S,\pi)$ where $S$ is a finite set of "facts" and $\pi: S \to (0,1]$ is a probability assignment.
\AP
We ""associate@@pdb"" to every such database a "partitioned database" with $\Dx{}\defeq \{\alpha\in S \mid \pi(\alpha)=1\}$. 
For a "Boolean query"~$q$, $\intro*\Proba(\D \models q)$ is the probability of $q$ being true, where each "fact" $\alpha$ has independent probability $\pi(\alpha)$ of being in the "database". 
\AP
For a fixed "Boolean query"~$q$, the problem of computing, for a given a "tuple-independent probabilistic database" $\D$,  the probability $\Proba(\D \models q)$ is known as the ""probabilistic query evaluation"" problem, or $\intro*\PQE q$.
\AP
We consider also restrictions of $\PQE q$ where the input "probabilistic database" $(S,\pi)$ is such that the image $Im(\pi)$ has certain characteristics.
\begin{enumerate}
	\item If $Im(\pi) = \set{\frac 1 2}$ we obtain $\intro*\PQEPhalf q$.
	\item If $Im(\pi) \subseteq \set{\frac 1 2, 1}$ we obtain $\intro*\PQEPhalfOne q$.
	\item If $Im(\pi) = \set p$ for some $p \in (0,1]$, we obtain $\intro*\SPQE q$: the ""single probability query evaluation"" problem.
	\item If $Im(\pi)=\set{p,1}$ for some $p \in (0,1]$, we obtain $\intro*\SPPQE q$: the ""single proper probability query evaluation"" problem.
\end{enumerate}

\subsection{Prior Work and Relations Between Problems}
\AP
$\PQE{}$ was first introduced in \cite{dalviEfficientQueryEvaluation2004}, and the most important result about it is the "FP"/"shP"-hard dichotomy for "UCQs", established in \cite{dalviDichotomyProbabilisticInference2012}. The "UCQs" for which $\PQE{}$ is tractable are known as ``""safe""''.
$\MC{}$, $\GMC{}$ and their probabilistic counterparts have been considered in \cite{kenigDichotomyGeneralizedModel2021} where the same dichotomy has been extended to the latter:
\begin{proposition}[{\cite[Theorem~4.21]{dalviDichotomyProbabilisticInference2012} and \cite[Theorem~2.2]{kenigDichotomyGeneralizedModel2021}}] Let $q$ be a "Boolean" "UCQ". If $q$ is "safe", then both $\PQE q$ and $\GMC q$ are in "FP", otherwise both are "shP"-hard.
\end{proposition}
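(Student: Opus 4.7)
The plan is to split the argument into the tractable direction (safe queries) and the hardness direction (non-safe queries), treating $\PQE{q}$ and $\GMC{q}$ in parallel since they admit essentially the same structural analysis.

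For the tractable direction, I would exhibit a polynomial-time algorithm that operates by structural recursion on the query. The Dalvi--Suciu framework isolates a small set of ``safe'' inference rules---independent union, independent join, independent projection, and inclusion--exclusion---each of which reduces the computation of $\Proba(\D \models q)$ to computations on strictly smaller subqueries. When $q$ is safe, these rules can be applied exhaustively, producing an evaluation of size polynomial in $|\D|$. For $\GMC{q}$, the same recursion applies but with each probability rule replaced by its counting analogue; for instance, independent joins become products of counts, while independent unions become convolutions of size-indexed counting vectors, both of which remain computable in polynomial time. Verifying correctness and polynomial complexity of each rule is a routine case analysis.

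For the hardness direction, I would reduce from $\#$PP2CNF (counting satisfying assignments of a positive partitioned 2-CNF formula), which is classically \shP{}-hard. The structural heart of the argument, carried out in~\cite{dalviDichotomyProbabilisticInference2012}, is to show that every non-safe UCQ contains an unavoidable ``non-hierarchical'' sub-pattern between two distinguished variables, and then to translate this obstruction into a bipartite gadget encoding the input formula: one side of the bipartition indexes the ``left'' variables of the 2-CNF, the other side indexes the ``right'' variables, and the fact probabilities are chosen so that $\Proba(\D \models q)$ is a known polynomial in the target model count. For $\GMC{q}$, essentially the same encoding works, after the refinement of~\cite{kenigDichotomyGeneralizedModel2021}, which ensures that the number of satisfying sub-databases---rather than a weighted probability---still recovers the model count, typically by padding with exogenous or neutral facts so that the size parameter does not distort the reduction.

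The principal obstacle is the structural characterization behind the hardness direction: showing that every non-safe UCQ admits a universal obstruction of this kind and that the obstruction is sufficiently rich to support a uniform reduction across the whole UCQ class. This is the substantive core of the cited dichotomies, so I would appeal directly to the structural lemmas of~\cite{dalviDichotomyProbabilisticInference2012,kenigDichotomyGeneralizedModel2021} rather than attempt to re-derive them. Once the structural obstruction is in hand, both the tractable algorithm and the reduction skeleton are essentially mechanical, and the probabilistic and counting statements follow in tandem.
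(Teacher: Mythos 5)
The paper offers no proof of this proposition: it is imported verbatim from \cite{dalviDichotomyProbabilisticInference2012} and \cite{kenigDichotomyGeneralizedModel2021}, and your sketch, like the paper, ultimately defers the substantive structural lemmas to those sources, which is the right call. Your high-level account of the safe plan algorithm and of the bipartite hardness gadget is broadly faithful to the Dalvi--Suciu and Kenig--Suciu developments.

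Two small inaccuracies are worth flagging, though neither is a gap. First, for the tractable direction you over-engineer the $\GMC{q}$ case: $\GMC{q}$ is by definition a restriction of $\PQE{q}$ in which every endogenous fact gets probability $1/2$ and every exogenous fact probability $1$ (this is exactly the $\PQEPhalfOne{q} \polyeq \GMC{q}$ equivalence in \Cref{fig:reductions}), so tractability of $\GMC{q}$ is a one-line corollary of tractability of $\PQE{q}$ via $\GMC{q}(\D) = 2^{|\Dn|}\cdot\Proba(\D\models q)$; no separate counting-vector convolution algorithm is needed. Second, in your discussion of the Kenig--Suciu refinement you speak of ``the size parameter'' distorting the reduction, but $\GMC{}$ has no size parameter --- you are conflating it with $\FGMC{}$. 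The real content of \cite{kenigDichotomyGeneralizedModel2021} is that the Dalvi--Suciu hardness proof relies on interpolating over many distinct fact probabilities, while $\GMC{}$ only permits probabilities in $\{1/2, 1\}$, so the hardness argument must be reworked under that restriction; this is indeed non-trivial and correctly left to the citation, but it is not about any size constraint.
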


\AP
Another notable result is the hardness of $\MC{}$ for "hom-closed" "graph queries" which are ""unbounded"", 
that is, queries which are not equivalent to a "UCQ":
\begin{proposition}[{\cite[Theorem~1.3]{amarilliUniformReliabilityUnbounded2023a}}]
	 For every "unbounded" "hom-closed" "graph query" $q$, $\MC q$ is "shP"-hard.
\end{proposition}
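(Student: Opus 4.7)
The plan is to reduce from a known \textsf{\#P}-hard counting problem to $\MC q$, exploiting the fact that for a "hom-closed" query, $\MC q$ counts exactly the subsets $S \subseteq \D$ with $S \models q$, and that $S \models q$ iff some "minimal support" of $q$ admits a "homomorphism" into $S$. The first step is to extract combinatorial structure from the "unbounded" assumption: since $q$ is not equivalent to any "UCQ", it cannot be witnessed by finitely many "minimal supports" (as a finite set of such supports, viewed as \textup{CQ}s, would constitute an equivalent "UCQ"). Thus $q$ must have "minimal supports" of unbounded ``complexity,'' and on a binary signature one should be able to isolate an infinite family of minimal supports $\{H_n\}_{n \geq 1}$ exhibiting a specific growing feature, such as arbitrarily long simple paths or cycles embedded inside them.

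The second step is to turn such a family of gadgets into a reduction. My target would be a well-known \textsf{\#P}-hard problem on graphs, for instance counting independent sets, counting matchings in bipartite graphs, or counting satisfying assignments of positive \textsc{2-SAT}-like instances. For each input graph (or formula) $G$, I would build a "graph database" $\D_G$ by taking a suitable gadget $H_n$ from the family above and ``attaching'' one copy per edge (or per variable/clause) of $G$ in such a way that the "fact"-subsets $S \subseteq \D_G$ satisfying $q$ — equivalently, those admitting a "homomorphism" from some "minimal support" — are in polynomial-time-computable bijection with the objects being counted in $G$. Ideally the gadget is chosen rigid enough that only the intended homomorphisms survive, preventing ``shortcut'' supports of $q$ inside $\D_G$ from inflating the count.

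To close the reduction, I would verify a precise counting identity of the form
\[
\MC{q}(\D_G) \;=\; f\bigl(\text{count in } G\bigr),
\]
where $f$ is an affine or otherwise invertible polynomial-time function, so that a call to $\MC q$ allows us to recover the target count. Hom-closure is used here both to justify that $S \models q$ reduces to checking for homomorphic images of minimal supports inside $S$, and to argue monotonicity that makes the bijective analysis of satisfying subsets tractable.

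The main obstacle is unquestionably the structural extraction step: turning the abstract statement ``$q$ is not a \textup{UCQ}'' into a usable infinite family of gadgets over a binary signature. A naive approach — just picking larger and larger minimal supports — does not suffice, since they could be structurally uninformative (e.g.\ dense cliques) or accidentally align with satisfying subsets in the wrong way. The core difficulty, and what I expect is the main technical contribution of the cited result, is proving a dichotomy-like lemma for "hom-closed" "graph queries" that either yields gadgets with long paths/cycles, or forces $q$ to be bounded. Once such a lemma is established, the reduction from a counting problem like $\#$Matchings becomes standard.
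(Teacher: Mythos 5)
The paper does not prove this proposition itself: it is cited from Amarilli (ICDT~2023, Theorem~1.3), so there is no internal proof to compare your sketch against; it has to be judged on its own and against the cited external work.

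Your high-level plan --- extract from unboundedness an infinite family of minimal supports with usable structure, plant them as gadgets in a database built from an instance of a $\textsf{\#P}$-hard problem, and establish an invertible counting identity --- does match the general shape of the cited argument, and the observation that a hom-closed query with finitely many minimal supports would be equivalent to a UCQ (hence bounded) is correct and is the necessary starting point. But this is an outline, not a proof: essentially all the technical weight sits in the step you yourself flag as the ``main obstacle.'' Nothing you write actually establishes the existence of a family of minimal supports with controllable geometry (long enough paths/cycles, enough rigidity, no unwanted small supports appearing once the gadgets are glued into $\D_G$), and the naive idea of taking ever-larger minimal supports fails for exactly the reasons you list. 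That structural extraction lemma is the core of Amarilli's paper and is highly non-trivial --- one has to rule out, for an arbitrary abstract hom-closed query known only through its minimal models, that shortcut homomorphisms into the constructed database create spurious supports that destroy the counting identity. Absent such a lemma, you have no way to guarantee that $\MC{q}(\D_G)$ is any invertible function of the target count at all. In short, you have correctly identified where the proof must go and where it gets hard, but the gap you name is precisely the content of the theorem, and you have not closed it.
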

This result immediately yields a "FP"/"shP" dichotomy for $\GMC{}$ over "hom-closed" "graph queries".
To the best of our knowledge, $\FGMC{}$, $\FMC{}$, $\SPPQE{}$, and $\SPQE{}$ have not been explicitly introduced before in the context of databases. However, a notion of fixed-size model counting over Boolean functions has been used in \cite{karaShapleyValueModel2023},
 which is analogous to $\FGMC{}$.\footnote{The reason why it is analogous to $\FGMC{}$ rather than to $\FMC{}$ is because the "exogenous" "facts" get abstracted out in the transformation between databases and Boolean functions.}

\AP
$\Shapley{}$ was first introduced in \cite{livshitsShapleyValueTuples2021}, 
where a "FP"/"shP"-hard dichotomy was established for "sjf-CQs".
 Interestingly, the latter result identifies the same tractable queries as for $\PQE{}$, namely, the "safe" "sjf-CQs", 
 which admit a simple syntactic characterization as the \reintro{hierarchical}\footnote{We recall that a "CQ" $q$ is \emph{not} ""hierarchical"" if{f} there exist $\alpha_1, \alpha_2, \alpha_3 \in \atoms(q)$ such that 
$(\vars(\alpha_1)\cap \vars(\alpha_2))\not \subseteq \vars(\alpha_3)$ and $(\vars(\alpha_3)\cap \vars(\alpha_2))\not \subseteq\vars(\alpha_1)$.} "sjf-CQs". 
Recent work has clarified the relation between the two dichotomies
by reducing $\Shapley{}$ to $\PQE{}$ \cite{deutchComputingShapleyValue2022a}
and reproving the hardness of $\Shapley{}$ \cite{karaShapleyValueModel2023} 
by reduction from the same model counting problem for Boolean functions  
that had been used to show hardness of $\PQE{}$ for non-"hierarchical" "sjf-CQs" \cite{dalviEfficientQueryEvaluation2004}. 

The problems presented in this section are closely related. \Cref{fig:reductions} summarizes the reductions between these problems. 


\begin{propositionrep}\label{prop:reductions-problems}
	For any "Boolean query" $q$, all the "polynomial-time reductions" denoted by the solid arrows of \Cref{fig:reductions} hold. In particular:
	\begin{enumerate}
		\item\label{cas1.1} $\FGMC q \polyeq \SPPQE q$;
		\item\label{cas1.2} $\FMC q \polyeq \SPQE q$;
		\item\label{cas1.3} $\Shapley q \polyrx \FGMC q$.
	\end{enumerate}
	Moreover, the reductions for (\ref{cas1.1}) and (\ref{cas1.2}) only use the oracle on inputs with the same "associated partitioned database".
\end{propositionrep}
\begin{proofsketch}
	The fact that $\Shapley q \polyrx \PQE q$ is known from \cite[Proposition~3.1]{deutchComputingShapleyValue2022a}, and the proofs of \cref{cas1.1,cas1.2,cas1.3} are adaptations of this result
	(details in the appendix). 
	The proof of $\Shapleyn{q} \polyrx \FMC q$ is relegated to \Cref{sec:endogenous}.
The remaining reductions are trivial.
\end{proofsketch}
\begin{appendixproof}
	Since the other reductions were justified in the main body, we shall only prove here the three remaining reductions stated above, by adapting the proof of \cite[Proposition~3.1]{deutchComputingShapleyValue2022a}. For clarity we will split them into several lemmas.

\begin{claim}
	For every Boolean query $q$, $\Shapley q \polyrx \FGMC q$.
\end{claim}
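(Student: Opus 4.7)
The plan is to expand the Shapley value using the equivalent formula (\ref{formul_sh}), grouping terms by the cardinality of the coalition, and then reduce each resulting subsum to a single $\FGMC q$ oracle call.

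First, I would fix an input "database" $\D = (\Dn, \Dx)$ and a "fact" $\alpha \in \Dn$ and set $n \defeq |\Dn|$. Writing $\scorefun$ explicitly and noting that the constant $v_{\subexo}$ cancels inside the difference, I would rewrite (\ref{formul_sh}) as
\[
\Sh(\Dn, \scorefun, \alpha) \;=\; \sum_{k=0}^{n-1} \frac{k!(n-k-1)!}{n!} \Bigl( N_k^+ - N_k^- \Bigr),
\]
where $N_k^+ \defeq \#\{ B \subseteq \Dn\setminus\{\alpha\} : |B|=k,\ B\cup\{\alpha\}\cup\Dx \models q\}$ and $N_k^- \defeq \#\{ B \subseteq \Dn\setminus\{\alpha\} : |B|=k,\ B\cup\Dx \models q\}$. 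The key observation is that the coefficient $\frac{k!(n-k-1)!}{n!}$ depends only on $k$, which is precisely what allows such a grouping.

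Next, I would express $N_k^+$ and $N_k^-$ as $\FGMC q$ queries. For $N_k^-$, consider the partitioned database $\D^- \defeq (\Dn\setminus\{\alpha\},\, \Dx)$: then $N_k^-$ is exactly $\FGMC q$ on input $(\D^-, k)$. For $N_k^+$, consider $\D^+ \defeq (\Dn\setminus\{\alpha\},\, \Dx\cup\{\alpha\})$, and observe that $B\cup\Dx\cup\{\alpha\}\models q$ is the same as $B \dcup (\Dx\cup\{\alpha\}) \models q$; hence $N_k^+$ is $\FGMC q$ on input $(\D^+, k)$. Thus $2n$ oracle calls suffice to collect all the $N_k^\pm$.

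Finally, I would combine these values using the closed-form coefficients, which are rationals of polynomial bit-length in $n$, and whose sum can be computed exactly in polynomial time with standard rational arithmetic. This yields the desired polynomial-time Turing reduction $\Shapley q \polyrx \FGMC q$. I do not expect any serious obstacle here: the only subtlety is bookkeeping the role of $\alpha$ (moving it to the exogenous side for $N_k^+$ and deleting it for $N_k^-$), and verifying that the $v_{\subexo}$ term indeed cancels so that the reduction does not need a separate evaluation of $q$ on $\Dx$.
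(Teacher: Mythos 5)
Your proof takes essentially the same route as the paper's: expand the formula (\ref{formul_sh}), group the summands $B$ by size $k$, realize each group as a difference of two $\FGMC q$ oracle calls on $(\Dn\setminus\{\alpha\},\Dx\cup\{\alpha\})$ and $(\Dn\setminus\{\alpha\},\Dx)$, and combine via polynomial arithmetic. The argument is correct and matches the paper's proof.
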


\begin{proof}
	Recall the formula for $\Sh$:
	
	\[ 
	\Sh(\Dn ,v,\mu) = \sum_{B\subseteq \Dn \setminus\{\mu\}} \frac{|B|!(|\Dn |-|B|-1)!}{|\Dn |!} (\scorefun(B\cup\{\mu\}) - \scorefun(B)) 
	\]
\noindent	Now denote by $n$ the size of $|\Dn|$ and by $\mathrm{FGMC}_j(q)(\Dn,\Dx)$ the answer of $\FGMC q$ for the "partitioned database" $(\Dn,\Dx)$. Also abbreviate $j!(|\Dn |-j-1)!/|\Dn |!$ as $C_j$. Grouping the $B$s by their sizes, the formula becomes as follows (in the first $\mathrm{FGMC}$ term $\mu$ becomes "exogenous" and in the second it is removed).
	
	\[\Sh(\Dn ,v,\mu) = \sum_{j=0}^{n} C_j \left[\mathrm{FGMC}_j(q)(\Dn \setminus\{\mu\},\Dx \cup\{\mu\})
	-\mathrm{FGMC}_j(q)(\Dn \setminus\{\mu\},\Dx )\right]\]
	
%
\noindent	Assuming an oracle to $\FGMC q$, one can thus obtain $\Sh(\Dn ,v,\mu)$ by straightforward arithmetic computations.
\end{proof}

\begin{claim}\label{lem:sppqe}
	For every Boolean query $q$, $\FGMC q \polyeq \SPPQE q$. Moreover, the reductions preserve the underlying "partitioned database".
\end{claim}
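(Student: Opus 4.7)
The plan is to show both directions by the same underlying idea: treating the probability $\Proba(\D \models q)$ of a two-valued tuple-independent database as a polynomial in the non-trivial probability $p$, whose coefficients are exactly the $\FGMC q$ values layer by layer. This will make both reductions preserve the associated "partitioned database".

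First I will handle $\SPPQE q \polyrx \FGMC q$. Given an input "probabilistic database" $(S,\pi)$ with $Im(\pi)\subseteq\{p,1\}$, its "associated partitioned database" is $(\Dn,\Dx)$ with $\Dx=\pi^{-1}(1)$ and $\Dn=\pi^{-1}(p)$. By independence,
\[
\Proba(\D\models q) \;=\; \sum_{S\subseteq \Dn} p^{|S|}(1-p)^{|\Dn|-|S|}\,[S\dcup \Dx \models q]
\;=\; \sum_{k=0}^{|\Dn|} p^k(1-p)^{|\Dn|-k}\,N_k,
\]
where $N_k=\FGMC q(\Dn,\Dx,k)$. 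Thus one computes the answer in polynomial time by querying the oracle once for each $k\in\{0,\ldots,|\Dn|\}$, all on the same partitioned database $(\Dn,\Dx)$.

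Next, for $\FGMC q \polyrx \SPPQE q$, the key observation is that the right-hand side above is a polynomial of degree at most $n\defeq|\Dn|$ in $p$ whose unknowns are exactly the $N_k$ we want to compute. Concretely, I pick $n+1$ distinct values $p_0,\ldots,p_n\in(0,1)$ (any rationals of polynomial bit-size suffice, e.g.\ $p_i=1/(i+2)$), form for each $i$ the probabilistic database $(\Dn\dcup\Dx,\pi_i)$ with $\pi_i(\alpha)=p_i$ for $\alpha\in\Dn$ and $\pi_i(\alpha)=1$ for $\alpha\in\Dx$, and query $\SPPQE q$ to obtain $r_i\defeq \Proba(\D\models q)$ under $\pi_i$. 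This yields the linear system
\[
\sum_{k=0}^n p_i^k(1-p_i)^{n-k}\,N_k \;=\; r_i, \qquad i=0,\ldots,n.
\]
All oracle calls use exactly the partitioned database $(\Dn,\Dx)$, so the preservation property is immediate.

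The only thing left to verify is that this system is solvable in polynomial time, which I expect to be the main (but still mild) obstacle. Dividing the $i$-th equation by $(1-p_i)^n\neq 0$ turns the coefficient matrix into $\bigl[x_i^k\bigr]_{i,k}$ with $x_i\defeq p_i/(1-p_i)$. Since the map $p\mapsto p/(1-p)$ is injective on $(0,1)$, the $x_i$ are pairwise distinct, so this is a Vandermonde matrix of polynomial bit-size; it is invertible and the $N_k$ are recovered by polynomial-time Gaussian elimination (or Lagrange interpolation). This completes the equivalence $\FGMC q \polyeq \SPPQE q$ and, by inspection, both reductions leave the associated partitioned database unchanged.
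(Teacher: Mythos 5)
Your proposal is correct and takes essentially the same route as the paper: in both directions it expresses $\Proba(\D\models q)$ as a polynomial in $p$ with the $\FGMC q$ counts as coefficients, and for the harder direction your change of variable $x_i = p_i/(1-p_i)$ is precisely the paper's substitution $z = p/(1-p)$, yielding the same Vandermonde system. The preservation of the underlying partitioned database is established in the same way.
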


\begin{proof}
	We keep the same notations as in the previous proof.\\
	
\noindent	$\SPPQE q \polyrx \FGMC q$:
	Given a probabilistic database $\D$ where all facts have probabilities in $\{p;1\}$, denote by $(\Dn,\Dx)$ its underlying "partitioned database". Recall that $\Dx\defeq\{f\in\D|\pi(f)=1\}$ and $\Dn\defeq\D\setminus\Dn$. By denoting $z\defeq \frac{p}{1-p}\in\IQ_+$, the proper probability can be written $\frac{z}{1+z}$. Then the following relation can be obtained without much difficulty:
	
	\[
	(1+z)^{n}\Pr(\D\models q) = \sum_{j=0}^{n} z^j \mathrm{FGMC}_j(q)(\Dn ,\Dx )
	\]
	This means that $\Pr(\D\models q)$ can be computed in polynomial time with access to an oracle to $\FGMC q$, and only making oracle calls over the same underlying "partitioned database".\\
	
\noindent	$\FGMC q \polyrx \SPPQE q$:
	We wish to compute $\mathrm{FGMC}_j(q)(\Dn ,\Dx )$ for some input database $\D=\Dn \dcup \Dx$. Denote $n\defeq|\Dn |$ and consider for all $z\in\IQ_+$ the probabilistic database $\D_z$ whose underlying database is $(\Dn,\Dx)$ and where every fact in $\Dn$ has probability $\frac{z}{1+z}$. Then the following relation can be obtained as before:
	
	\[
	(1+z)^{n}\Pr(\D_z\models q) = \sum_{j=0}^{n} z^j \mathrm{FGMC}_j(q)(\Dn ,\Dx )
	\]
	
	From there one can obtain a system of linear equations by using the $\SPPQE q$ oracle on $\D_z$ for $n+1$ distinct values of $z$. Since the underlying matrix is a Vandermonde with distinct coefficients, it is invertible hence the system can be solved to obtain in polynomial time the different $\mathrm{FGMC}_j(q)(\Dn ,\Dx )$.
\end{proof}

\begin{claim}
	For every Boolean query $q$, $\FMC q \polyeq \SPQE q$. Moreover, the reductions preserve the underlying "partitioned database".
\end{claim}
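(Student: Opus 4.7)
The plan is to mirror the argument of the preceding claim ($\FGMC q \polyeq \SPPQE q$), using the fact that $\SPQE{}$ restricts $\SPPQE{}$ by forbidding facts of probability~$1$, which corresponds exactly to the restriction of $\FMC{}$ over $\FGMC{}$ (empty exogenous part). The two reductions from the previous proof should thus transfer almost verbatim, once we check that the constructed instances respect this new restriction.

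For $\SPQE q \polyrx \FMC q$: given an input $\D = (S,\pi)$ with $\pi$ constant equal to some $p$, we may assume $p \in (0,1)$ (the case $p=1$ is trivial as the probability collapses to $0$ or $1$). The associated partitioned database is then $(S,\emptyset)$. Setting $z \defeq p/(1-p)$, the same grouping-by-subset-size argument as in the previous claim yields
\[
(1+z)^{|S|}\,\Proba(\D \models q) \;=\; \sum_{j=0}^{|S|} z^j\, \mathrm{FMC}_j(q)(S),
\]
so $\Proba(\D \models q)$ is recovered with $|S|+1$ oracle calls, all on the same underlying partitioned database $(S,\emptyset)$.

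For $\FMC q \polyrx \SPQE q$: given an input database $\Dn$ of size $n$, I would choose $n+1$ distinct positive rationals $z_1, \ldots, z_{n+1}$ and form, for each $i$, the probabilistic database $\D_i$ assigning probability $z_i/(1+z_i) \in (0,1)$ to every fact of $\Dn$. Each $\D_i$ is a valid $\SPQE q$ input whose underlying partitioned database is still $(\Dn,\emptyset)$. Applying the oracle at each $z_i$ and plugging into the identity above yields an $(n+1)\times(n+1)$ linear system in the unknowns $\mathrm{FMC}_j(q)(\Dn)$, whose coefficient matrix is a Vandermonde in the distinct $z_i$'s; inverting it recovers all $\mathrm{FMC}_j(q)(\Dn)$ in polynomial time.

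No new obstacle arises with respect to $\FGMC q \polyeq \SPPQE q$: the arithmetic backbone is identical, and the only thing to verify is that we remain inside the input format of $\SPQE{}$, which is ensured by taking $z_i \in \mathbb{Q}_{>0}$ (equivalently $p_i \in (0,1)$) in both directions, so that no fact of probability~$1$ is ever introduced. Both reductions preserve the underlying partitioned database by construction, matching the ``Moreover'' clause.
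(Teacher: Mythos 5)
Your proof is correct and takes essentially the same approach as the paper, which simply notes that the reductions from the preceding claim ($\FGMC q \polyeq \SPPQE q$) preserve the underlying partitioned database, so that purely endogenous counting instances correspond exactly to deterministic-fact-free probabilistic instances. You merely unpack that citation by re-deriving the arithmetic, and you correctly flag that the $p=1$ case must be dispatched separately since it does not produce a valid $\FMC$ input.
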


\begin{proof}
	It is a direct consequence of \Cref{lem:sppqe}: since the underlying "partitioned database" is preserved, the absence of "exogenous" "facts" in the counting instance will be equivalent to the absence of deterministic tuples in the probabilistic ones. 
\end{proof}

\phantom\qedhere \end{appendixproof}
\section{Main Results}\label{ssec:pcnx}
\label{sec:mainresults}

Our main results are reductions from $\FGMC{q}$ to $\Shapley{q}$ for large classes of "$C$-hom-closed" queries, which shall enable us to obtain known and new dichotomy results for $\Shapley{}$ for several classes of queries. At a very high-level we show that:
\begin{enumerate}[(i)]
    \item $\FGMC{q} \polyrx  \Shapley{q}$ when $q$ is (almost) connected (\Cref{lem:pseudoconn}),
    \item $\FGMC{q} \polyrx  \Shapley{q \land q'}$ if $q$ is connected and the queries $q$ and $q'$ do not have an undesirable interaction (\Cref{lem:leak}),
    \item $\FGMC{q} \polyrx  \Shapley{q}$ whenever $q$ can be decomposed into two queries $q = q_1 \land q_2$ with no undesirable interaction (\Cref{lem:decomposable}).
\end{enumerate}
However, the concrete notions of what we mean by ``connected'' and ``undesirable interaction'' are somewhat technical and differ between the items.
The rationale for such definitions is to make the reduction statements as general as possible, which will become evident when we present the proofs in \Cref{sec:mainproofs}. The involved hypotheses are the price to pay for the ability to derive complexity results for several query languages at once.

In the remainder of the section, we make precise these hypotheses, provide formal statements of our key results, and present corollaries for concrete query classes (summarized in \Cref{fig:query-lang}). 

\subsection{Almost Connected Queries}
\AP
Consider a finite set 
$\intro*\aC \subseteq \Const$ 
and a "$\aC$-hom-closed" query $q$.  
\AP
An ""island support"" for $q$ is a support $S$ for $q$ such that for every set of facts $S'$ with $\const(S) \cap \const(S') \subseteq \aC$,
every "minimal support" for $q$ in $S \cup S'$ is contained either in $S$ or in $S'$. 
Intuitively, the "facts" of $S$ cannot participate in "minimal supports" outside $S$, except perhaps for those over $\aC$.
\AP
We say that $q$ is ""pseudo-connected"" if it has a "minimal support" $S$ that is an "island support" and such that $\const(S) \not\subseteq \aC$.


\begin{lemma}\label{lem:pseudoconn}
    For every "pseudo-connected" "$\aC$-hom-closed" query $q$, we have $\FGMC q \polyrx \Shapley q$.
\end{lemma}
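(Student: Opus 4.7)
My reduction takes an input $(\Dn, \Dx, n)$ of $\FGMC q$ with $m \defeq |\Dn|$ and, for each $k \in \{1, \ldots, m+1\}$, builds an extended partitioned database on which one $\Shapley q$ oracle call yields a specific linear combination of the unknown $\mathrm{FGMC}$ values for $(\Dn, \Dx)$. Fix a minimal "island support" $S$ for $q$ witnessing "pseudo-connectedness" (so $\const(S) \not\subseteq \aC$), and pick a fact $\alpha^* \in S$ using at least one constant $c^* \in \const(S) \setminus \aC$. For each $k$, form $k$ isomorphic copies $S_1, \ldots, S_k$ of $S$ by bijectively renaming $\const(S) \setminus \aC$ to pairwise disjoint fresh constants (disjoint also from $\const(\D)$); denote by $\alpha_i \in S_i$ the image of $\alpha^*$. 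The extended database is $\D^{(k)} = (\Dn^{(k)}, \Dx^{(k)})$ with $\Dn^{(k)} = \Dn \cup \{\alpha_1, \ldots, \alpha_k\}$ and $\Dx^{(k)} = \Dx \cup \bigcup_{i=1}^k (S_i \setminus \{\alpha_i\})$. Let $\sigma_k$ denote the Shapley value of $\alpha_1$ in $\D^{(k)}$ as returned by the oracle.

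\textbf{Analysis.} Since $q$ is "$\aC$-hom-closed" and the renaming is a $\aC$-isomorphism, each $S_i$ is also a minimal "island support". Because $F_i \defeq \const(S_i) \setminus \aC$ is fresh in each copy, $\const(S_i)$ meets the constants of every other part (other copies, $\Dn$, $\Dx$) only inside $\aC$, so applying the island property with $S = S_i$ and $S' = (\Dn^{(k)} \cup \Dx^{(k)}) \setminus S_i$ and iterating over $i$ shows that every minimal support for $q$ contained in $\Dn^{(k)} \cup \Dx^{(k)}$ is either one of the copies $S_i$ or is contained in $(\Dn \cup \Dx) \setminus \bigcup_i S_i$. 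Since $S_i \setminus \{\alpha_i\} \subseteq \Dx^{(k)}$, the inclusion $S_i \subseteq R$ is equivalent to $\alpha_i \in R$ for any $R \supseteq \Dx^{(k)}$. Writing $B \subseteq \Dn^{(k)} \setminus \{\alpha_1\}$ as $B_0 \dcup B_1$ with $B_0 \subseteq \Dn$ and $B_1 \subseteq \{\alpha_2, \ldots, \alpha_k\}$, we get $B \cup \{\alpha_1\} \cup \Dx^{(k)} \models q$ always (because $\alpha_1$ completes $S_1$), whereas $B \cup \Dx^{(k)} \models q$ iff $B_0 \cup \Dx \models q$ or $B_1 \neq \emptyset$. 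Therefore the marginal contribution $\scorefun(B \cup \{\alpha_1\}) - \scorefun(B)$ is $1$ exactly when $B_0 \cup \Dx \not\models q$ and $B_1 = \emptyset$; plugging into \eqref{formul_sh} and grouping by $j = |B_0|$ yields
\[
\sigma_k \;=\; \sum_{j=0}^{m} \bar{f}_j \cdot \frac{j!\,(m+k-j-1)!}{(m+k)!}, \qquad \text{where } \bar{f}_j \defeq \tbinom{m}{j} - \mathrm{FGMC}_j(q)(\Dn, \Dx).
\]

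\textbf{Inversion and main obstacle.} To recover $\mathrm{FGMC}_n(q)(\Dn, \Dx)$, I invert this $(m+1)\times(m+1)$ linear system. Multiplying the coefficient $\frac{j!(m+k-j-1)!}{(m+k)!}$ by $\prod_{i=0}^{m}(m+k-i)$ turns it into $j!\prod_{i=j+1}^{m}(m+k-i)$, a polynomial in $k$ of degree exactly $m-j$; these polynomials, for $j = 0, \ldots, m$, have pairwise distinct degrees and therefore form a basis of polynomials of degree $\leq m$, so evaluating at the $m+1$ distinct values $k = 1, \ldots, m+1$ produces a non-singular matrix (any vanishing linear combination would be a polynomial of degree $\leq m$ with $m+1$ roots, hence identically zero). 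Solving the system recovers all $f_j$'s in polynomial time, and we output $f_n$. The main technical obstacle is the careful application of the "island support" condition when $S$ contains facts using only $\aC$-constants, since such facts are physically shared across all copies $S_i$ and are not removed by taking $S' = (\Dn^{(k)} \cup \Dx^{(k)}) \setminus S_i$; the constant-disjointness hypothesis nevertheless holds because the fresh constants $F_i$ appear nowhere outside $S_i$, and "pseudo-connectedness" ($\const(S) \not\subseteq \aC$) is precisely what guarantees that the probes $\alpha_1, \ldots, \alpha_k$ are pairwise distinct facts, enabling the construction of genuinely independent gadgets.
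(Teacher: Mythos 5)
Your proof is correct and follows essentially the same gadget-and-solve strategy as the paper: duplicate the island minimal support to build an extended instance, probe the Shapley value of one distinguished fact across varying numbers of copies, and recover the fixed-size counts by inverting the resulting linear system. The one notable difference is a simplification: you duplicate the \emph{entire} support $S$ with fresh constants, and make all of $S_i \setminus \{\alpha_i\}$ exogenous, whereas the paper (which reuses the construction of its Section~5.1) duplicates only the slice $S^0$ of facts containing a chosen constant $a$ and keeps the residual slice $S^-$ shared and \emph{endogenous}. The paper states explicitly that this deviation from the "proof idea" is made to minimize the number of added exogenous facts, which only matters later for the purely-endogenous variant in its Section~6.1; for the lemma as stated your version is cleaner, since it eliminates the extra $S^-$-bookkeeping in the marginal-contribution case analysis (the paper's Lemma~5.4). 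Your polynomial-degree argument for non-singularity is also a perfectly good alternative to the paper's appeal to the invertibility of the matrix with entries $(i+j)!$. Two small presentational points: you refer at the end to recovering and outputting "$f_j$" and "$f_n$", which you never defined (you clearly mean $\binom{m}{j} - \bar f_j$), and the claim that each renamed copy $S_i$ is itself an island support deserves one line of justification (a $C$-isomorphism sending the fresh constants $F_i$ back to $\const(S)\setminus\aC$ and the latter off to fresh names transports any putative counterexample $S_i'$ to one against $S$).
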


The most notable class of "pseudo-connected" queries is the one of "connected" "hom-closed" queries 
. Observe that these are queries which are equivalent to infinitary unions of "connected" constant-free "CQs".

\begin{lemma}\label{lem:connected-is-pseudoconnected}
    Every "connected" "hom-closed" query is "pseudo-connected".
\end{lemma}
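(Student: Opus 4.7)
The plan is direct: unpack the definitions and exploit the fact that, when no constants are shared, the incidence graph of a minimal support must split into two disjoint pieces, one of which must be empty by connectedness.

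First, I would dispose of the degenerate case where $\emptyset \models q$, since then by \kl{hom-closure} every database satisfies $q$, so $\FGMC{q}$ is computable in polynomial time by a simple binomial formula and the lemma is vacuous. So assume $q$ is not trivially true. Then any \kl{minimal support} of $q$ is non-empty, and I would pick any such minimal support $S$. By assumption $q$ is \kl{connected}, so the incidence graph $\incgraph{S}$ is connected. Since $S$ contains at least one \kl{fact}, $\const(S) \neq \emptyset$, which matches the condition $\const(S) \not\subseteq \aC = \emptyset$ required by \kl{pseudo-connectedness}.

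The heart of the argument is then to verify that this $S$ is an \kl{island support}. Let $S'$ be an arbitrary set of facts with $\const(S) \cap \const(S') \subseteq \aC = \emptyset$, and let $M \subseteq S \cup S'$ be a \kl{minimal support} for $q$. Writing $M_1 = M \cap S$ and $M_2 = M \cap S'$, every term appearing in $M_1$ lies in $\const(S)$ and every term appearing in $M_2$ lies in $\const(S')$, and by hypothesis these two constant sets are disjoint (and since databases have no variables there is nothing else to worry about). Consequently, in the incidence graph $\incgraph{M}$ there is no edge connecting a vertex coming from $M_1$ to a vertex coming from $M_2$, so $\incgraph{M}$ decomposes as the disjoint union $\incgraph{M_1} \dcup \incgraph{M_2}$.

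Now invoking connectedness once more on the minimal support $M$ of the \kl{connected} query $q$, the graph $\incgraph{M}$ must be connected, which forces $M_1 = \emptyset$ or $M_2 = \emptyset$, i.e.\ $M \subseteq S'$ or $M \subseteq S$ as required. I do not foresee a real obstacle: the only point requiring mild care is the degenerate trivially-true query, which is handled separately. Everything else is a routine unfolding of definitions combined with the fact that disjoint constant sets prevent any incidence-graph edges between facts drawn from distinct databases.
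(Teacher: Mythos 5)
Your proof is correct and follows essentially the same line as the paper: pick any minimal support $S$, observe $\aC=\emptyset$ so $\const(S)\not\subseteq\aC$ (since facts have positive arity), and for any candidate minimal support $M\subseteq S\cup S'$ use the disjointness of $\const(S)$ and $\const(S')$ to split $\incgraph{M}$ into two pieces with no cross-edges, whence connectedness of $q$ forces $M\subseteq S$ or $M\subseteq S'$. One small remark: your dismissal of the $\emptyset\models q$ case by appealing to tractability of $\FGMC{q}$ doesn't literally address the lemma's statement, which is purely about pseudo-connectedness; the cleaner escape is that the trivially-true query has $\emptyset$ as its only minimal support, so it is not "connected" under the usual reading (the empty incidence graph is not connected), making the hypothesis fail --- a degenerate case the paper glosses over as well.
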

\begin{proof}
    Let $q$ be a "connected" "hom-closed" query; as far as the definition of "pseudo-connectedness" is concerned, this means  $\aC = \emptyset$. Take a "minimal support" $S$ of $q$. Let $S'$ be a set of facts with $\const(S)\cap\const(S')=\emptyset$, and $S^*$ a "minimal support" of $q$ in $S\cup S'$. It must be "connected" because $q$ is, and since there is no fact in $S\cap S'$ that contains both a constant in $\const(S)$ and one in $\const(S')$ (plus every relation has a positive arity), this implies it must be fully contained in either $S$ or $S'$. This makes $S$ is an "island support" for $q$ and $q$ "pseudo-connected".
\end{proof}

\begin{corollary}\label{cor:connected-homclosed}
    For every "connected" "hom-closed" query $q$, we have $\FGMC q \polyeq \Shapley q$.
\end{corollary}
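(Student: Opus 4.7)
The plan is to obtain \Cref{cor:connected-homclosed} as a straightforward composition of three results already available at this point in the paper, namely \Cref{lem:pseudoconn}, \Cref{lem:connected-is-pseudoconnected}, and item~(\ref{cas1.3}) of \Cref{prop:reductions-problems}. Since the statement asserts a polynomial-time equivalence $\FGMC q \polyeq \Shapley q$, I would establish the two reductions separately and then combine them.

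First, for the direction $\FGMC q \polyrx \Shapley q$: given a "connected" "hom-closed" query $q$, \Cref{lem:connected-is-pseudoconnected} immediately tells us that $q$ is "pseudo-connected" (with the set of distinguished constants being $\aC = \emptyset$, since $q$ is constant-free in the sense relevant to hom-closure). In particular, $q$ is a "pseudo-connected" "$\aC$-hom-closed" query, so the hypothesis of \Cref{lem:pseudoconn} is met and we obtain the desired reduction $\FGMC q \polyrx \Shapley q$.

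Second, for the reverse direction $\Shapley q \polyrx \FGMC q$: this holds for \emph{every} "Boolean query" by item~(\ref{cas1.3}) of \Cref{prop:reductions-problems}, so it applies in particular to our "connected" "hom-closed" $q$. Combining both reductions yields $\FGMC q \polyeq \Shapley q$, which is exactly the statement of the corollary.

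There is no genuine obstacle to overcome here: the corollary is essentially a bookkeeping consequence of the main lemma of the section together with the universal reduction $\Shapley{} \polyrx \FGMC{}$ already established in \Cref{sec:problems}. The only mild subtlety is to check that \Cref{lem:connected-is-pseudoconnected} supplies the exact form of hypothesis required by \Cref{lem:pseudoconn}, which it does since the proof of \Cref{lem:connected-is-pseudoconnected} exhibits an "island support" $S$ for $q$ with $\const(S) \not\subseteq \emptyset = \aC$ (any "minimal support" of a "connected" "hom-closed" query is non-empty and contains at least one constant, as "atoms" have positive arity).
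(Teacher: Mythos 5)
The proposal is correct and takes exactly the approach the paper intends: the corollary follows immediately from \Cref{lem:connected-is-pseudoconnected} and \Cref{lem:pseudoconn} for the direction $\FGMC q \polyrx \Shapley q$, and from item~(\ref{cas1.3}) of \Cref{prop:reductions-problems} for the converse.
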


As a consequence of \Cref{cor:connected-homclosed}, we obtain three "FP"/"shP"\siderev{Corrected from -complete}\changed{-hard} dichotomies (also depicted in \Cref{fig:query-lang}) by leveraging the known "UCQ"  dichotomy for $\GMC{}$ and $\PQE{}$ \cite{dalviDichotomyProbabilisticInference2012,kenigDichotomyGeneralizedModel2021}, as well as the hardness for $\MC{}$ over "hom-closed" graph queries \cite{amarilliUniformReliabilityUnbounded2023a}:

\begin{corollary}\label{cor:conn-hom-closed}\hfill
    \begin{enumerate}
        \item For any "connected" "hom-closed" "UCQ" $q$,  $\Shapley q$ is either in "FP" or "shP"-hard. 
        \item For any "connected" "hom-closed" "graph query" $q$,  $\Shapley q$ is either in "FP" or "shP"-hard. 
        ---in particular, this holds for any "connected" "UCRPQ" without "constants".
    \end{enumerate}
\end{corollary}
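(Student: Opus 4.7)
The plan is to derive both dichotomies as corollaries of \Cref{cor:connected-homclosed}, which establishes $\FGMC{q} \polyeq \Shapley{q}$ for every connected hom-closed query $q$, combined with the known dichotomies for the counting/probabilistic problems at the relevant level of generality.

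For item (1), fix a connected hom-closed UCQ $q$. If $q$ is safe, then the UCQ dichotomy (Proposition~\ref{prop:reductions-problems} together with the dichotomies from \cite{dalviDichotomyProbabilisticInference2012,kenigDichotomyGeneralizedModel2021} cited in the preliminaries) gives $\PQE{q} \in \FP$, and the known reduction $\Shapley{q} \polyrx \PQE{q}$ from \cite{deutchComputingShapleyValue2022a} then yields $\Shapley{q} \in \FP$. If $q$ is unsafe, the same UCQ dichotomy gives that $\GMC{q}$ is \shP-hard; composing the trivial reduction $\GMC{q} \polyrx \FGMC{q}$ (just sum $\FGMC{q}$ over the possible sizes $0,\dots,|\Dn|$) with $\FGMC{q} \polyrx \Shapley{q}$ from \Cref{cor:connected-homclosed} yields \shP-hardness of $\Shapley{q}$.

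For item (2), fix a connected hom-closed graph query $q$, and split on whether $q$ is unbounded. If $q$ is unbounded, then $\MC{q}$ is \shP-hard by the result of \cite{amarilliUniformReliabilityUnbounded2023a} quoted in the preliminaries; chaining the trivial reductions $\MC{q} \polyrx \FMC{q} \polyrx \FGMC{q}$ with $\FGMC{q} \polyrx \Shapley{q}$ from \Cref{cor:connected-homclosed} then gives \shP-hardness of $\Shapley{q}$. If $q$ is bounded, then $q$ is semantically equivalent to some UCQ. Because $q$ is hom-closed, it is in fact equivalent to the disjunction of the CQs encoding its (finitely many, by boundedness) minimal supports; connectedness of $q$ forces each of these minimal supports to be connected, so the equivalent UCQ is a connected hom-closed UCQ and item (1) applies. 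For the ``in particular'' clause, a connected UCRPQ without constants is a graph query (over a binary schema) and is hom-closed on graph databases (UCRPQs being preserved under homomorphisms), so it falls directly under the second item.

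The one place that deserves a careful argument is the bounded sub-case of (2), where we have to justify that a connected, bounded, hom-closed graph query admits a \emph{connected} UCQ representative so that item (1) can be invoked. This hinges on the observation that every hom-closed query is equivalent to the \UCQ$^\infty$ of its minimal supports, that boundedness collapses this to a finite UCQ, and that connectedness of $q$ transfers to each minimal support. Everything else in the proof is routine bookkeeping of reductions between the problems displayed in \Cref{fig:reductions}.
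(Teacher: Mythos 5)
Your proof is correct and follows essentially the same route as the paper: both rest on \Cref{cor:connected-homclosed} to get $\FGMC{q}\polyeq\Shapley{q}$ and then invoke the known dichotomies for $\PQE{}$/$\GMC{}$ (for UCQs) and $\MC{}$ (for unbounded hom-closed graph queries). The only cosmetic difference is that the paper handles the bounded sub-case of item~(2) by directly applying the safe/unsafe split to the equivalent UCQ (which does not require arguing that the UCQ representative is itself connected), whereas you route it through item~(1); that detour is sound but slightly over-engineered, since the $\PQE{}$/$\GMC{}$ dichotomy for UCQs does not need connectedness.
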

\begin{proof}
    $(1)$ Let $q$ be a "connected" "hom-closed" "UCQ". By \Cref{cor:connected-homclosed}, $\FGMC q \polyeq \Shapley q$. Now there can be two cases:
    \begin{itemize}
    	\item $q$ is safe, in which case $\FGMC q$ is tractable because $\PQE{q}$ is \cite{dalviDichotomyProbabilisticInference2012};
    	\item $q$ is unsafe, in which case $\FGMC q$ is intractable because $\GMC{q}$ is \cite{kenigDichotomyGeneralizedModel2021}.
    \end{itemize}

	\smallskip

	$(2)$ Let $q$ be a "connected" "hom-closed" "graph query" $q$. Again, by \Cref{cor:connected-homclosed}, $\FGMC q \polyeq \Shapley q$. Now there can be three cases:
	\begin{itemize}
		\item $q$ can be written as a safe "UCQ", in which case $\FGMC q$ is tractable because $\PQE{q}$ is \cite{dalviDichotomyProbabilisticInference2012};
		\item $q$ can be written as an unsafe "UCQ", in which case $\FGMC q$ is intractable because $\GMC{q}$ is \cite{kenigDichotomyGeneralizedModel2021};
		\item $q$ is unbounded, in which case $\FGMC q$ is intractable because $\MC{q}$ is \cite{amarilliUniformReliabilityUnbounded2023a}. \qedhere
	\end{itemize}
\end{proof}

Another class of "pseudo-connected" queries is the class of "RPQs" ("cf"~\Cref{lem:RPQ-is-pseudoconnected}). We can then obtain the dichotomy for "RPQ"s established in \cite{khalilComplexityShapleyValue2023} now as corollary of \Cref{lem:pseudoconn}.
\begin{toappendix}
    \begin{lemma}\label{lem:RPQ-is-pseudoconnected}
        Every (instantiated) "RPQ" 
        whose language contains a word of length at least 2 is "pseudo-connected".
    \end{lemma}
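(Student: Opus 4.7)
\textbf{Proof proposal for \Cref{lem:RPQ-is-pseudoconnected}.}
For an RPQ $q = L(a,b)$, the relevant constant set is $C = \{a,b\}$. My plan is to exhibit a concrete minimal support $S$ that simultaneously (i) has a constant outside $C$ and (ii) is an island support. Pick any word $w = R_1 R_2 \cdots R_\ell \in L$ with $\ell \geq 2$ (guaranteed by hypothesis), choose fresh constants $c_1, \ldots, c_{\ell-1} \notin \{a,b\}$, and let
\[
 S \defeq \{R_1(a,c_1),\ R_2(c_1,c_2),\ \ldots,\ R_\ell(c_{\ell-1},b)\}.
\]
Since $\ell \geq 2$ we have $c_1 \in \const(S) \setminus C$, so the side condition $\const(S) \not\subseteq C$ is immediate.

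The first routine step is to verify that $S$ is a minimal support. It is a support because the walk $a \to c_1 \to \cdots \to c_{\ell-1} \to b$ reads off the word $w \in L$. For minimality, note that each $c_i$ has exactly one outgoing and one incoming edge in $S$, so removing any fact disconnects the unique $a$-to-$b$ directed walk realizable in $S$, leaving no walk at all (because the fresh $c_i$'s have no other neighbors).

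The main step is to show $S$ is an island support. Fix any set $S'$ with $\const(S) \cap \const(S') \subseteq C = \{a,b\}$, so none of the $c_i$ appears in $S'$. Let $T \subseteq S \cup S'$ be a minimal support for $q$. As argued in the paper for minimal supports of RPQs, $T$ coincides with the set of facts used by some walk $W$ from $a$ to $b$ in $T$ whose label lies in $L$. I will argue that either $W$ uses no $S$-fact (giving $T \subseteq S'$) or $W$ forces $T = S$. For the latter: every $S$-fact is incident to some $c_i$, and the only edges of $S \cup S'$ incident to $c_i$ are the two $S$-edges $R_i(c_{i-1},c_i)$ and $R_{i+1}(c_i,c_{i+1})$ (with $c_0 = a, c_\ell = b$). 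Hence any visit of $W$ to $c_i$ must be entered by $R_i$ and exited by $R_{i+1}$; propagating this through neighbours shows that if $W$ touches any $c_i$, it traverses the full path $a \to c_1 \to \cdots \to c_{\ell-1} \to b$, so $S \subseteq T$. But then the sub-walk along $S$ alone already realizes $w \in L$, so $S$ itself is a support, and minimality of $T$ forces $T = S$.

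The only subtle point, and the one I expect to need the most care in the writeup, is handling walks that potentially transition between $S$ and $S'$ at the shared vertices $a$ or $b$ (for instance, looping in $S'$ from $b$ back to $b$ after completing the $S$-path). This is exactly where minimality of $T$ is invoked: any such extra excursion would give $S \subsetneq T$ with $S$ still a support, contradicting minimality. Together with \Cref{lem:pseudoconn}, this yields $\FGMC{q} \polyrx \Shapley{q}$ for every such RPQ, recovering the dichotomy of \cite{khalilComplexityShapleyValue2023} as a corollary.

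\end{appendixproof}
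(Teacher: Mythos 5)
Your proof is correct and follows essentially the same strategy as the paper's: fix a word $w \in L$ of length at least $2$, realize it as a simple path $S$ from $a$ to $b$ through fresh intermediate constants $c_1,\dots,c_{\ell-1} \notin \{a,b\}$, and observe that since every $S$-edge is incident to some $c_i$ (which has no neighbours in $S'$ and degree $2$ in $S \cup S'$), any minimal support in $S \cup S'$ that touches $S$ must contain all of $S$ and hence, by minimality, equal $S$. Your version is somewhat more explicit — you spell out the walk-propagation argument and the handling of excursions through the shared endpoints $a,b$, where the paper compresses this to a one-sentence remark — but the underlying construction, the choice of $C=\{a,b\}$, and the island-support argument are identical.
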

    \begin{proof}
        Let $q$ be an "RPQ". It can be written as a "path atom" $L(a,b)$ with $L$ a regular language and $a,b$ two constants that may be equal. $q$ is "$\aC$-hom-closed" for $\aC\defeq\{a,b\}$. Consider a word $w = R_1 \dots R_l$ of length at least 2, and a simple path $P\defeq a \xrightarrow{R_1} a_1 \dots a_{l-1} \xrightarrow{R_l} b$ such that $\forall i\in [l-1]. a_i \notin \aC$. $P$ is a "minimal support" of $q$ by definition, and $\const(P)\not\subseteq \aC$. We now show that it is an "island support".
        
        Take a set of facts $S'$ with $\const(P)\cap\const(S')\subseteq\aC$ and a minimal support $B\subseteq P\cup S'$ of $q$. It must be a path from $a$ to $b$, and any path that connects $a$ and $b$ via some edge in $P$ must contain $P$ as a whole, since all edges of $P$ contain some $a_i$ and no edge of $S'$ does. Therefore by minimality $B\subseteq P$ or $B\subseteq S'$.
    \end{proof}
\end{toappendix}

\begin{corollary}[Originally established in~\cite{khalilComplexityShapleyValue2023}]
	Let $q$ be an "RPQ" of language $L$. If $L$ contains a word of length at least 3, then $\Shapley{q}$ is "shP"-hard, otherwise it is in "FP".
\end{corollary}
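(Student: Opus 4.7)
The plan is to derive both directions of the dichotomy by combining the reductions established above with existing complexity results for model counting of RPQs.

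\emph{Tractable direction.} Suppose every word in $L$ has length at most $2$. Then $L$ is a finite subset of $\{\epsilon\} \cup \Sigma \cup \Sigma^2$, and $q$ is equivalent to a finite disjunction of ``elementary'' queries: the empty conjunction (if $\epsilon \in L$ and $a=b$), single-atom queries $R(a,b)$ for length-$1$ words, and two-atom queries $\exists x.\, R_1(a,x) \wedge R_2(x,b)$ with a unique existentially quantified variable for length-$2$ words. Each disjunct is trivially hierarchical, and with $a,b$ being constants the resulting UCQ is safe in the sense of \cite{dalviDichotomyProbabilisticInference2012}. Hence $\PQE q \in \mathsf{FP}$ by the UCQ dichotomy, and by the reduction $\Shapley q \polyrx \PQE q$ of \Cref{prop:reductions-problems}, we conclude $\Shapley q \in \mathsf{FP}$.

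\emph{Hard direction.} Suppose $L$ contains a word of length $\ell \geq 3$. Since $\ell \geq 2$, \Cref{lem:RPQ-is-pseudoconnected} applies and $q$ is pseudo-connected, so by \Cref{lem:pseudoconn} we have $\FGMC q \polyrx \Shapley q$. It therefore suffices to establish $\#P$-hardness of $\FGMC q$, and we split according to whether $L$ is finite: if $L$ is infinite, then $q$ is an unbounded hom-closed graph query, so $\MC q$ (hence $\FGMC q$) is $\#P$-hard by \cite{amarilliUniformReliabilityUnbounded2023a}; if $L$ is finite, then $q$ is a UCQ that contains, as one of its disjuncts, the length-$\ell$ path CQ $q_w = \exists x_1, \dots, x_{\ell-1}.\, R_1(a, x_1) \wedge R_2(x_1, x_2) \wedge \cdots \wedge R_\ell(x_{\ell-1}, b)$, which is a non-hierarchical sjf-CQ (for any three consecutive atoms, the shared variables witness the failure of the hierarchical condition). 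By the UCQ dichotomy of \cite{dalviDichotomyProbabilisticInference2012,kenigDichotomyGeneralizedModel2021}, $\FGMC q$ is $\#P$-hard.

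The main obstacle is justifying, in the finite-$L$ subcase, that the presence of a single non-hierarchical disjunct $q_w$ in a UCQ $q$ implies that the entire UCQ is non-safe (and not ``rescued'' by the other, shorter disjuncts). The intended argument is that one can reduce $\FGMC{q_w}$ to $\FGMC q$ by constructing databases whose structure forbids matches of any word of $L$ strictly shorter than $\ell$: for instance, by using only fresh intermediate constants and avoiding any atom $R(a,b)$ that would witness a length-$1$ match, and by not introducing length-$2$ paths $a \to x \to b$ via any pair of relations whose product lies in $L$. A routine case analysis on the finitely many shorter words of $L$ shows that such an isolation is always possible, reducing the problem to counting the size-$n$ sub-databases witnessing $q_w$, which is already $\#P$-hard.
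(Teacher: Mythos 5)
Your high-level strategy for the tractable direction and the use of Lemma~\ref{lem:RPQ-is-pseudoconnected}/\ref{lem:pseudoconn} to get $\FGMC q \polyrx \Shapley q$ match the paper. The hardness direction, however, has several real gaps that the paper avoids by a different route.

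\textbf{The infinite-$L$ subcase does not work.} You invoke the result of \cite{amarilliUniformReliabulityUnbounded2023a} stating that $\MC q$ is $\#\mathsf{P}$-hard for unbounded \emph{hom-closed} graph queries. But an RPQ $L(a,b)$ with constants $a,b$ is only $C$-hom-closed for $C=\{a,b\}$, not hom-closed: a homomorphism that moves $a$ or $b$ need not preserve satisfaction. The Amarilli result therefore does not apply, and this branch of your case split is unsupported.

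\textbf{The finite-$L$ subcase is left as a sketch, and the sketch has problems.} First, $q_w$ need not be self-join-free: the word $w = R_1\cdots R_\ell$ can repeat relation symbols, so "non-hierarchical sjf-CQ" is an unwarranted label. Second, and more importantly, it is simply not true that a UCQ containing a non-hierarchical disjunct must be unsafe; safety of a UCQ depends on the interaction of all disjuncts, not on any one of them in isolation. You acknowledge this ("the main obstacle is justifying...") and propose an "isolation" construction with fresh constants, but you never actually carry it out, and it is unclear that a "routine case analysis" on shorter words of $L$ closes the gap — especially since the RPQ has constants $a,b$ at both endpoints and those shorter words are still anchored there.

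The paper sidesteps both issues with a single move: it observes that the reduction in the proof of Theorem~4.1 of \cite{khalilComplexityShapleyValue2023}, which was used to reduce $\Shapley{q_{RST}}$ to $\Shapley q$ for $q_{RST}=R(x)\wedge S(x,y)\wedge T(y)$, works verbatim to show $\GMC{q_{RST}} \polyrx \GMC q$. Combined with the known $\#\mathsf{P}$-hardness of $\GMC{q_{RST}}$ from \cite{kenigDichotomyGeneralizedModel2021}, this gives hardness of $\FGMC q$ uniformly, with no finite/infinite case split and no reliance on sjf-ness or hom-closure. You should adopt this approach (or explicitly carry out and verify the isolation construction) to make the hardness direction complete.
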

\begin{proof}
	If $L$ contains no word of length at least 2, then $q$ can be written as a disjunction of atoms with no variable, which is trivially tractable for both $\FGMC{}$ and $\Shapley{}$ (rather than the whole input database, it suffices to consider the constant-size subset of facts that could appear in a minimal support). Otherwise, $q$ is "pseudo-connected" by \Cref{lem:RPQ-is-pseudoconnected}, hence \Cref{lem:pseudoconn} states that $\FGMC q \polyeq \Shapley q$.
	
	It now suffices to prove that the desired dichotomy holds for $\FGMC{}$. Tractability is easy: since all tractable queries in the dichotomy are bounded, the dichotomy for $\PQE{}$ \cite{dalviDichotomyProbabilisticInference2012} applies (see \cite{khalilComplexityShapleyValue2023} for a detailed explanation of why these queries are "safe"). As for the hardness, we turn to the proof of \cite[Theorem 4.1]{khalilComplexityShapleyValue2023} where they showed the hardness of $\Shapley{q}$ from the previously known hardness of $\Shapley{q_{RST}}$, with $q_{RST}\defeq R(x)\land S(x,y)\land T(y)$. As it turns out, their constructions work equally well to show that $\GMC{q_{RST}} \polyrx \GMC{q}$, and in turn the hardness of $\GMC{q_{RST}}$ is well known \cite{kenigDichotomyGeneralizedModel2021}.
\end{proof}

\AP The "pseudo-connected" class contains also queries that are \emph{not connected} in any real sense, for instance any $C$-"hom-closed" query with a ``""duplicable singleton support""'', that is, a "support" of size 1 that contains a "constant" outside of $C$. The name comes from the fact that we can make "homomorphic" copies by renaming a "constant" outside of $C$. The most basic example of such a query would be $A(x)\lor q$ for any "$C$-hom-closed" $q$. More interesting examples would include a "CRPQ" containing just one "path atom" $\exists x ~ L(a,x)$, where $a$ is a constant and $L$ any language containing a word of length 1, such as $L=A^*B$. Other examples may include
disconnected atoms, such as
$\exists x,y,u,z\,\, R(a,x,y)\land R(u,b,z)$ (with $a,b$ "constants").

\begin{corollary}\label{cor:singleton-sup}
	For every "$\aC$-hom-closed" query $q$ that admits a "duplicable singleton support", it holds that $\FGMC q \polyeq \Shapley q$.
\end{corollary}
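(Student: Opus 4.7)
The plan is to show that any $C$-hom-closed query $q$ with a duplicable singleton support is already \emph{pseudo-connected}, from which \Cref{lem:pseudoconn} immediately yields $\FGMC q \polyrx \Shapley q$. The converse $\Shapley q \polyrx \FGMC q$ is already supplied by \Cref{prop:reductions-problems}(\ref{cas1.3}), so combining the two gives the claimed polynomial-time equivalence.

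Concretely, let $\{\alpha\}$ be a duplicable singleton support, i.e.\ a size-one support of $q$ with $\const(\{\alpha\}) \not\subseteq C$; fix a witness $c \in \const(\alpha) \setminus C$. The degenerate case $\emptyset \models q$ can be dispatched separately since then $q$ is trivially satisfied by every database, and both $\FGMC q$ and $\Shapley q$ admit closed-form solutions. Otherwise, $\{\alpha\}$ is automatically a \emph{minimal} support, and $\const(\{\alpha\}) \not\subseteq C$ by assumption, so the first and third conditions of \emph{pseudo-connectedness} are met.

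The core step is to check that $\{\alpha\}$ is an \emph{island support}. Take any set of facts $S'$ with $\const(\{\alpha\}) \cap \const(S') \subseteq C$ (in particular $c \notin \const(S')$), and any minimal support $M$ of $q$ inside $\{\alpha\} \cup S'$. Either $\alpha \notin M$, in which case $M \subseteq S'$ and we are done; or $\alpha \in M$, in which case $\{\alpha\} \subseteq M$ is itself a support of $q$, so minimality of $M$ forces $M = \{\alpha\} \subseteq \{\alpha\}$. In both cases $M$ lies entirely inside $\{\alpha\}$ or entirely inside $S'$, establishing the island property.

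I do not anticipate a genuine technical obstacle here: the proof is essentially a direct unfolding of the definitions. The one mild subtlety is the trivial case $\emptyset \models q$, which must be handled separately only because the definition of pseudo-connectedness insists on minimality of the supporting set. Everything else reduces to the observation that a minimal support cannot strictly extend the already-satisfying set $\{\alpha\}$, which is precisely what prevents $M$ from straddling the two sides.
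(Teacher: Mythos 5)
Your proof is correct and mirrors the paper's own argument almost exactly: both handle the degenerate case where the singleton support is not minimal (equivalently $q \equiv \top$), then show the singleton is an island support by observing that a minimal support in $\{\alpha\}\cup S'$ either omits $\alpha$ (hence lies in $S'$) or contains $\alpha$ and must then equal $\{\alpha\}$ by minimality, and finally invoke \Cref{lem:pseudoconn} together with \Cref{prop:reductions-problems}(\ref{cas1.3}).
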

\begin{proof}
    Let $q$ be a "$\aC$-hom-closed" query $q$ that admits a "duplicable singleton support" $S$. If $S$ isn't minimal, then $q$ is the trivial query $\top$, thus $\FGMC{q}$ and $\Shapley{q}$ are both obviously tractable. Otherwise, since $S$ contains a constant outside of $C$, it only remains to show that it is an "island support". Let $S'$ be a set of facts with $\const(S)\cap\Const(S')\subseteq C$ and $S^*$ a minimal support for $q$ in $S\cup S'$. If the single element of $S$ is contained in $S^*$, then $S^*=S$ by minimality, otherwise $S^*\subseteq S'$. We have thus shown $q$ to be pseudo-connected, and it suffices to apply \Cref{lem:pseudoconn}. 
\end{proof}

As for the known dichotomy for $\Shapley{}$ over "sjf-CQs" \cite{livshitsShapleyValueTuples2021}, it turns out that, although not all "sjf-CQs" are "pseudo-connected", the intractable queries always have an intractable component which is ``connected via "variables"''. This motivates the following definition and variant of \Cref{lem:pseudoconn} above.

\AP
A "Boolean" "CQ" $q$ is ""variable-connected"" if the "incidence graph" $\incgraph{\atoms(q)}$ remains connected after removal of the nodes $\const(q)$.
%
We say that 
a "$\aC$-hom-closed" query is \reintro{variable-connected} if it is equivalent to an infinitary union of "variable-connected" "$\aC$-hom-closed" "Boolean" "CQs".
Observe that any "hom-closed" query is "connected" if{f} it is "variable-connected".


\AP
A "fact" $\alpha$ is called a $q$-""leak"" if there exists some "fact" $\alpha'$ from some "minimal support" of $q$ and a "$\aC$-homomorphism" $h: \set{\alpha'} \to \set{\alpha}$ such that $h(c) \in \aC$ for some $c \in \const(\alpha') \setminus \aC$.  Intuitively, a $q$-"leak" is a structure that allows a "minimal support" of a "variable-connected" $q$ to intersect two "databases" that share no "constant" outside of $C$, by instantiating a "variable" with a "constant" in $C$.
For instance, consider the ($\{a\}$-"hom-closed") "CRPQ" $q\defeq \exists x ~ [AB+BA](x,a) \equiv \exists x,y ~  (A(x,y)\land B(y,a)) \lor (B(x,y)\land A(y,a))$. The "fact" $A(b,a)$ is a $q$-"leak" because of the first "fact" of the "minimal support" $\set{A(b,d), B(d,a)}$ and the mapping $\set{b \mapsto b, d \mapsto a}$.

\begin{lemma}\label{lem:leak}
Let $q, q'$ be "$C$-hom-closed@$C$-hom-closed" and "$C'$-hom-closed@$C$-hom-closed" queries, respectively, such that
\begin{enumerate}
    \item\label{leak:1} $q$ is "variable-connected";
    \item there is a "minimal support" $S'$ of $q'$ such that
    \begin{enumerate}
        \item\label{leak:2a} $S' \not\models q$,
        \item\label{leak:2b} $S'$ has no $q$-"leak",
        \item\label{leak:2c} if a "fact" $\alpha \in S'$ is "relevant to" $q$, then $\const(\alpha) \not\subseteq \aC$;
    \end{enumerate}
    \item\label{leak:3} $q$ has a "minimal support" $S$ with no $q$-"leak".
\end{enumerate}
Then, $\FGMC{q} \polyrx \Shapley{q \land q'}$.
\end{lemma}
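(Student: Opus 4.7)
The plan is to reduce $\FGMC{q}$ to $\Shapley{q \land q'}$ by chaining two simpler reductions: $\FGMC{q} \polyrx \Shapley{q}$, which follows from \Cref{lem:pseudoconn} once we argue that $q$ is "pseudo-connected", and $\Shapley{q} \polyrx \Shapley{q \land q'}$, which we establish by a direct query-preserving construction adding a copy of $S'$ as "exogenous" facts.

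For the second reduction, given an input $(\Dn, \Dx, \mu)$ of $\Shapley{q}$, I rename the non-$C$ "constants" of $\Dn \cup \Dx$ to fresh values disjoint from $C \cup C'$ and take a copy $S'_*$ of the "minimal support" $S'$ from hypothesis~(\ref{leak:2}) in which the non-$C'$ constants are renamed to fresh values disjoint from everything else, so that $\const(\Dn \cup \Dx) \cap \const(S'_*) \subseteq C$. The reduction forwards $(\Dn, \Dx \cup S'_*, \mu)$ to the oracle. Since $S'_*$ is a "support" of $q'$ by hom-closure, $q'$ holds on every superset of $\Dx \cup S'_*$, so the $q \land q'$-game coincides with the $q$-game on the constructed instance. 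The key claim is that $B \cup \Dx \cup S'_* \models q$ iff $B \cup \Dx \models q$ for every $B \subseteq \Dn$, which implies that the $q$-game on $(\Dn, \Dx \cup S'_*)$ equals the $q$-game on $(\Dn, \Dx)$.

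To prove the non-trivial direction $(\Rightarrow)$, let $h \colon S_0 \to B \cup \Dx \cup S'_*$ be a "$C$-homomorphism" from a minimal support $S_0$ of $q$. By hypothesis~(\ref{leak:1}), $S_0$ is "variable-connected". Suppose for contradiction that $h(\alpha) \in S'_*$ for some $\alpha \in S_0$. If $\alpha$ has a non-$C$ constant $c$, then $h(c) \in \const(S'_*)$: if $h(c) \in C$, then $h(\alpha)$ is a $q$-"leak" in $S'_*$, contradicting hypothesis~(\ref{leak:2b}); otherwise $h(c) \notin \const(\Dn \cup \Dx)$ by the fresh-constant setup, so propagation along variable-connectivity paths in $S_0$ forces $h(S_0) \subseteq S'_*$, hence $S'_* \models q$, contradicting hypothesis~(\ref{leak:2a}) via hom-closure. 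If $\const(\alpha) \subseteq C$ instead, then $h(\alpha) = \alpha$ as $h$ fixes $C$, and since the renaming leaves $\alpha$ unchanged, $\alpha \in S'$; but $\alpha$ is "relevant to" $q$ (being in $S_0$) yet has $\const(\alpha) \subseteq C$, contradicting hypothesis~(\ref{leak:2c}). Hence $h(S_0) \subseteq B \cup \Dx$.

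For the first reduction, I argue $q$ is pseudo-connected. If every minimal support of $q$ has constants contained in $C$, then $q$ is equivalent to a disjunction over a finite family of ``contains these $C$-facts'' queries and $\FGMC{q}$ is computable directly in polynomial time. Otherwise, pick a minimal support $S^*$ with $\const(S^*) \not\subseteq C$; by hypothesis~(\ref{leak:1}), $S^*$ is variable-connected. For any set $S^\ddagger$ with $\const(S^*) \cap \const(S^\ddagger) \subseteq C$ and any minimal support $S^\sharp \subseteq S^* \cup S^\ddagger$ of $q$, variable-connectedness of $S^\sharp$ prevents it from straddling $S^* \setminus S^\ddagger$ and $S^\ddagger \setminus S^*$, as any connecting path in the incidence graph minus $C$-constants would require a non-$C$ constant shared by both sides. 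Hence $S^*$ is an "island support" and $q$ is pseudo-connected, so \Cref{lem:pseudoconn} gives $\FGMC{q} \polyrx \Shapley{q}$; chaining with the second reduction yields $\FGMC{q} \polyrx \Shapley{q \land q'}$. The main subtlety is the propagation argument of the second reduction, which requires all three parts of hypothesis~(\ref{leak:2}) to rule out every way in which $S'_*$ could participate in a minimal support of $q$.
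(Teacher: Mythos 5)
You take a genuinely different, more modular route than the paper: you factor the reduction as $\FGMC{q} \polyrx \Shapley{q} \polyrx \Shapley{q\land q'}$, deriving the first step from \Cref{lem:pseudoconn} and giving a new, query-preserving second step, whereas the paper builds a single database $\D' \cup S^0 \cup \dots \cup S^i \cup S^-$ and argues about the Shapley value directly. Unfortunately both halves of your argument rest on the same false assertion, which makes the proof incorrect as written.

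The gap is the repeated claim that a \emph{minimal support} of a variable-connected query is itself variable-connected. This is not so: a minimal support is the image of $\atoms(\phi)$ under a $C$-homomorphism that may collapse a variable to a $C$-constant, and when this happens the incidence graph of the support, with $C$-constants removed, becomes disconnected. This is precisely the phenomenon that the paper's $q$-leak notion tracks, and it is why hypotheses~(\ref{leak:2b}) and~(\ref{leak:3}) are imposed. Your phrases ``$S^*$ is variable-connected,'' ``variable-connectedness of $S^\sharp$ prevents it from straddling,'' and ``propagation along variable-connectivity paths in $S_0$'' all use this false claim. For the first reduction the error is fatal as stated: an \emph{arbitrary} minimal support $S^*$ with $\const(S^*)\not\subseteq\aC$ need not be an island. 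Concretely, take $\aC=\{a\}$ and $q = \exists x,y.\,R(a,x)\land S(x,y)\land T(y)$, which is variable-connected; its minimal support $S^* = \{R(a,b),S(b,a),T(a)\}$ is not an island, since with $S^\ddagger = \{R(a,a),S(a,a)\}$ the set $\{R(a,a),S(a,a),T(a)\}$ is a minimal support straddling $S^*$ and $S^\ddagger$. Note that here $S(b,a)$ and $T(a)$ are $q$-leaks. To make your first step sound, you must choose the leak-free $S$ from hypothesis~(\ref{leak:3}) (not an arbitrary $S^*$), and the island argument must proceed through the variable-connected lift $S^\dagger$ of $S^\sharp$: a path crossing from the $S$-side to the $S^\ddagger$-side must pass through a constant that is in $\aC$ yet the image of a non-$\aC$ term of $S^\dagger$, making the adjacent $S$-fact a $q$-leak and contradicting (\ref{leak:3}). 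This is exactly the argument the paper carries out.

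The second step ($\Shapley{q}\polyrx\Shapley{q\land q'}$) is a nice idea and can be made to work, but the propagation must again be run along a path in the variable-connected lift of $S_0$, carefully invoking leak-freeness of $S'$ (\ref{leak:2b}) at the atom where the path would cross from $S'_*$ into $\Dn\cup\Dx$, and invoking (\ref{leak:2c}) when the crossing atom is a $\aC$-only fact of $S'_*$. Two smaller points: you should rename the constants of $S'$ that lie outside $\aC\cup\aC'$ (not just outside $\aC'$), otherwise the renaming may fail to be a $\aC$-homomorphism and property~(\ref{leak:2a}) can be lost; and your observation that the $\const(S)\subseteq\aC$ case makes $\FGMC{q}$ trivially polynomial is correct and worth keeping, as the paper glosses over it.
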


\begin{corollary}\label{cor:non-h-cq}
    Let $q$ be a "non-hierarchical" "sjf-CQ" or a "non-hierarchical" constant-free "CQ". Then, $\Shapley{q}$ is "shP"-hard.
\end{corollary}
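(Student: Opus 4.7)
My plan is to reduce $\Shapley{q}$ from $\FGMC{q_*}$ for a suitably chosen non-hierarchical variable-connected sub-query $q_*$ of $q$, and then invoke the UCQ dichotomy \cite{dalviDichotomyProbabilisticInference2012,kenigDichotomyGeneralizedModel2021} to obtain "shP"-hardness of $\FGMC{q_*}$. The first step is to decompose $q$ into its variable-connected components $q = q_1 \wedge \cdots \wedge q_k$. Because the three atoms $\alpha_1, \alpha_2, \alpha_3$ witnessing non-hierarchicity share variables pairwise through the middle atom $\alpha_2$, they all lie in a single component, which I call $q_*$; this $q_*$ is non-hierarchical (witnessed by the same triple) and variable-connected by construction.

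If $k = 1$, i.e., $q$ itself is variable-connected, I would apply \Cref{lem:pseudoconn} directly. In the constant-free case this follows from \Cref{lem:connected-is-pseudoconnected}. In the sjf-CQ case, pseudo-connectedness is witnessed by the minimal support $S$ obtained by sending each variable of $q$ to a fresh pairwise-distinct constant outside $C \defeq \const(q)$: the pairwise-distinct relation names of the atoms of $q$ force every minimal support of $q$ in any union $S \cup S'$ with $\const(S) \cap \const(S') \subseteq C$ to lie entirely on one side, and $\const(S) \not\subseteq C$ holds by freshness.

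Otherwise ($k \geq 2$), I would apply \Cref{lem:leak} with $q_*$ playing the role of the lemma's ``$q$'' and $q' \defeq \bigwedge_{i \neq *} q_i$ playing ``$q'$''. Hypothesis~(1) is by construction. For~(3), (2b), and (2c), I would pick generic minimal supports $S$ of $q_*$ and $S'$ of $q'$ using fresh, pairwise-distinct, and mutually disjoint constants: in the sjf case, no fresh constant in $S$ can be mapped into $C$ by a $C$-homomorphism, ruling out $q_*$-leaks in $S$, and facts of $S'$ carry relation names disjoint from those of $q_*$, so none is a $q_*$-leak nor relevant to $q_*$; in the constant-free case, $C = \emptyset$ makes all leak-based conditions vacuous. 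Hypothesis~(2a), namely $S' \not\models q_*$, is immediate in the sjf case by relation-name disjointness; in the constant-free case, if it fails, hom-closedness yields $q' \models q_*$, hence $q \equiv q'$, and one can iterate the argument on $q'$ (strictly fewer components), eventually reaching either a variable-connected query (handled above) or a decomposition where (2a) holds.

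Applying \Cref{lem:leak} then yields $\FGMC{q_*} \polyrx \Shapley{q}$; combined with the "shP"-hardness of $\FGMC{q_*}$ supplied by the UCQ dichotomy, this gives "shP"-hardness of $\Shapley{q}$. The most delicate step is verifying hypothesis~(2a) in the constant-free case with self-joins, where the iteration must carry a non-hierarchical witness through each equivalence---alternatively, \Cref{lem:decomposable} can be invoked to handle the decomposition $q = q_* \wedge q'$ directly, sidestepping the iteration.
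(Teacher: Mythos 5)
Your proposal follows essentially the same route as the paper: extract a variable-connected non-hierarchical subquery $q_*$ so that $q \equiv q_* \land q'$, apply \Cref{lem:leak} with fresh-constant minimal supports to get $\FGMC{q_*} \polyrx \Shapley{q}$, and invoke the (G)MC dichotomy on $q_*$ for hardness; the paper handles hypothesis~(2a) with the terse remark that ``$q'$ can be removed if redundant,'' which is morally the same as your iteration (and similarly under-detailed in the constant-free self-join case). One caveat on your closing remark: \Cref{lem:decomposable} does \emph{not} reliably sidestep the iteration, since it demands that \emph{all} minimal supports of $q_*$ and $q'$ be disjoint, which can fail in the presence of self-joins (two components sharing a relation name may share a minimal support such as $\set{R(a,a)}$); the paper does not use \Cref{lem:decomposable} here.
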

\begin{proof}
    The fact that the "CQ" $q$ is "non-hierarchical" can be seen in a "variable-connected" subquery. In other words, $q$ can be written as $q\land q'$ with $q$ "variable-connected" and "non-hierarchical". Now build for $i\in [2]$ a minimal support $S_i$ of $q_i$ by isomorphically mapping every variable to a fresh constant.
    
    The key idea is that there cannot be any "leak" if the "CQ" is either constant-free or "self-join-free". This is because a "leak" implies a "$\aC$-homomorphism" from an atom of the query to an atom of the "support" where a constant outside of $\aC$ is mapped to a constant in $\aC$. Now if $\aC=\emptyset$ this is obviously impossible, and if $q$ is "self-join-free", then for every atom $\alpha\in S_i$, the only atom in $q$ for which there exists a homomorphism is the one it is isomorphic to.
    
    At this point the hypotheses (\ref{leak:1}), (\ref{leak:2b}) and (\ref{leak:3}) are verified. Now (\ref{leak:2a}) is free since $q'$ can be removed if it is redundant, and (\ref{leak:2c}) must also be true because $\const(\alpha)\subseteq C$ is impossible if the query is constant-free and if it is "self-join-free" no fact in a minimal support of $q'$ can be "relevant to" $q$ because of the disjoint vocabularies.
    
    We can thus apply \Cref{lem:leak} to obtain $\FGMC{q}\polyrx \Shapley{q}$, and since $q$ is "non-hierarchical", this means that $\Shapley{q}$ is "shP"-hard because $\GMC{q}$ is \cite{kenigDichotomyGeneralizedModel2021}.
\end{proof}

The preceding lemma allows us, in particular, to recapture the dichotomy over "sjf-CQs" established in \cite{livshitsShapleyValueTuples2021}. It does not prove a full dichotomy over constant-free "CQ"s however because there could potentially be "hierarchical" queries which are "shP"-hard.


\subsection{Disconnected Queries}

Another approach for reducing $\FGMC{}$ to $\Shapley{}$ is to have a query whose answers on any database can be computed as two queries over disjoint subsets of the data. We can in fact view the "database" as being "probabilistic@probabilistic database" due to the $\FGMC{} \polyeq \SPPQE{}$ equivalence. In this setting, the intuition is that we compute the probability of each subquery by replacing the part of the query that does not affect it by some minimal support $S$ of the other subquery, apply the same construction as before to compute the probability of this subquery, and finally multiply the two probabilities together.

\AP
Formally, we say that a "$\aC$-hom-closed" query $q$ is ""decomposable into"" $q_1 \land q_2$ if it is equivalent to $q_1 \land q_2$, where $q_1,q_2$ are "Boolean" queries such that
\begin{enumerate}
    \item there are "minimal supports" $S_1, S_2$ of $q_1, q_2$ respectively,  with $\const(S_1) \not\subseteq \aC$ and $\const(S_2) \not\subseteq \aC$,
    \item for all "minimal supports" $S_1, S_2$ of $q_1, q_2$ respectively, we have $S_1 \cap S_2 = \emptyset$.
\end{enumerate}

\begin{lemma}\label{lem:decomposable}
    For every "decomposable" "$\aC$-hom-closed" query~$q$, we have
    $\FGMC{q} \polyrx \Shapley{q}$.
\end{lemma}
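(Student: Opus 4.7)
The plan is to reduce $\SPPQE{q}$ to $\Shapley{q}$, which suffices because $\FGMC{q} \polyeq \SPPQE{q}$ by \Cref{prop:reductions-problems}. Given a tuple-independent probabilistic database $\D$ with probability values in $\{p, 1\}$, the goal is to compute $\Proba(\D \models q)$ using $\Shapley{q}$ oracle calls.

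The first step is to exploit a product decomposition of the query probability. By the decomposability of $q$, every minimal support of $q$ is of the form $S_1 \cup S_2$ with $S_i$ a minimal support of $q_i$, and by condition~2 no fact belongs to minimal supports of both $q_1$ and $q_2$. Hence the sets $\mathrm{Rel}(q_1)$ and $\mathrm{Rel}(q_2)$ of facts relevant to the two subqueries are disjoint. By the $C$-hom-closure of each $q_i$, the event $\D \models q_i$ depends only on facts in $\mathrm{Rel}(q_i)$; combined with tuple-independence, this gives
\[
\Proba(\D \models q) \;=\; \Proba(\D \models q_1) \cdot \Proba(\D \models q_2),
\]
so it suffices to compute each factor separately.

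For a given $i \in \{1,2\}$, to compute $\Proba(\D \models q_i)$, I would augment $\D$ with a minimal support $S_{3-i}'$ of $q_{3-i}$ whose non-$\aC$ constants are all fresh (which exists by condition~1 after suitable renaming), giving these facts probability~$1$. Let $\tilde\D^i$ denote the augmented database. On $\tilde\D^i$, the subquery $q_{3-i}$ is automatically satisfied, while $q_i$-satisfaction is unaffected: the facts of $S_{3-i}'$ are irrelevant to $q_i$ (by the disjointness of relevant-fact sets), and their fresh constants preclude them from participating in any minimal support of $q_i$ via hom-closure. Therefore $\Proba(\tilde\D^i \models q) = \Proba(\D \models q_i)$, reducing the problem to evaluating $\SPPQE{q}$ on $\tilde\D^i$ using $\Shapley{q}$.

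For this final step I would employ a polynomial-interpolation scheme reminiscent of the one underlying \Cref{lem:pseudoconn}: for varying $K$, add $K$ fresh copies $S_i^{(1)}, \ldots, S_i^{(K)}$ of a minimal support of $q_i$ (with pairwise disjoint fresh constants outside $\aC$, whose existence is again guaranteed by condition~1) as endogenous facts of $\tilde\D^i$, and query the Shapley value of a designated fact $\alpha \in S_i^{(1)}$. Together with the fixed $S_{3-i}'$ in the exogenous part, each $S_i^{(k)}$ behaves as an island minimal support of $q$, so the marginal contribution of $\alpha$ to a coalition $B$ is nonzero precisely when $B$ contains $S_i^{(1)} \setminus \{\alpha\}$ but neither a full copy $S_i^{(k)}$ for $k \ge 2$ nor a minimal support of $q_i$ in the original part. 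The resulting Shapley values, as a function of $K$, yield a Vandermonde-type linear system in the unknowns $\FGMC_k(q, \tilde\D^i)$, which can be solved by polynomially many oracle calls to recover $\Proba(\tilde\D^i \models q)$. The main obstacle is verifying that under hom-closure no spurious minimal supports of $q$ are created by combining facts from different copies, or from a copy with the original data; the decomposability hypotheses---disjointness of relevant-fact sets plus the availability of minimal supports with constants outside $\aC$---are precisely what yields the clean characterization of satisfying subsets and thus the invertibility of the system.
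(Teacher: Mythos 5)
Your high-level strategy starts out matching the paper's: pass to $\SPPQE{q}$ via \Cref{prop:reductions-problems}, use the disjointness of relevant-fact sets to split the probability into a product $\Proba(\D\models q_1)\cdot\Proba(\D\models q_2)$, and handle each factor separately by adding an exogenous support of the other subquery. This part is fine. The gap is in the final Shapley reduction: you propose to duplicate fresh copies of a minimal support of $q_i$ \emph{itself} and place the designated fact $\alpha$ in one of them, whereas the paper's proof duplicates (part of) a minimal support of the \emph{other} subquery $q_{3-i}$ and places the player there. These are not interchangeable, and your version breaks down when $q_i$ is disconnected, which decomposability does not forbid.

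Concretely, take $q_i = \exists x\,R(x)\land\exists y\,T(y)$ with copies $S_i^{(k)}=\set{R(a_k),T(b_k)}$ on pairwise-disjoint fresh constants, and $\alpha=R(a_1)$. Then for $B=\set{T(b_2)}$ the coalition $B$ does \emph{not} contain $S_i^{(1)}\setminus\set{\alpha}=\set{T(b_1)}$ and contains no full copy, yet $B\cup\set{\alpha}$ satisfies $q_i$ while $B$ does not, so the marginal contribution is $1$ --- contradicting your claimed characterization. More generally, once $\alpha$ is a fact relevant to $q_i$, it can combine across copies or with $q_i$-relevant facts from $\D$ to form spurious supports, and the ``no full copy, no support in the original part'' condition no longer delimits the nonzero marginals. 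The decomposability hypotheses do not rescue this: they only say $q_1$ and $q_2$ have disjoint minimal supports, not that each $q_i$ is connected or island-like. Because the marginal contribution now depends jointly and non-locally on copy facts and $\D$-facts, the coefficient of each $\FGMC$ unknown no longer factors out, and the Vandermonde-style invertibility argument does not apply as stated. The paper avoids this by choosing the player $\mu$ in a minimal support $S$ of $q_{3-i}$ (so $\mu$ cannot combine with $\D^1$, which contains no $q_{3-i}$-relevant facts) and by duplicating only the fresh-constant slice $S^0$ of $S$ while keeping $S^-=S\setminus S^0$ shared, which yields the clean three-case analysis (\Cref{lem:pcnx}-style) even when $q_{3-i}$ is disconnected.
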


The most obviously "decomposable" queries are those that can be written as the conjunction of two independent subqueries over distinct vocabularies.
In fact, in the constant-free case, this syntactic class covers all "decomposable" queries. 

\begin{lemmarep}\label{lem:chara-decomp} A 
    "hom-closed" "Boolean query" $q$ is "decomposable" iff 
 it is equivalent to $q_1\land q_2$, where $q_1$ and $q_2$ are "Boolean" "UCQ$^\infty$" queries that do not share any "relation name". 
\end{lemmarep}
\begin{proof}
    First observe that the first condition of "decomposability" is always satisfied when the query is "hom-closed". It will thus be ignored from now.\\
    
    \noindent$(\Rightarrow)$\quad Assume $q$ is "decomposable into" $q_1\land q_2$. We may further assume $q_1$ and $q_2$ are "UCQ$^\infty$"s since any "hom-closed" query can be written as such. Let us now assume by contradiction that $q_1$ and $q_2$ share a "relation name" $P$. We can assume w.l.o.g.\ that they admit respective minimal supports $S_1$ and $S_2$ that contain the "relation name" $P$: if $q_i$ does not admit any minimal support containing $P$, then $P$ can be removed from any conjunct of $q_i$ it appears in without any consequence. By homomorphically renaming their constants, we can assume that both $S_1$ and $S_2$ contain the atom $P(\vec{a})$ that only contains the fresh constant $a$, which contradicts $S_1\cap S_2=\emptyset$.\\
    
    \noindent$(\Leftarrow)$\quad Assume $q$ can be written as $q_1\land q_2$ where $q_1$ and $q_2$ are "UCQ$^\infty$"s that do not share any variable nor any "relation name". Let $S_1$ and $S_2$ be respective supports of $q_1$ and $q_2$. Since  they do not share any "relation name" then $S_1\cap S_2 =\emptyset$.
\end{proof}
This characterization does not apply to "$C$-hom-closed" queries,
as witnessed by the query
 $\exists x,y\, R(a,x) \land R(b,y)$ which is "decomposable into" $(\exists x\, R(a,x)) \land (\exists y \, R(b,y))$ but admits no disjoint-vocabulary decomposition. 

\AP
\Cref{lem:decomposable} allows us to establish a dichotomy for the constant-free ""sjf-CRPQ"", that is "CRPQs" whose "path atoms" have pairwise distinct vocabularies, and more generally constant-free ""cc-disjoint-CRPQ"", that is constant-free "CRPQs" whose connected components are over pairwise disjoint vocabularies. Of course the connected cases are not "decomposable", but they are "pseudo-connected" hence \Cref{lem:pseudoconn} applies instead.\siderev{New footnote observing that there is a query preserving reduction for constant-free "sjf-CQ"s. Also modified \Cref{fig:query-lang} to reflect this fact.}
\footnote{\changed{In fact this argument also applies to constant-free "sjf-CQ"s, yielding a query-preserving reduction, whereas \Cref{lem:leak} uses a different query for the reduction.
}}
\changed{Note that the dichotomy is effective, since the boundedness of "CRPQ"s is decidable \cite{barcelo_et_al:LIPIcs.ICALP.2019.104} and so is determining whether a "UCQ" is safe \cite{dalviDichotomyProbabilisticInference2012}.}\siderev{Added note on effectiveness.}

\begin{corollary}
    Let $q$ be a constant-free "cc-disjoint-CRPQ". Then $\Shapley{q}$ is in "FP" if it can be written as a "safe" "UCQ", or "shP"-hard otherwise.
\end{corollary}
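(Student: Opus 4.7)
The plan is to split the statement into the easy tractability direction and the hardness direction, with the hardness direction further split according to whether $q$ is connected or not.

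For tractability, if $q$ can be written as a "safe" "UCQ", then $\PQE{q}$ is in "FP" by the known "UCQ" dichotomy \cite{dalviDichotomyProbabilisticInference2012}. By \Cref{prop:reductions-problems}, we have $\Shapley{q} \polyrx \FGMC{q} \polyrx \PQE{q}$ (the second reduction is by the standard arrow from $\FGMC{q}$ through $\SPPQE{q}$ to $\PQE{q}$ in \Cref{fig:reductions}), so $\Shapley{q}$ is in "FP".

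For the hardness direction, assume $q$ cannot be written as a "safe" "UCQ". The first step is to show that $\FGMC{q} \polyrx \Shapley{q}$ in both the connected and disconnected subcases. If $q$ has a single connected component, then $q$ is a "connected" constant-free "CRPQ", hence a "connected" "hom-closed" "graph query" (constant-free "CRPQs" are hom-closed over the binary schema). By \Cref{lem:connected-is-pseudoconnected}, $q$ is "pseudo-connected", and \Cref{lem:pseudoconn} gives $\FGMC{q} \polyrx \Shapley{q}$. If instead $q$ has multiple connected components, then by the "cc-disjoint-CRPQ" assumption we can write $q \equiv q_1 \land q_2$ where $q_1,q_2$ are constant-free "CRPQ"s over disjoint vocabularies. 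Since their vocabularies are disjoint, any "minimal support" of $q_i$ uses only relation names from $q_i$'s vocabulary, so $S_1 \cap S_2 = \emptyset$ for all choices; moreover $\const(S_i) \not\subseteq \emptyset$ trivially (take any nontrivial "minimal support"). Hence $q$ is "decomposable into" $q_1 \land q_2$ in the sense of $C=\emptyset$, and \Cref{lem:decomposable} yields $\FGMC{q} \polyrx \Shapley{q}$.

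The second step is to show that $\FGMC{q}$ is "shP"-hard under our assumption. We split on boundedness: if $q$ is equivalent to a (necessarily unsafe) "UCQ", then the "UCQ" dichotomy of \cite{dalviDichotomyProbabilisticInference2012,kenigDichotomyGeneralizedModel2021} makes $\GMC{q}$ "shP"-hard, and since $\GMC{q} \polyrx \FGMC{q}$ (\Cref{fig:reductions}), so is $\FGMC{q}$. If $q$ is "unbounded", then since constant-free "CRPQ"s are "hom-closed" "graph queries", the result of \cite{amarilliUniformReliabilityUnbounded2023a} gives "shP"-hardness of $\MC{q}$, and consequently of $\FGMC{q}$ via the arrow $\MC{q} \polyrx \FMC{q} \polyrx \FGMC{q}$.

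The main (and essentially only) conceptual point is the second step of the disconnected subcase: verifying that cc-disjointness yields decomposability in the precise sense required by \Cref{lem:decomposable}. This is where the cc-disjoint hypothesis is used in an essential way, as without disjoint vocabularies across connected components one cannot guarantee $S_1 \cap S_2 = \emptyset$ for all choices of minimal supports. Everything else is an application of the lemmas and prior dichotomies. Finally, effectiveness of the dichotomy follows from decidability of "CRPQ" boundedness \cite{barcelo_et_al:LIPIcs.ICALP.2019.104} and decidability of safety for "UCQ"s \cite{dalviDichotomyProbabilisticInference2012}.
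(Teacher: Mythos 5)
Your proposal is correct and follows essentially the same route as the paper's proof: split on connectedness, use \Cref{lem:connected-is-pseudoconnected} with \Cref{lem:pseudoconn} in the connected case and decomposability with \Cref{lem:decomposable} in the disconnected case, then conclude via the known dichotomies for hom-closed graph queries. The only cosmetic difference is that you verify decomposability directly from the disjoint-vocabulary structure of a cc-disjoint-CRPQ, whereas the paper cites \Cref{lem:chara-decomp} to get it in one step — but your argument is exactly what that lemma's proof does, so there is no material difference.
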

\begin{proof}
    If $q$ is "connected" then it is "pseudo-connected" by \Cref{lem:connected-is-pseudoconnected}, otherwise it is "decomposable" by \Cref{lem:chara-decomp}. Either way, $\FGMC{q}\polyrx \Shapley{q}$ by \Cref{lem:pseudoconn} or \Cref{lem:decomposable}. Then, since $q$ is a "hom-closed" "graph query", it is in "FP" if it can be written as a "safe" "UCQ", or "shP"-hard otherwise \cite{amarilliUniformReliabilityUnbounded2023a,dalviDichotomyProbabilisticInference2012,kenigDichotomyGeneralizedModel2021} (see the proof of \Cref{cor:conn-hom-closed}).
\end{proof} 
\section{Proofs of main results}
\label{sec:mainproofs}
Our main results are general-purpose reductions from $\FGMC{q}$ to $\Shapley{q}$ for "$C$-hom-closed" queries $q$ verifying certain properties. Let us first give a general idea of the reductions we will use.

Given a "partitioned database" $\D=\Dn \dcup \Dx$ and a query $q$, our goal is to compute $\FGMC{q}$ on $\D$ in polynomial time with the help of an oracle for $\Shapley q$. 
The basic idea for doing this is to add a "minimal support" $S$ of $q$, and make every "fact" "exogenous" except for one, denoted by $\mu$. That is, $S = (\Dn[S], \Dx[S])$ where $\Dn[S] = \set \mu$.
If the query has the right properties, the arrival of $\mu$ will have no impact on the satisfaction of $q$ if, and only if, the set of players before $\mu$ forms a "generalized support for $q$ in" $\D$. 
In such case, the Shapley value $\Sh(\Dn \dcup  \Dn[S], \scorefun ,\mu)$ on $(\Dn 
\dcup \Dn[S], \Dx \dcup \Dx[S])$ will be an affine combination of the different counts of $\FGMC{q}$ on $\D$, and by building variants we obtain a system of linear equations that can be inverted to compute the values in $\FGMC q$ on $\D$.

However, for this technique to be successful, one must ensure that no "minimal support" of $q$ can intersect both $\D$ and $S$, otherwise there could be a subset of $\Dn \cup \Dn[S]$ which is a "generalized support for $q$ in" $(\Dn \dcup \Dn[S], \Dx \dcup \Dx[S])$, but not in $\D$.

\siderev{Restructured subsections: one subsection per key lemma.}

\changed{We will first show in \Cref{ssec:main-proof} the reduction of \Cref{lem:leak}, which is the most involved 
as different queries are used for the two problems ($\FGMC{q}$ and $\Shapley{q \land q'}$). 
We then show in \Cref{ssec:proof-pseudoconn} that a very similar construction works when assuming the hypotheses of \Cref{lem:pseudoconn} instead
. We finally show the reduction of \Cref{lem:pseudoconn} in \Cref{ssec:lem:decompoable-proof}, which is somewhat different in spirit.}
\subsection{Proof for \Cref{lem:leak}}\label{ssec:main-proof}
\changed{As a first step, we observe that we can restrict our attention to databases with certain properties:}
\begin{claimrep}\label{claim:wlog}
	\siderev{Turned ``"wlog" assumptions'' into formal statements, as suggested.}
	Let $q$, $q'$, $\aC$, $\aC'$, $S$ and $S'$ satisfy the hypotheses of \Cref{lem:leak}. 
	If there exists a polynomial-time algorithm using unit-cost calls to $\Shapley{q \land q'}$ that computes $\FGMC{q}$ under the assumption that the input database $\D$ satisfies:
	
	\begin{enumerate}
		\item\label{cas5.1} $\Dx \not\models q$,
		\item\label{cas5.2} $\const(\D) \cap \const(S')\subseteq \aC$, and
		\item\label{cas5.3} $\D\cap S' = \emptyset$,
	\end{enumerate}
	then $\FGMC{q} \polyrx \Shapley{q \land q'}$.
\end{claimrep}
\begin{proofsketch}
	Take an arbitrary $\D$. If $\Dx \models q$, then every subset of $\Dn$ is a "generalized support" for $q$, hence counting is trivial. Otherwise, we can $\aC$-isomorphically rename $\D$ to satisfy (\ref{cas5.2}) and the remaining facts in $\D\cap S'$ are "irrelevant to" $q$ because of \Cref{lem:leak}.\eqref{leak:2c} so they can be removed from $\D$ without consequence.
\end{proofsketch}
\begin{appendixproof}
	Let $q, q'$ satisfy the hypotheses of \Cref{lem:leak}, and let $\D$ be any "partitioned database" on which we wish to compute $\FGMC{q}$. If $\Dx \models q$, then every subset of $\Dn$ is a "generalized support" for $q$, therefore counting is trivial, hence \eqref{cas5.1}.
	
	Let $S'$ be the "minimal support" of $q'$ given by the hypotheses. We $C$-isomorphically rename $\D$ so that all constants in $\const(\D) \cap \const(S') \setminus \aC$ are replaced by fresh ones. Since $q$ is $\aC$-hom-closed, this new database is no identical as far as $\FGMC{q}$ is concerned, and it satisfies \eqref{cas5.2}.
	
	At this stage, all "facts" of  $\D\cap  S'$ are "irrelevant to" $q$ due to \Cref{lem:leak}.\eqref{leak:2c}. To obtain \Cref{claim:wlog}.\eqref{cas5.3} it only remains to show that such facts can be ignored without impacting $\FGMC{q}$. To see why, let $\alpha \in \D\cap  S'$. Then from $\FGMC{q}(\D\setminus\{\alpha\})$, one may use \Cref{prop:reductions-problems}.\eqref{cas1.1} to compute $\SPPQE{q}((\D\setminus\{\alpha\})^p)$ for any $p\in (0,1)$, where $(\D\setminus\{\alpha\})^p$ denotes the probabilistic version of $\D\setminus\{\alpha\}$ where all "endogenous" (resp.\ "exogenous") "facts" have probability $p$ (resp.\ probability $1$). 
	Since $\alpha$ is "irrelevant to" $q$, $\SPPQE{q}((\D\setminus\{\alpha\})^p)=\SPPQE{q}(\D^p)$, 
	and from there we can use again the equivalence of \Cref{prop:reductions-problems}.\eqref{cas1.1} to obtain $\FGMC{q}(\D)$.
\end{appendixproof}


For the rest of the proof we shall fix some $\D$ which satisfies \Cref{claim:wlog}. To build our reduction, we shall first ``complete'' $\D$ into $\D'$ by adding facts in such a way that $\FGMC{q}(\D)=\FGMC{q \land q'}(\D')$. Then we shall ``duplicate'' part of a "minimal support" of $q$ to compute $\FGMC{q \land q'}(\D')$ using an
$\Shapley{q \land q'}$ oracle.

\begin{claimrep}[\textit{completion}]\label{claim:completion}
	\siderev{Turned ``"wlog" assumptions'' into formal statements, as suggested.}
	Take $\D' \defeq \D \uplus  S'$, where $\Dx[\D'] = \Dx \dcup S'$ ("ie", all "facts" of $S'$ are "exogenous"). Then for every input size $j$, 
	$\FGMC{q}(\D,j)=\FGMC{q \land q'}(\D',j)$.
\end{claimrep} 
\begin{proofsketch}
	We observe that $\D$ and $\D'$ have the same "endogenous" "facts", then show that the "generalized supports" for $q$ in $\D$ and those for $q \land q'$ in $\D'$ are precisely the same sets.
\end{proofsketch}
\begin{appendixproof}
We assume $\D$ satisfies \Cref{claim:wlog}. Observe that $\D$ and $\D'$ have the same "endogenous" "facts". We now show that the "generalized support" for $q$ in $\D$ and those for $q \land q'$ in $\D'$ are precisely the same sets.

For the left-to-right inclusion, let $B \subseteq \Dn$ be a "generalized support" for $q$ in $\D$ ("ie", $B\cup\Dx\models q$). Since $S'\models q'$ and $q'$ is "$C'$-hom-closed", it follows that $B\cup\Dx[\D'] = B \cup \Dx \cup  S'\models q \land q'$.

To show the right-to-left inclusion, suppose towards a contradiction that we have a set $B \subseteq \Dn$ such that (i) $B\cup \Dx[\D']\models q \land q'$, and (ii) $B\cup\Dx\not\models q$. In particular, (i) implies that there exists a "minimal support" $S$ of $q$ in $B\cup\Dx[\D']$. 
Since we have assumed that $B\cup\Dx\not\models q$, and $S'\not\models q$ by \Cref{lem:leak}.\eqref{leak:2a}, this means that $S$ must intersect both $\D$ and $S'$. We can thus pick "facts" $\alpha \in S\cap\D$ and $\alpha' \in S\cap  S'$.  Since $q$ is "variable-connected" (\Cref{lem:leak}.\eqref{leak:1}), $S$ is the $\aC$-"homomorphic image@$C$-homomorphism" of some "minimal support" $S^\dagger$ in which every atom is connected to every other via constants outside of $\aC$.
In other words, there is a path $P$ in the "incidence graph" $\incgraph{S}$ linking $\alpha$ and $\alpha'$ only via "atoms" and "constants" that are the homomorphic image of a "constant" from $\const(S^\dagger)\setminus \aC$. However, since the only vertices that $\incgraph{\D}$ and $\incgraph{S'}$ have in common are the "constants" in $\aC$, the path $P$ must go through one of them, which contradicts \Cref{lem:leak}.\eqref{leak:2b} because it would witness a $q$-"leak" in $S'$.
\end{appendixproof}

\begin{claimrep}[\textit{duplication}]\label{claim:duplication}
\siderev{Turned ``"wlog" assumptions'' into formal statements, as suggested.}
	There exists a family $(S^k)_{k\in\IN}$ of distinct $\aC$-isomorphic sets of facts, and a set $S^-$ disjoint with every $S^k$ such that every $S^k\uplus S^-$ is a $q$-"leak"-free "support" of $q$ in which every atom is connected to every other by some constant outside of $\aC$, and that shares no constant with $\D'$ or $\aC\cup \aC'$, except for those in $\aC$.
\end{claimrep}
\begin{proofsketch}
	We can take $S$, the $q$-"leak"-free "support" of $q$ given by \Cref{lem:leak}.\eqref{leak:3}, tweak it a little using \Cref{lem:leak}.\eqref{leak:1} to avoid shared constants, then partition it into $S^0\uplus S^-$ such that $S^0$ contains at least one constant $a$ outside of $\aC$ to allow for infinitely many $\aC$-isomorphic copies.
	%
\end{proofsketch}
\begin{appendixproof}
Take $S$ to be the $q$-"leak"-free "support" of $q$ given by \Cref{lem:leak}.\eqref{leak:3}. Since $q$ is "variable-connected" by \Cref{lem:leak}.\eqref{leak:1}, $S$ is the $\aC$-homomorphic image of some "minimal support" $S^\dagger$ in which every atom is connected to every other by some constant outside of $\aC$. There cannot be any $q$-"leak" in $S^\dagger$ either, otherwise the homomorphism would transfer it to $S$. By replacing $S$ with $S^\dagger$ we can therefore assume "wlog" that it satisfies this extra property, and by $\aC$-isomorphically renaming it we can further assume that it shares no constant with $\D'$ or $\aC\cup \aC'$, except for those in $\aC$.

Finally, we take any constant $a \in \const(S) \setminus (C \cup C')$ and partition $S$ into the set $S^0$ of facts that contain $a$ and the set $S^-$ of facts that do not. We then build the rest of the desired family $(S^k)_{k\in\IN}$ of distinct $\aC$-isomorphic copies of $S^0$ by replacing $a$ with fresh constants. %
\end{appendixproof}

Now that we have built $\D'$, $(S^k)_{k\in\IN}$ and $S^-$ in \Cref{claim:completion,claim:duplication}, we are ready to describe the reduction. Fix a distinguished fact $\mu$ of $S^0$ 
and for every $k$ denote by $\mu^{k}$ the corresponding fact in~$S^{k}$. 
\AP
Consider the database
\[\intro*\Ai\defeq \D' \cup S^0\cup S^{1}\cup\dots\cup S^{i}\cup S^-\]
depicted in \Cref{fig:redx}.\siderev{Improved figure and caption for clarity.} Now we partition $\Ai$ into "endogenous" and "exogenous" "facts". The "endogenous" facts consist of: 
\begin{itemize}
	\item the "endogenous" facts in $\D'$, "ie" $\Dn$ (recall that $S' \subseteq \Dx[\D']$)
	\item the "fact" $\mu$ and each of its copies $\mu^k$, 
	\item all "facts" in $S^-$.
\end{itemize}
All remaining "facts" of $\Ai$ are "exogenous". 

Note that we deviate slightly from the proof idea by only duplicating a part of S. This is so that, aside from $S'$, the only "exogenous" "facts" that we add in our construction are those in $S^0\setminus\{\mu\}$ and their copies. In particular none is added if $S'=\emptyset$ and $S^0$ is a singleton. This will prove important later in \Cref{sec:endogenous} when we consider purely endogenous databases.

\begin{figure}[t]
	\centering
	\definecolor{darkred}{rgb}{.85, 0, 0}
\definecolor{darkblue}{rgb}{0, 0, .9}

\begin{tikzpicture}{scale=.8}
	\begin{pgfonlayer}{nodelayer}
		\node [color=darkred,draw, circle] (0) at (0.75, 0.5) {\small$a$};
		\node [color=darkblue,draw, circle] (2) at (0.75, -0.5) {\small\phantom{$a$}};
		\node [] (3) at (0.75, 0) {\dots};
		\node [color=darkred] (4) at (2.5, 0.4) {$\times$};
		\node [color=darkred] (5) at (2.5, -0.4) {$\times$};
		\node [] (6) at (2, 0.75) {};
		\node [] (7) at (2, -0.75) {};
		\node [] (8) at (4, -0.75) {};
		\node [] (9) at (4, 0.75) {};
		\node [color=darkred] (10) at (3.25, 0) {$S^-$};
		\node [] (11) at (-0.25, 1) {};
		\node [] (12) at (5, 1) {};
		\node [] (13) at (4, 1) {};
		\node [] (14) at (-1.5, 0.5) {};
		\node [] (15) at (-1.5, 1.25) {};
		\node [] (16) at (-3, 1.25) {};
		\node [] (17) at (-3, 0.5) {};
		\node [] (18) at (-1.5, -0.75) {};
		\node [] (19) at (-1.5, 0) {};
		\node [] (20) at (-3, 0) {};
		\node [] (21) at (-3, -0.75) {};
		\node [] (22) at (-3.25, -0.75) {};
		\node [] (23) at (-3.25, 1.25) {};
		\node [] (24) at (-2.25, 0.875) {$\D$};
		\node [] (25) at (-2.25, -0.375) {$S'$};
		\node [] (26) at (-.15, 0.85) {};
		\node [] (27) at (-.15, 0.15) {};
		\node [] (28) at (-.15, -0.15) {};
		\node [] (29) at (-.15, -0.85) {};
	\end{pgfonlayer}
	\begin{pgfonlayer}{edgelayer}
		\draw [color=darkred,dashed] (6.center) to (7.center);
		\draw [color=darkred,dashed] (7.center) to (8.center);
		\draw [color=darkred,dashed] (8.center) to (9.center);
		\draw [color=darkred,dashed] (9.center) to (6.center);
		\draw [color=darkred, very thick,->, >= stealth, in=-135, out=135, loop] (0) to ();
		\draw [color=darkred, very thick,<-, >= stealth] (0) to (5.center);
		\draw [color=darkred,->, >= stealth] (0) to node[near start,above=-.1] {$\mu$} (4.center);
		\draw [color=darkred,decorate,decoration={brace,amplitude=5pt,raise=0.5pt}] (11.center) to node[midway,above=.1] {$S$} (13.center);
		\draw [color=darkred,decorate,decoration={brace,amplitude=5pt,raise=0.5pt}] (27.center) to node[midway,left=.1] {$S^0$} (26.center);
		\draw [color=darkblue,decorate,decoration={brace,amplitude=5pt,raise=0.5pt}] (29.center) to node[midway,left=.1] {$S^i$} (28.center);
		\draw [decorate,decoration={brace,amplitude=5pt,raise=0.5pt}] (22.center) to node[midway,left=.1] {$\D’$} (23.center);
		\draw [color=darkblue, very thick,->, >= stealth, in=-135, out=135, loop] (2) to ();
		\draw [color=darkblue, very thick,<-, >= stealth] (2) to (5.center);
		\draw [color=darkblue,->, >= stealth] (2) to node[very near start,above=-.1] {$\mu^i$} (4.center);
		\draw [dashed, very thick] (21.center) to (18.center);
		\draw [dashed, very thick] (18.center) to (19.center);
		\draw [dashed, very thick] (19.center) to (20.center);
		\draw [dashed, very thick] (20.center) to (21.center);
		\draw [dashed] (17.center) to (14.center);
		\draw [dashed] (14.center) to (15.center);
		\draw [dashed] (15.center) to (16.center);
		\draw [dashed] (16.center) to (17.center);
	\end{pgfonlayer}
\end{tikzpicture}
	\caption{\changed{Illustration of the construction of $\Ai$ on a "graph database", where "constants" and "facts" are depicted by vertices and arrows, respectively.
	Graphically disconnected parts do not share any constant except for those that appear in $\aC$. New "exogenous" facts are thick arrows.}}
	\label{fig:redx}
\end{figure}
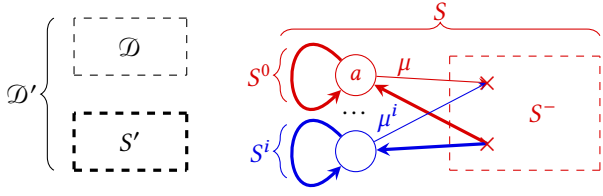

We now use a $\Shapley{q \land q'}$ oracle to compute the Shapley value of $\mu$ in the "partitioned database" 
$\Ai$. This value is given by the formula:
\[ 
\Sh(\Ain,\scorefun[],\mu) = \sum_{B\subseteq \Ain\setminus\{\mu\}} \frac{|B|!(|\Ain|-|B|-1)!}{|\Ain|!} (\scorefun[](B\cup\{\mu\}) - \scorefun[](B)),
\]
where we abbreviate $\scorefun[q\land q']$ as $\scorefun[]$.

We aim to express $\mathrm{Sh}(\Ain,\scorefun[],\mu)$ as a linear combination (plus constant) of the different numbers $GSMC_j$ of "generalized supports for" $q \land q'$ in $\D'$ of size $j$, "ie" $\FGMC{q\land q'}(\D',j)$ which by \Cref{claim:completion} equals $\FGMC{q}(\D,j)$, the final value we wish to compute. To do so, we observe that our construction satisfies the following lemma.

\begin{restatable}{lemma}{lempcnx}\label{lem:pcnx}
	$(\scorefun[](B\cup\{\mu\}) - \scorefun[](B))=0$ 
	iff one of the following three mutually exclusive cases holds:
	\begin{enumerate}
		\item\label{cas2.1} $\exists k\in [i]. \mu^{k} \in B$,
		\item\label{cas2.2} $\forall k\in [i]. \mu^{k} \notin B
		~~~\land~~~
		\exists \alpha\in S^-. ~ \alpha \notin B$,
		\item\label{cas2.3} 
		$\forall k \in [i]. \mu^{k} \notin B
		~~~\land~~~
		\forall \alpha \in S^-. ~ \alpha\in B
		~~~\land~~~(B\cap\Dn) \cup \Dx \models q$.
	\end{enumerate}
\end{restatable}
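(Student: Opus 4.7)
The starting observation is that, since $q'$ is $\aC'$-"hom-closed" and $S' \subseteq \Aix$, both $B \cup \Aix$ and $B \cup \{\mu\} \cup \Aix$ automatically satisfy $q'$. Hence $\scorefun[](B \cup \{\mu\}) - \scorefun[](B)$ equals $[B \cup \{\mu\} \cup \Aix \models q] - [B \cup \Aix \models q]$, which by monotonicity of the "hom-closed" $q$ lies in $\{0,1\}$ and vanishes exactly when the two truth values coincide. The plan is to establish (i) each of the three cases implies the difference is $0$, and (ii) if none of the three cases holds, the difference is $1$; together these yield the ``iff''.

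For the easy cases, Case~3 is immediate since $(B \cap \Dn) \cup \Dx \subseteq B \cup \Aix$ already forces $B \cup \Aix \models q$, and hence $B \cup \{\mu\} \cup \Aix \models q$ by monotonicity. For Case~1, I will extend the $\aC$-isomorphism $\phi_k : S^0 \to S^k$ (sending $a \mapsto a_k$) by the identity outside $S^0$ to a $\aC$-"homomorphism" $h_k : \Ai \to \Ai$---well-defined since $a$ appears only in $S^0$---and check that $h_k(B \cup \{\mu\} \cup \Aix) \subseteq B \cup \Aix$ (using $\mu^k \in B$ and $S^k \setminus \{\mu^k\} \subseteq \Aix$); $\aC$-hom-closure of $q$ then transfers satisfaction from $B \cup \{\mu\} \cup \Aix$ to $B \cup \Aix$.

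For Case~2, assuming for contradiction a "minimal support" $M \subseteq B \cup \{\mu\} \cup \Aix$ that contains $\mu$, I would argue as follows. First, a structural argument parallel to \Cref{claim:completion}, using "variable-connectedness" of $q$ together with the absence of $q$-"leaks" in $S^0 \cup S^-$ (a property that transfers to each $\aC$-isomorphic copy $S^k \cup S^-$), forces $M$ to lie inside the support region $S^0 \cup S^1 \cup \dotsb \cup S^i \cup S^-$. Then the collapse hom $h : a_k \mapsto a$ (for $k \geq 1$), identity elsewhere, maps $M$ into $S^0 \cup S^-$; since $h(M)$ is a "support" of $q$ and $S^0 \cup S^-$ is a "minimal support", $h(M) = S^0 \cup S^-$, so $\alpha^* \in h(M)$. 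As $h$ fixes $S^-$ pointwise, this forces $\alpha^* \in M$, contradicting $\alpha^* \notin B$, $\alpha^* \neq \mu$, and $\alpha^* \notin \Aix$. Consequently no minimal support uses $\mu$, and the two satisfaction values coincide. For the converse direction, when none of the three cases holds, the inclusion $S^0 \cup S^- \subseteq B \cup \{\mu\} \cup \Aix$ is immediate (via $\mu$, $S^0\setminus\{\mu\}\subseteq \Aix$, and $S^-\subseteq B$) and gives $B \cup \{\mu\} \cup \Aix \models q$; whereas for $B \cup \Aix \not\models q$, I would collapse any hypothetical "minimal support" $M \subseteq B \cup \Aix$ via $h$ to obtain $h(M) \subseteq D^\star \cup S' \cup (S^0 \cup S^-) \setminus \{\mu\}$ with $D^\star \defeq (B \cap \Dn) \cup \Dx$, and then show via the no-"leak" arguments (both between $\D'$ and the support region, and between $\D$ and $S'$ within $\D'$) that any minimal sub-support must lie in one of $D^\star$, $S'$, or $(S^0 \cup S^-) \setminus \{\mu\}$; these are excluded respectively by the case hypothesis, by \Cref{lem:leak}.\eqref{leak:2a}, and by minimality of $S^0 \cup S^-$.

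The main technical obstacle is the structural analysis of minimal supports across the ``components'' of $\Ai$: turning "variable-connectedness" of $q$ into a non-$\aC$-connected preimage of $M$, then showing that any crossing between components through a $\aC$-constant would force a $q$-"leak" on whichever side has been arranged to be leak-free. Once this analysis is in place, the rest of the argument reduces to a clean use of the collapse hom $a_k \mapsto a$ together with the minimality of $S^0 \cup S^-$.
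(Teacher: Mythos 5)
Your proposal is correct and follows the same strategy as the paper: show each case forces the difference to be $0$ (for Case~1 via the $\aC$-homomorphism $a \mapsto a_k$; for Case~3 trivially by monotonicity), and then, when no case holds, show $B \cup \{\mu\} \cup \Aix \models q$ (via $S = S^0 \cup S^- \subseteq B \cup \{\mu\} \cup \Aix$) but $B \cup \Aix \not\models q$ (by classifying a hypothetical minimal support across the parts $D^\star$, $S'$, and the support region, excluding each via the case hypotheses, and excluding crossings via the observation that a crossing through an $\aC$-constant produces a $q$-leak on both sides of the crossing, contradicting leak-freeness of $S'$ or of $S$ and its copies). The paper's proof is terser and leaves the ``each case $\Rightarrow$ difference $0$'' direction as intuition, and its case analysis of where the minimal support lives is stated directly over $\Ai$ without first simplifying; your refinement of applying the collapse homomorphism $a_k \mapsto a$ before analyzing components is a genuine (if minor) improvement: it eliminates the need to reason separately about supports that straddle several $S^k$'s while remaining inside the support region (a case the paper glosses over) and turns the minimality argument into a clean equality $h(M) = S$, from which $S^- \subseteq M$ follows because $h$ fixes $S^-$ pointwise. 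This is the same key idea packaged more carefully, not a different route.
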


\begin{proof}
	Case (\ref{cas2.1}) is when there is already a copied fact in $B$ that makes $\mu$ irrelevant. Case (\ref{cas2.2}) is when, although $\mu$ has not been made redundant, there is some other fact of $S^-$ missing to satisfy $q$. Case (\ref{cas2.3}) is when the arrival of $\mu$ could have made an impact but $q$ was already satisfied by $(B\cap\Dn) \cup \Dx$. Recall that we only need to consider $q$ because the $q'$ component is always satisfied by the "exogenous" $S'$. In all other cases: (i) $B\cup\{\mu\}\cup\Aix\models q\land q'$ 
	because the presence of $S^-\cup\{\mu\}$ suffices to satisfy $q$, and (ii) $B\cup\Aix \not \models q$ (hence $B\cup\Aix \nvDash q\land q'$) 
	because of the following observations. Assume for a contradiction that there is a "minimal support" of $q$ in $B\cup\Aix$. It cannot be $S$ because $\mu\notin B$, nor one of its copies otherwise case (\ref{cas2.1}) would be met, nor can it be contained in $  S'$ because $  S'\not\models q$ by \Cref{lem:leak}.\eqref{leak:2a}, nor in $\D$ otherwise case (\ref{cas2.3}) would be met. The only remaining possibility is that the "minimal support" intersects several of the three parts, but since they have no constant in common outside of $\aC$ and $q$ is "variable-connected", this witnesses a $q$-"leak", contradicting \Cref{lem:leak}.\eqref{leak:2b} or \Cref{claim:duplication}.
\end{proof}

We finally show that \Cref{lem:pcnx} suffices to conclude. 
Consider the complement of the formula above, that is:
\[ 
1-\Sh(\Ain,\scorefun[],\mu) = \sum_{\substack{B\subseteq \Ain\setminus\{\mu\}, \\ B \text{ verifies (\ref{cas2.1}), (\ref{cas2.2}) or (\ref{cas2.3})}}} \frac{|B|!(|\Ain|-|B|-1)!}{|\Ain|!}
\]
It suffices to enumerate these cases since in all others $(\scorefun[](B\cup\{\mu\}) - \scorefun[](B))=1$ by \Cref{lem:pcnx}. \siderev{Improved formulation for clarity.}
\changed{Moreover, the contribution $Z$ of the summation terms corresponding to subsets $B$ satisfying case (\ref{cas2.1}) or (\ref{cas2.2}) can be easily computed in polynomial time, and then subtracted out.}
We are thus left with the problem of computing
\[ 
	1-\Sh(\Ain,\scorefun[],\mu) - Z = Sh^{i} \defeq \hspace{-1em} \sum_{B\subseteq \Ain\setminus\{\mu\}, B \text{ verifies (\ref{cas2.3})}} \hspace{-1em} \frac{|B|!(|\Ain|-|B|-1)!}{|\Ain|!}.
\]
Since the $B$ sets in the summation do not contain $\mu$ nor any $\mu^{k}$ and contain $S^-$, they are exactly the sets of the form $G\cup S^-$ where $G$ is a "generalized support for" $q$ in $\D$. Grouping the $G$'s by size and simplifying, one obtains the following formula.

\begin{align*}
	Sh^{i} 
			&= \sum_{j=0}^{|\Dn|} \frac{(j+|S^-|)!(|\Dn|+i-j)!}{(|\Dn|+i+|S^-|+1)!}\cdot GSMC_j
\end{align*}

Finally, by performing this computation for $i$ ranging from 0 to $|\Dn|$, we get a system of equations linking the $Sh^{i}$ to the $GSMC_j$ whose underlying matrix can be reduced (multiplying every line by $(|\Dn|+i+|S^-|+1)!$, dividing every column by $(j+|S^-|)!$ then reversing the column order) to the matrix of general term $(i+j)!$ which by \cite[proof of Theorem 1.1]{bacherDeterminantsMatricesRelated2002} is invertible.

\subsection{Proof of \Cref{lem:pseudoconn}}
\label{ssec:proof-pseudoconn}
This proof\siderev{New subsection to discuss adaptation for Lemma 4.1} is very similar to the one for \Cref{lem:leak}, in that we build a similar instance and show it satisfies \Cref{lem:pcnx}, which suffices to conclude.
Letting  $q_{pc}$ be the input pseudo-connected query, we shall follow the construction from \Cref{ssec:main-proof}, but using 
$q\defeq q_{pc}$ (and $\aC_{pc}$ for $\aC$), $q'\defeq\top$ ("ie", the always-true query), and letting $S$ be an "island" "minimal support" for $q_{pc}$ 
containing some $a\in \Const(S)\setminus\aC$. An isomorphic renaming ensures that no constant is shared between $S$ and $\D$ besides those in $C$. 
We then observe that if any of the conditions of \Cref{lem:pcnx} is met, then $\mu$ is redundant for the exact same reason as before, and otherwise the only way of having $(\scorefun[](B\cup\{\mu\}) - \scorefun[](B))=0$ is to have a minimal support of $q$ that intersects both $B$ and $S$ or one of its copies. However, this is precisely forbidden by the fact that $S$ is an "island support" for $q$.

\subsection{Proof for \Cref{lem:decomposable}}
\label{ssec:lem:decompoable-proof}

\AP
Let $q$ be "decomposable into" $q_1\land q_2$ and $\D$ be an input "partitioned database". Recall that $\FGMC{q}\polyeq\SPPQE{q}$ by \Cref{prop:reductions-problems}. 
By definition, no fact can be "relevant to" both $q_1$ and $q_2$, hence $\D$ can be partitioned into $\D^1\uplus\D^2$ such that no fact of $\D^1$ (resp.\ $\D^2$) is "relevant to" $q_{2}$ (resp.\ $q_1$). 
Therefore, the $\SPPQE{q}$ problem over $\D$ can be solved by computing separately the $\SPPQE{q_i}$ over $\D_i$, then multiplying the probabilities together. Using \Cref{prop:reductions-problems} in the other direction, we can obtain the latter values by computing $\FGMC{q_1}$ (resp.\ $\FGMC{q_2}$) over $\D_1$ (resp.\ $\D_2$). Due to symmetry, it suffices to show how to compute $\FGMC{q_1}$ over $\D_1$. 

To this end, we build the same reduction as in \Cref{ssec:main-proof} with $S'\defeq \emptyset$ except that we use any "minimal support" of $q_2$ as $S$. The reduction will be successful because $\mu$ can only contribute if $B\cup\Dx{}\models q_1$, and the facts in $S'$ are "irrelevant to" $q_1$ since they are "relevant to" $q_2$. The detailed construction goes as follows.

Let $S$ be a "minimal support" of $q_2$ that contains at least one constant $a$ outside of $C$. Similarly to what has been done in the proof of \Cref{lem:leak,lem:pseudoconn}, we build the parametrized database $A^{i}\defeq \D\cup S^0\cup S^{1}\cup\dots\cup S^{i}\cup S^-$ from the subset $ S^0\subseteq  S$ of facts that contain $a$, and use the oracle to compute the same Shapley value of some distinguished $\mu\in S^0$, given by the same formula:
\[ 
\Sh(A^{i}_{\mathsf{n}},\scorefun,\mu) = \sum_{B\subseteq A^{i}_{\mathsf{n}}\setminus\{\mu\}} \frac{|B|!(|A^{i}_{\mathsf{n}}|-|B|-1)!}{|A^{i}_{\mathsf{n}}|!} (\scorefun(B\cup\{\mu\}) - \scorefun(B)) 
\]

Now we express $\mathrm{Sh}(A^{i}_{\mathsf{n}},\scorefun,\mu)$ as a linear combination (plus constant) of the different numbers $FGMC_j$ of "generalized supports for" $q_1$ in $\D^1$ with size $j$. For simplicity we will rather express it with their complement $\overline{FGMC_j}$ (number of subsets of size $j$ that are \textit{not} "generalized supports") from which we can as easily obtain the value $\FGMC{q_1}(\D^1)$ we want. Indeed, we can observe that, similarly to 
 \Cref{lem:pcnx}, $(\scorefun(B\cup\{\mu\}) - \scorefun(B))=0$ in the following three mutually exclusive cases:
\begin{enumerate}
	\item\label{cas3.1} $\exists k\in [i]. \mu^{k} \in B$
	\item\label{cas3.2} $\forall k\in [i]. \mu^{k} \notin B
	~~~\land~~~
	\exists \alpha\in S^-. \alpha \notin B$
	\item\label{cas3.3} $\forall k \in [i] ~ \mu^{k} \notin B
	~~~\land~~~
	\forall \alpha \in S^- ~ \alpha\in B
	~~~\land~~~ (B\cap\Dn[\D^{1}]) \cup \Dx[\D^{1}] \nvDash q_1 
	$
\end{enumerate}
In the first case, $B$ contains a fact $\mu^{k}$ which completes the corresponding $ S^{k}$ and makes $\mu$ irrelevant in satisfying its component $q_2$ of the query. In the second case, $S^-$ is incomplete, meaning $\mu$ cannot help to satisfy $q_2$. In the last case $\mu$ cannot contribute because the $q_1$ subquery isn't satisfied.

Note that in all other cases the addition of $\mu$ makes $q_2$ and $q$ satisfied where they previously weren't: $q$ is satisfied because of the respective models $(B\cap\Dn[\D^{1}]) \cup \Dx[\D^{1}]$ and $S$ of $q_1$ and $q_2$, and $q_2$ previously wasn't because $S$ is a  "minimal@support" and the facts of $\D^1$ are "irrelevant to" $q_2$. \\

By taking the complement and removing a constant term as before, we are left with:
\[ 
Sh^{i} \defeq \sum_{B\subseteq A^{i}_{\mathsf{n}}\setminus\{\mu\},\, B \text{ verifies (\ref{cas3.3})}} \frac{|B|!(|A^{i}_{\mathsf{n}}|-|B|-1)!}{|A^{i}_{\mathsf{n}}|!}.
\]
Since the $B$ sets do not contain $\mu$ nor any $\mu^{k}$, they are exactly the sets that are \emph{not} "generalized supports" of $q_1$ in $\D^1$. Grouping them by size, one obtains the following formula.
\[Sh^{i} = \sum_{j=0}^{|\Dn[\D^{1}]|} \frac{j!(|A^{i}_{\mathsf{n}}|-j-1)!}{|A^{i}_{\mathsf{n}}|!}\cdot \overline{FGMC_j}
= \sum_{j=0}^{|\Dn[\D^{1}]|} \frac{j!(|\Dn[\D^{1}]|+i-j)!}{(|\Dn[\D^{1}]|+i+1)!}\cdot \overline{FGMC_j}\]
%
\noindent We can 
conclude by the same 
linear algebra argument as in the proof of \Cref{lem:leak,lem:pseudoconn}. 
\section{Extensions and variants}
\label{sec:extensions}
This section explores 
variants of the considered problems, namely: 
computing the "Shapley value" on databases without "exogenous" "facts" (\S \ref{sec:endogenous}), 
extending the reductions to queries with negation (\S \ref{sec:negation}),
computing the maximum "Shapley value" (\S \ref{sec:maximum}), and
initiating the study of the "Shapley value" of "constants" rather than "facts" (\S \ref{sec:Shapley-constants}).
\subsection{Purely Endogenous Databases}
\label{sec:endogenous}
\begin{toappendix}
    \subsection{Purely Endogenous Databases}
    \label{app:endogenous}
\end{toappendix}
Having the possibility of fixing some "exogenous" facts makes the $\Shapley{}$ problem more general and flexible. 
However, often we will be presented with an unpartitioned database, 
in which case all "facts" should be treated as "endogenous" and taken as players in the Shapley game. 
Unfortunately, $\Shapley{}$ hardness results crucially rely on the existence of "exogenous" facts, and whether similar hardness results hold for
purely endogenous databases
is unclear. While we do not resolve this challenging question here, we make steps towards understanding
how $\Shapleyn q$, the restriction of $\Shapley q$ to  "partitioned databases" of the form $\D = (\Dn, \emptyset)$, 
is related to other problems. 


Let us start by observing that the proof of $\Shapley q \polyrx \FGMC q$ in \Cref{prop:reductions-problems} adds an "exogenous" "fact" to the input database, meaning it cannot prove $\Shapleyn q \polyrx \FMC q$ immediately. However, the following lemma shows that a constant number of "exogenous" "facts" is not a problem under "polynomial-time Turing-reductions".


\begin{lemmarep}\label{lem:FGMC-MC}
	Fix $k \geq 1$ and consider a query $q$ and instance $\D = (\Dn,\Dx)$ of $\FGMC q$ with $|\Dx|=k$. Then $\FGMC q$ on $\D$ can be solved by a polynomial-time algorithm that performs $2^k$ calls to an oracle for $\FMC q$.
\end{lemmarep}
\begin{appendixproof}
	Denote by $\FMC{q}^{(k)}$ the "fixed-size generalized model counting problem" restricted to inputs that contain at most $k$ "exogenous" "facts". For $k>0$, let $\D = (\Dn, \Dx)$ be an instance of $\FGMC q$, with $|\Dx|=k$. Take any "fact" $\alpha\in\Dx$. Then, the "generalized supports" of size $j$ in $\D$ are exactly the "generalized supports" of size $j+1$ in $(\Dn \cup \set \alpha,\Dx \setminus \set \alpha)$ that contain $\alpha$, which leads to the following formula:
	\begin{align*}
		\mathrm{FMC}^{(k)}_j(q)(\D_{\mathsf{n}},\D_{\mathsf{x}}) &=
		\mathrm{FMC}^{(k-1)}_{j+1}(q)(\D_{\mathsf{n}}\cup\{\alpha\},\D_{\mathsf{x}}\setminus\{\alpha\})\\
		&\hspace{3mm}-\mathrm{FMC}^{(k-1)}_{j+1}(q)(\D_{\mathsf{n}},\D_{\mathsf{x}}\setminus\{\alpha\})
	\end{align*}

	By recursively applying this formula we get the value we want after $2^k$ calls to $\FMC q$.
\end{appendixproof}


\begin{corollary}[of \Cref{lem:FGMC-MC} and the proof of \Cref{prop:reductions-problems}]\label{cor:shn-tract}
    For every query $q$, it holds that $\Shapleyn q \polyrx \FMC q$.
\end{corollary}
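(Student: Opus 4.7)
The plan is to revisit the reduction $\Shapley q \polyrx \FGMC q$ established in the proof of \Cref{prop:reductions-problems}\eqref{cas1.3} and observe that, when the input is a purely endogenous database, it only produces $\FGMC q$ queries whose exogenous part has constant size, which can then be handled via \Cref{lem:FGMC-MC}.

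Concretely, given an input $\D=(\Dn,\emptyset)$ of $\Shapleyn q$ with distinguished fact $\mu \in \Dn$, I would instantiate the identity from the proof of \Cref{prop:reductions-problems}\eqref{cas1.3}, namely
\[
\Sh(\Dn, \scorefun, \mu) = \sum_{j=0}^{|\Dn|} C_j \left[\mathrm{FGMC}_j(q)(\Dn\setminus\{\mu\},\{\mu\}) - \mathrm{FGMC}_j(q)(\Dn\setminus\{\mu\},\emptyset)\right],
\]
with $C_j \defeq j!(|\Dn|-j-1)!/|\Dn|!$. Observe that each of the $O(|\Dn|)$ values of $\mathrm{FGMC}_j(q)$ appearing on the right-hand side is evaluated on a partitioned database whose exogenous part has size at most $1$: the second term is already purely endogenous, hence directly an $\FMC q$ query; the first term has $|\Dx|=1$, and by \Cref{lem:FGMC-MC} (applied with $k=1$) it can be computed with $2$ calls to the $\FMC q$ oracle, in polynomial time.

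Putting the two steps together yields a polynomial-time algorithm for $\Shapleyn q$ that uses only unit-cost calls to $\FMC q$, which is precisely what $\Shapleyn q \polyrx \FMC q$ requires. There is no real obstacle here: the main subtlety is simply noticing that the reduction of \Cref{prop:reductions-problems}\eqref{cas1.3} introduces at most one new exogenous fact (namely $\mu$), so it stays within the regime handled by \Cref{lem:FGMC-MC} for a fixed $k$.
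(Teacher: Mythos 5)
Your proposal is correct and matches the intended argument: the corollary is stated as a consequence of \Cref{lem:FGMC-MC} and the proof of \Cref{prop:reductions-problems}, and you combine them exactly as the paper intends, noting that on a purely endogenous input the reduction of \Cref{prop:reductions-problems}\eqref{cas1.3} only ever produces $\FGMC q$ instances with $|\Dx| \le 1$, so \Cref{lem:FGMC-MC} with $k=1$ (together with the observation that the $|\Dx|=0$ case is already $\FMC q$) closes the gap.
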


We now turn to the reductions in the other direction, useful for showing hardness. By adding a hypothesis,
we can adapt 
\Cref{lem:pseudoconn} and \Cref{lem:decomposable} 
to the purely endogenous setting: 

\begin{lemma}[Adaptation of \Cref{lem:pseudoconn}]\label{lem:endo-p-con}
    Let $q$ be a "$\aC$-hom-closed" query that has an "island" "minimal support" $S$ and a "constant" $a \notin C$ appearing in exactly one "fact" of $S$. Then $\FMC{q}\polyrx \Shapleyn{q}$.
\end{lemma}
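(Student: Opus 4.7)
The plan is to replay the proof of \Cref{lem:pseudoconn} essentially verbatim, observing that under the new hypothesis the constructed instance is purely "endogenous", so the $\Shapleyn{q}$ oracle suffices in place of $\Shapley{q}$.

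Recall that the construction of \Cref{ssec:main-proof}, specialized to \Cref{lem:pseudoconn} with $q' = \top$ and $S' = \emptyset$, produces the database
$\Ai = \D \cup S^0 \cup S^1 \cup \dotsb \cup S^i \cup S^-$,
whose "endogenous" "facts" are $\Dn$, the distinguished fact $\mu$ together with its copies $\mu^1, \dotsc, \mu^i$, and all "facts" in $S^-$. All remaining facts---namely $\Dx$, $S'$, and $S^k \setminus \{\mu^k\}$ for $k = 0, \dotsc, i$---are declared "exogenous".

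My first step is to verify that under the present hypotheses these "exogenous" sets are all empty. Since the input comes from $\FMC{q}$, we have $\Dx = \emptyset$, and we have just set $S' = \emptyset$. Crucially, the new hypothesis states that the "constant" $a \notin \aC$ appears in exactly one "fact" of $S$. Choosing that distinguished "fact" as $\mu$ and defining $S^0$ as the subset of $S$ containing $a$ (and $S^-$ as its complement in $S$) gives $S^0 = \{\mu\}$, and therefore $S^k \setminus \{\mu^k\} = \emptyset$ for every $k$. Consequently the whole database $\Ai$ is purely "endogenous", and the $\Shapleyn{q}$ oracle may legitimately be invoked on it to obtain $\Sh(\Ain, \scorefun, \mu)$.

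The rest of the argument then transfers without modification. The analogue of \Cref{lem:pcnx} still holds because $S$ is an "island support", which is precisely what prevents a "minimal support" of $q$ from spanning $\D$ and any $S^k \cup S^-$ simultaneously; with $\Dx = \emptyset$, its case~(\ref{cas2.3}) reduces to $B \cap \Dn \models q$, which is exactly the support condition counted by $\FMC{q}$. Varying $i$ from $0$ to $|\Dn|$ then produces a linear system in the unknowns $\FMC{q}(\D, j)$ whose matrix is Vandermonde-type and hence invertible, as in \Cref{ssec:main-proof}. I do not foresee any genuine obstacle beyond this bookkeeping: the sole new ingredient is the hypothesis itself, which is engineered precisely to force $S^0$ to be a singleton so that the construction adds no "exogenous" "facts".
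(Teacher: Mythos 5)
Your proposal is correct and follows essentially the same approach as the paper's own proof: replay the construction behind \Cref{lem:pseudoconn}, choose the duplicated constant to be the unshared one $a$ so that $S^0$ is the singleton $\{\mu\}$, and observe that together with $\Dx = \emptyset$ and $S' = \emptyset$ the resulting instance $\Ai$ is purely endogenous, so the $\Shapleyn{q}$ oracle suffices. Your remark that the new hypothesis was engineered precisely to force $S^0$ to be a singleton is exactly the paper's stated rationale.
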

\color{black}
Note for instance that for any "connected" minimal "CQ" $q$ without constants and having a "variable" which is not part of a join, we obtain $\FMC{q} \polyrx \Shapley{q}$.
Due to space constraints, the adaptation of \Cref{lem:decomposable} is deferred to \Cref{app:endogenous}.
\begin{toappendix}
    We now wish to adapt \Cref{lem:pseudoconn,lem:decomposable} to purely "endogenous" databases. This shall be done by defining a notion of query with an unshared constant, which will depend on the specific case. A "$\aC$-hom-closed" query is said to be ""pseudo-connected with an unshared constant"" if it has an "island" "minimal support" $S$ and a "constant" $a \notin C$ appearing in exactly one "fact" of $S$. It is said to be ""decomposable with an unshared constant"" if it admits a "decomposition" $q_1\land q_2$ that satisfies one of the following:
    \begin{enumerate}
    	\item\label{cas4.1} for $i\in\{1,2\}$, $q_i$ admits a "minimal support" $S_i$ with a "constant" $a_i \notin C_i$ appearing in exactly one "fact" of $S_i$;
    	\item\label{cas4.2} there exists $i\in\{1,2\}$ such that $q_i$ admits a "minimal support" $S_i$ with a "constant" $a_i \notin C_i$ appearing in exactly one "fact" of $S_i$, \textit{and} $\FMC{q_i}\in "FP"$.
    	\\
    \end{enumerate}

	\begin{lemma}[Adaptation of \Cref{lem:pseudoconn,lem:decomposable}]
		Let $q$ be a "$\aC$-hom-closed" query that is either "pseudo-connected with an unshared constant" or "decomposable with an unshared constant". Then $\FMC{q}\polyrx\Shapleyn{q}$.
	\end{lemma}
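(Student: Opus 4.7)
The plan is to replay the constructions of Sections 5.1 and 5.2 while leveraging the unshared-constant hypothesis to guarantee that no new exogenous facts need to be introduced, so that all oracle calls land in $\Shapleyn{q}$ rather than $\Shapley{q}$.

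\textbf{Pseudo-connected case.} Given an input $\D = (\Dn, \emptyset)$ for $\FMC{q}$, I would set $q' \defeq \top$ and take $S$ to be the island minimal support provided by the hypothesis, with $\mu$ the unique fact of $S$ containing the unshared constant $a \notin \aC$. The construction of Section 5.2 partitions $S$ into $S^0$ (facts containing $a$) and $S^-$ (the rest). Under our stronger hypothesis $S^0 = \{\mu\}$, so each $\aC$-isomorphic copy $S^k$ is just the singleton $\{\mu^k\}$. In the original construction the facts $S^0 \setminus \{\mu\}$ together with their copies in each $S^k \setminus \{\mu^k\}$ were tagged exogenous; here these sets are empty. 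Hence the only facts added to $\D$ are the endogenous $\mu, \mu^1, \ldots, \mu^i$ and the facts of $S^-$, making $\Ai$ purely endogenous and the $\Shapleyn{q}$ oracle call legitimate. Lemma 5.3's characterization of the zero-contribution subsets $B$ then carries over verbatim, since its proof relies on the island property and the separation of constants, not on any fact being exogenous. The same Vandermonde-like linear-algebraic inversion recovers the counts $\FMC_j(q)(\D)$.

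\textbf{Decomposable case.} Write $q \equiv q_1 \land q_2$ as per the definition. Invoking $\FMC{} \polyeq \SPQE{}$ from Proposition~3.1, together with the partition of $\D$ into $\D^1 \dcup \D^2$ according to which subquery a fact is relevant to, yields $\SPQE{q}(\D) = \SPQE{q_1}(\D^1)\cdot\SPQE{q_2}(\D^2)$, exactly as in the proof of Lemma~4.3. It therefore suffices to compute the values $\FMC{q_i}(\D^i)$ for $i \in \{1,2\}$. For each $i$ such that $q_i$ admits an unshared-constant minimal support $S_i$, I would reuse the construction above, taking $S \defeq S_i$, to compute $\FMC{q_{3-i}}(\D^{3-i})$ via the $\Shapleyn{q}$ oracle: again the unshared-constant property makes the construction purely endogenous. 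In subcase~(1) both $q_1$ and $q_2$ satisfy this hypothesis, so both values are obtained this way. In subcase~(2) the query $q_i$ has both the unshared constant and $\FMC{q_i} \in \mathsf{FP}$: we directly compute $\FMC{q_i}(\D^i)$ in polynomial time, and use the oracle with $S = S_i$ for $\FMC{q_{3-i}}(\D^{3-i})$. The two counts are then recombined through the $\SPQE{}$ equivalence to recover $\FMC{q}(\D)$.

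\textbf{Main obstacle.} The crux is verifying that the analog of Lemma~5.3 still holds when $\Dx = \emptyset$ and $S' = \emptyset$. Concretely, one must rule out spurious minimal supports of $q$ that straddle $\D$ and the singleton copies $\{\mu^k\}$ or $S^- \cup \{\mu^k\}$. I expect this to follow directly from the island property (in the first case) or from the definition of decomposability together with the irrelevance separation between $\D^1$ and $\D^2$ (in the second case): after the standard $C$-isomorphic renaming, the copies $S^k$ share no constants with $\D$ outside $\aC$, and the leak-like pathologies that required careful handling in Lemmas~4.1 and~4.3 cannot arise because the unique occurrence of the unshared constant $a$ in $\mu$ prevents any cross-linking between the copies and the rest of the database.
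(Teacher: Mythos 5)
Your proof is correct and follows essentially the same approach as the paper's: in both cases the unshared constant forces $S^0$ to be the singleton $\{\mu\}$, so no exogenous facts are introduced, and in the decomposable case you handle subcases (1) and (2) exactly as the paper does, by choosing which minimal support to replay the construction with and, in subcase (2), reading off one factor directly from the $\mathsf{FP}$ assumption.
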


	Note that this lemma restricted to "pseudo-connected queries with an unshared constant" is a simple reformulation of \Cref{lem:endo-p-con}.
	
	\begin{proof}
		First consider the case where $q$ is "pseudo-connected with an unshared constant" and apply the same construction as with the proof of \Cref{lem:pseudoconn}, by using the "island" "minimal support" $S$ given by the definition and choosing an unshared constant $a$ as the constant to duplicate. Since $q$ is also "pseudo-connected", this construction reduces $\FGMC{q}$ to $\Shapley{q}$. Now by definition $a$ appears in exactly one "fact" of $S$, which means that in the construction $S^0$ is a singleton. This implies that its only element has to be $\mu$ which is "endogenous", hence no "exogenous" "fact" will be added during the construction. In other words, if the input database was purely "endogenous" then the one on which the oracle is called will be as well, hence $\FMC{q}\polyrx\Shapleyn{q}$.\\
		
		When $q$ is "decomposable with an unshared constant", if it is of type (\ref{cas4.1}), we apply the exact same reasoning to compute both $\SPQE{q_i}$ over $\D_i$ separately from oracles to $\Shapleyn{q}$, using the "minimal support" $S_{i'}$ of the other subquery given by the definition. If it is of type (\ref{cas4.2}), we compute $\SPQE{q_{i'}}$ over $\D_{i'}$ this way, using the "minimal support" $S_{i}$ given by the definition, and directly get $\SPQE{q_i}$ on $\D_i$ from the fact that the problem is in "FP". In both cases, we conclude by multiplying the probabilities together.
	\end{proof}
\end{toappendix}



\subsection{Queries with Negation}
\label{sec:negation}
\begin{toappendix}
    \subsection{Queries with Negation}
    \label{app:negation}
\end{toappendix}
$\PQE{}$ and $\Shapley{}$ have been explored beyond "$\aC$-hom-closed" queries, by considering queries with some form of negation. An "FP"/"shP" dichotomy has been established for $\Shapley{}$ over "self-join-free conjunctive queries with safe negations" (sjf-$\CQneg$) in \cite[Theorem 3.1]{reshefImpactNegationComplexity2020}, and another dichotomy for $\PQE{}$ has been shown for the larger class of self-join-free $\intro*\RAneg$ queries \cite[Theorem 1.1]{finkDichotomiesQueriesNegation2016}. In fact, it follows from the proof of \cite{finkDichotomiesQueriesNegation2016} that this latter dichotomy is also valid for $\PQEPhalfOne{}$, since no other probability is used.
\AP
\phantomintro{self-join-free conjunctive queries with safe negations}
Briefly, a sjf-$\intro*\CQneg$ query is a "sjf-CQ" that may have some negative atoms, with the restriction that their variables are present in the positive part. The "hierarchical" queries are defined as before (except the atoms can be negative), and they also characterize the tractable queries.

A slight adaptation of our proof technique yields the following:
\begin{toappendix}
	\subsubsection{Key lemma.}
	
	We start by establishing a key lemma that will allow us to prove the other results about queries with negations.
	\AP A ""DNF formula"" is a disjunction of conjunctions of possibly negated "atoms", which we see as a generalized form of "UCQ".
	
	\begin{lemma}\label{lem:key-neg}
		Let $q= q^+\land \lnot q^- \land \lnot \alpha_1 \land \dots \land \lnot \alpha_K$ be a Boolean query such that:
		\begin{itemize}
			\item $q^+=q^\circ\land q'$ is a "sjf-CQ" whose "constants" are in $\aC$, with $q^{\circ}$ being "variable-connected";
			\item $q^-$ is a "DNF formula" over "constants" of $\aC$ and variables from $\vars(q^+)$, such that every "atom" contains at least one "variable" and every clause contains at least a positive "atom";
			\item $q^-$ shares no "relation name" with $q^+$;
			\item for every $i$, we have that $\alpha_i$ is an "atom" over the "constants" of $\aC$ which shares no "relation name" with $q^+$.
		\end{itemize}
And let $\tilde{q}^-$ be the query resulting from: 
\begin{enumerate}
			\item removing from $q^-$ every negative atom that contains some variable $x\notin\vars(q^{\circ})$ and 
			\item removing every clause whose positive part contains some variable $x\notin\vars(q^{\circ})$.
\end{enumerate}
Further, let $\tilde{q}\defeq q^{\circ}\land \lnot \tilde{q}^-\land \lnot \alpha_1 \land \dots \land \lnot \alpha_K$. Then, $\FGMC{\tilde{q}} \polyrx \Shapley{q}$.
	\end{lemma}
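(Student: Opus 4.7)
The plan is to adapt the proof of \Cref{lem:leak} from \Cref{ssec:main-proof} to accommodate the negated parts of the query. Given an instance $\D = (\Dn, \Dx)$ of $\FGMC{\tilde{q}}$, I would first normalize $\D$ by assuming $\Dx \not\models \tilde{q}$ and $C$-isomorphically renaming $\D$ so that its non-$C$ constants are disjoint from those that will appear in the completion and duplication phases.

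Next, I would perform a completion step analogous to \Cref{claim:completion}: add an exogenous set $S^+$ providing a canonical witness for $q^+$, using fresh constants outside $C$ for every variable in $\vars(q') \setminus \vars(q^\circ)$. Because $q^+$ is self-join-free and shares no relation name with $q^-$ or any $\alpha_i$, no fact of $S^+$ can contribute to a positive occurrence of these negated pieces. Moreover, since every atom of $q^-$ contains at least one variable and the atoms $\alpha_i$ use only constants in $C$, a case analysis should establish that for every $B \subseteq \Dn$, we have $B \cup \Dx \models \tilde{q}$ iff $B \cup \Dx \cup S^+ \models q$: the clauses removed when passing from $q^-$ to $\tilde{q}^-$ are precisely those that mention a variable pinned to fresh constants of $S^+$, and hence either cannot be witnessed from the $\Dn$ side or are made irrelevant once their positive part is ruled out.

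Then I would apply the duplication construction: take a minimal support $S$ of $q^\circ$ whose constants outside $C$ are fresh, pick some $a \in \const(S) \setminus C$, partition $S = S^0 \dcup S^-$ depending on whether facts contain $a$, and produce $C$-isomorphic copies $(S^k)_{k \in \IN}$ of $S^0$ by renaming $a$ to fresh $a^k$. Form $\Ai \defeq \D \cup S^+ \cup S^0 \cup S^1 \cup \dots \cup S^i \cup S^-$, making endogenous exactly the facts in $\Dn \cup \{\mu, \mu^1, \dots, \mu^i\} \cup S^-$ where $\mu \in S^0$ is a distinguished fact. Invoking the $\Shapley{q}$ oracle yields $\Sh(\Ain, \scorefun, \mu)$, and the goal is to prove an analogue of \Cref{lem:pcnx} stating that $\scorefun(B \cup \{\mu\}) - \scorefun(B) = 0$ iff one of three mutually exclusive cases holds: (a) some copy $\mu^k$ lies in $B$, (b) some fact of $S^-$ is missing from $B$, or (c) $(B \cap \Dn) \cup \Dx \models \tilde{q}$.

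The main obstacle will be verifying the ``otherwise $\mu$ contributes'' direction of this analogue. This requires two arguments: first, that adding $\mu$ completes $q^+$ without creating any new witness for $q^-$ or for any $\alpha_i$, which rests on the self-join-freeness of $q^+$ (preventing $S$ from sharing relation names with $q^-$ or the $\alpha_i$) together with the fresh-constant choices in the copies and $S^+$; second, that no minimal witness for $q^+$ straddles $\D$, $S^+$ and several copies of $S$, which follows from the variable-connectedness of $q^\circ$ and the disjointness of constants outside $C$, just as in \Cref{claim:completion,claim:duplication}. Once the analogue of \Cref{lem:pcnx} is established, recovering $\FGMC{\tilde{q}}(\D, j)$ from the values $\Sh(\Ain, \scorefun, \mu)$ for $i = 0, \dots, |\Dn|$ by inverting the same matrix of general term $(i+j)!$ proceeds identically to the end of \Cref{ssec:main-proof}.
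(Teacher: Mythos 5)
Your approach closely mirrors the paper's: you reuse the completion (adding an exogenous witness for $q'$), the duplication of a minimal support of $q^\circ$ into $S^0 \dcup S^-$ with $C$-isomorphic copies $S^k$, a case analysis à la \Cref{lem:pcnx}, and the Vandermonde-type matrix inversion. The set you call $S^+$ plays the role of $S'$ in the paper's construction, and your cases (a), (b), (c) correspond to the paper's cases (2), (3), (4) in its adaptation of \Cref{lem:pcnx}.

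However, your three-case characterization is missing a case that the paper handles explicitly, and without it the ``iff'' is false. Suppose some $\alpha_k \in B$ (which can happen, since the $\alpha_k$ are ground facts over $\aC$ that may well sit in $\Dn$ and survive your $C$-isomorphic renaming, which is the identity on $\aC$). Then both $B \cup \Aix$ and $B \cup \{\mu\} \cup \Aix$ violate the conjunct $\lnot \alpha_k$ of $q$, so $\scorefun(B\cup\{\mu\}) - \scorefun(B) = 0$. Yet if no $\mu^k \in B$ and $S^- \subseteq B$, your cases (a) and (b) fail, and your case (c) also fails because $\alpha_k \in (B \cap \Dn) \cup \Dx$ makes $\lnot\alpha_k$ (a conjunct of $\tilde q$) false. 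So none of your three cases hold, yet the increment is zero. The paper's adaptation of \Cref{lem:pcnx} adds a fourth case, $\exists k. \alpha_k \in B$, and accordingly collects $\{\mu^k \mid k\in[i]\}\cup\{\alpha_k \mid k\in[K]\}$ into a single set $\mathcal F$ that is required to be disjoint from $B$ in the remaining cases. The fix is small -- either add this fourth case and compute its (polynomial-time computable) contribution just like the others, or normalize $\D$ by deleting every $\alpha_k$ from it at the outset (a sound preprocessing step, since no subset containing an $\alpha_k$ can be a generalized support for $\tilde q$) -- but as stated the case analysis you propose is incorrect. Also note that it is not the self-join-freeness of $q^+$ that prevents the duplicated facts from touching $q^-$ or the $\alpha_i$, but rather the separate hypotheses that $q^-$ and the $\alpha_i$ share no relation name with $q^+$; self-join-freeness is what guarantees the disjointness of the vocabularies of $q^\circ$ and $q'$.
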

	
	\begin{proof}
		We build from $q^+$ the same reduction as in the proof of \Cref{lem:pseudoconn} (see \Cref{fig:redx}), with $S$ and $S'$ respectively isomorphic to 
		$q$ and $q'$ (see the proof of \Cref{cor:non-h-cq} for details on how this reduction applies to "sjf-CQ"s.)
		. From this we prove the following adaptation of \Cref{lem:pcnx}, with $ \mathcal{F}\defeq \{\mu^{k}\mid k \in [i]\}\cup \{\alpha_k\mid k\in [K]\}$:
		
		\begin{lemma}\label{lem:pcnx.1}
			$(\scorefun[](B\cup\{\mu\}) - \scorefun[](B))=0$ in the following four mutually exclusive cases:
			\begin{enumerate}
				\item\label{cas9.1} $\exists k\in [K] . \alpha_k \in B$,
				\item\label{cas9.2} $\forall k\in [K] .
				\alpha_k \notin B
				~~~\land~~~
				\exists k\in [i] . \mu^{k} \in B$,
				\item\label{cas9.3} $\forall \alpha\in \mathcal{F} .
				 \alpha \notin B
				~~~\land~~~
				\exists \alpha\in  S^- . \alpha \notin B$,
				\item\label{cas9.4} 
				$\forall \alpha\in \mathcal{F} . 
				\alpha \notin B
				~~~\land~~~
				\forall \alpha \in  S^- . \alpha\in B
				~~~\land~~~B\cap \D' \models \tilde{q}$.
			\end{enumerate}
		\end{lemma}
		
		Case \ref{cas9.1} is when there is some $\alpha_k$ that invalidates the query independently of whether $\mu$ is present or not, and the other cases are strictly analogous to those in \Cref{lem:pcnx}. The only difference here is the presence of a negative part.
		The $\alpha_k$ have been dealt with, so let us consider a "minimal support" $M^-$
		of $q^-$ in $B\cup\Aix$. Since $q^+$ and $q^-$ have no "relation name" in common and everything in the reduction other than the input database $\D$ is part of a "minimal support" of $q^+$, then necessarily $M^-\subseteq \D$ (and $M^-\neq\emptyset$ because every clause in $q^-$ contains at least one positive atom). Since every atom in $q^-$ contains a variable, and by construction every variable in $q^+$ is matched to a fresh constant in $S$ or $S'$, these two cannot be affected by $M^-$. Overall, the only impact $M^-$ can have is on a "minimal support" of $q$ in $\D$, but this is taken into account by the condition $B\cap \D' \models \tilde{q}$ in Case \ref{cas9.4}.
	\end{proof}
	
	\subsubsection{Application to sjf-$\CQneg$}
\end{toappendix}
\begin{propositionrep}
	Let $q$ be a sjf-$\CQneg$, where $q^+$ and $q^-$ are its positive and negative atoms, respectively.
    Let $q_{vc}^+$ be any maximal "variable-connected" subquery of $q^+$, and  $q_{vc}^-$ be the atoms of	 $q^-$ whose every variable is in $q^+_{vc}$. Then, $\FGMC{q_{vc}^+\land q_{vc}^-} \polyrx \Shapley{q}$.
\end{propositionrep}
\begin{appendixproof}
	Let $q^- = q^-_{v} \land q^-_{c}$ be, respectively, the atoms of $q^-$ having at least one variable and the atoms having only constants.
	Assume $q^-_c = \set{\lnot \alpha_1, \dotsc, \lnot \alpha_K}$.
    We can now apply \Cref{lem:key-neg} by taking
	\begin{itemize}
		\item $q_{vc}^+$ as $q$;
		\item the DNF of $\lnot q^-_v$ as $q^-$;
		\item the $\alpha_i$'s as above.
	\end{itemize}
	In this context, $\tilde q^-$ will be the "DNF" of \textit{$q_{vc}^-$} with the variable-free atoms removed, hence the result $\FGMC{\tilde{q}} \polyrx \Shapley{q}$ translates to $\FGMC{q_{vc}^+\land q_{vc}^-} \polyrx \Shapley{q}$, as desired.
\end{appendixproof}
The preceding result does not cover the full dichotomy of sjf-$\CQneg$ because of the basic "non-hierarchical" query $A(x)\land \lnot S(x,y) \land B(y)$. However, it does cover all queries with a "variable-connected" positive part,
as well as sjf-$\CQneg$ queries with ``component-guarded negation'',
defined as the sjf-$\CQneg$ queries such that the set of variables of every negative atom appears in the same maximal "variable-connected" subquery of the positive part. 
Finally, our technique also applies to some queries in $\RAneg$ which are not definable in $\CQneg$, for which $\Shapley{}$ was not yet known to be hard  ("cf" \Cref{ex:beyondCQneg:1,ex:beyondCQneg:2} in \Cref{app:1RA-}).

\begin{toappendix}
    \subsubsection{Beyond sjf-$\CQneg$}
    \label{app:1RA-}\textit{}
    
    Thanks to the dichotomy for sjf-\RAneg \cite{finkDichotomiesQueriesNegation2016}, \Cref{lem:key-neg} shows the hardness of $\Shapley{}$ for some queries that cannot be expressed as sjf-$\CQneg$ as evidenced by the two following examples. The notations are directly taken from \cite{finkDichotomiesQueriesNegation2016}, to which we refer the interested reader.
    \begin{example}\label{ex:beyondCQneg:1}
		Consider the "Boolean" $\RAneg$ query 
		\[
			q_1 \defeq \pi_{\emptyset} (D\bowtie (S\bowtie (A-(B-C))))
		\]
		over the "schema" ($D(X)$, $S(X,Y)$, $A(Y)$, $B(Y)$, $C(Y)$), which translates into the following first-order formula:
		\begin{align*}
			q_1 \equiv{}&\exists x,y ~~ D(x)\land S(x,y) \land A(y) \land \lnot (B(y) \land \lnot C(y)) \\
			\equiv {}&\exists x,y ~~ \big(D(x)\land S(x,y) \land A(y) \land \lnot B(y)\big)
    		 \lor{} \\ & \hspace{.82cm} \big(D(x)\land S(x,y) \land A(y) \land C(y)\big)
		\end{align*}
    	The first form illustrates how \Cref{lem:key-neg} can be applied on it by taking $B(y) \land \lnot C(y)$ as $q^-$, while the second shows that this query is not equivalent to a sjf-\CQneg. However, it is a non-"hierarchical" sjf-$\RAneg$ by \cite[Proposition 5.4]{finkDichotomiesQueriesNegation2016} (using pattern $\mathbf P_{6.1}$ from \cite[Figure~9]{finkDichotomiesQueriesNegation2016}) hence its hardness for $\PQEPhalf{}$.
	\end{example}
	\begin{example}\label{ex:beyondCQneg:2}
		Consider now the $\RAneg$ query 
		\[
			q_2 \defeq \pi_{\emptyset}(S - (A \bowtie B))
		\]
		over the schema ($S(X,Y)$, $A(X)$, $B(Y)$), which translates into the following first-order formula:
    	\begin{align*}
    		q_2 \equiv {}&\exists x,y ~~ S(x,y)\land \lnot (A(x)\land B(y))\\
    		\equiv{} &  \exists x,y ~~ \big(S(x,y)\land \lnot A(x)\big)\lor \big(S(x,y)\land \lnot B(x)\big)
    	\end{align*}
    	\noindent Once again \Cref{lem:key-neg} applies, the query is not equivalent to a sjf-\CQneg, and it is a non-"hierarchical" $\RAneg$ (this time using the pattern $\mathbf P_{4.3}$ from \cite[Figure~9]{finkDichotomiesQueriesNegation2016}).
	\end{example}
    
    
    As these examples show, sjf-$\RAneg$ allows for richer negations, that can be nested or contain more than one atom. However, note that, in order for \Cref{lem:key-neg} to apply, these complex negations must always contain some variable in every atom.
    
    \begin{figure}[ht] 
    	\centering
    	\hfill
$\mathbf P_{6.1}$\hspace{-5mm}
\begin{minipage}[t]{.3\columnwidth}
	\phantom{bla}\vspace{-3mm}\\
	$\begin{tikzcd}[row sep=small, column sep=tiny]
		& \bowtie \arrow[ld, no head] \arrow[rd, no head] &                                                 &   \\
		X &                                                 & \bowtie \arrow[rd, no head] \arrow[ld, no head] &   \\
		& XY                                              &                                                 & Y
	\end{tikzcd}$
\end{minipage}
\hfill
$\mathbf P_{4.3}$\hspace{-5mm}
\begin{minipage}[t]{.3\columnwidth}
	\phantom{bla}\vspace{-3mm}\\
	$\begin{tikzcd}[row sep=small, column sep=tiny]
		& - \arrow[ld, no head] \arrow[rd, no head] &                                                 &   \\
		XY &                                                 & \bowtie \arrow[rd, no head] \arrow[ld, no head] &   \\
		& X                                              &                                                 & Y
	\end{tikzcd}$
\end{minipage}
\hfill     	
    	\caption{%
    		Reproductions of patterns from \cite[Figure~9]{finkDichotomiesQueriesNegation2016}.
    	}
    \end{figure}
\end{toappendix}

\subsection{Maximum Shapley Value}
\label{sec:maximum}
A main reason for considering the Shapley value is to identify the most important facts for a given query. 
One might therefore consider 
the problem of computing the top contributor and its "Shapley value". 
\AP
Concretely, let $\intro*\maxShapley q$ be the problem of, given a "database" $\D$, outputting any "fact" $\alpha \in \Dn$ and its "Shapley value" $v_\alpha$ for~$q$, such that $v_\alpha$ is the maximum value among all ("endogenous") "facts".
It turns out that all of our reductions to show $\FGMC q \polyrx \Shapley q$ can be adapted to show $\FGMC q \polyrx \maxShapley q$, leading to the conjecture that $\maxShapley{}$ might be equivalent to $\Shapley{}$.

\begin{proposition}\label{prop:maxshp}
    For any pair of queries $q,q'$ for which we obtain a reduction $\FGMC q \polyrx \Shapley{q'}$ as a consequence of \Cref{lem:pseudoconn}, \ref{lem:leak} or \ref{lem:decomposable}, we have  $\FGMC q \polyrx \maxShapley{q'}$.
\end{proposition}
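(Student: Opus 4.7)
The plan is to adapt each of the three reductions $\FGMC q \polyrx \Shapley{q'}$ underlying \Cref{lem:pseudoconn,lem:leak,lem:decomposable} so that a single call to the $\maxShapley{q'}$ oracle returns exactly the Shapley value $\Sh(\Ain, \scorefun, \mu)$ of the distinguished fact $\mu$ used in each construction. Since this value, as $i$ ranges over $\{0, 1, \dots, |\Dn|\}$, determines a Vandermonde-invertible system whose solution is $\FGMC q$ on $\D$, recovering it via $\maxShapley{q'}$ suffices.

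The first step is to exploit symmetry. In each of the three constructions, the facts $\mu, \mu^{1}, \dots, \mu^{i}$ are defined as pairwise $\aC$-isomorphic copies of one another, and the partition of $\A^{i}$ into endogenous and exogenous facts is preserved by these isomorphisms. By the symmetry axiom of the Shapley value, all $i+1$ copies share a common value $\sigma$. Consequently, if the $\maxShapley{q'}$ oracle returns \emph{any} of these copies, we directly read off $\sigma = \Sh(\Ain, \scorefun, \mu)$.

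The second step is to guarantee that $\sigma$ is in fact the maximum Shapley value in $\A^{i}$. The idea is to augment $\A^{i}$ with a polynomial number $M$ of additional $\aC$-isomorphic booster copies of the support $S$ over fresh constants, each contributing one new endogenous $\mu$-copy while keeping the remaining facts of the booster exogenous. By the same symmetry argument, every $\mu$-copy, original or booster, still shares the common Shapley value $\sigma$, while the Shapley values of the other endogenous facts (those in $\Dn$ and $S^{-}$) remain bounded by quantities depending only on $|\D|$ and $|S^{-}|$, independent of $M$. Taking $M$ large enough then forces $\sigma$ to strictly exceed all of them, so the oracle is compelled to return a $\mu$-copy together with $\sigma$.

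The main obstacle is to verify that adding the boosters does not break the correctness of the original reductions. Thanks to the island and leak-freeness hypotheses of \Cref{lem:pseudoconn,lem:leak,lem:decomposable}, each booster copy of $S$ shares constants with $\D$ and with the other boosters only through $\aC$, so no minimal support of the relevant query can straddle a booster and the rest of $\A^{i}$. As in the proof of \Cref{lem:pcnx}, this ensures that the boosters contribute to $\Sh(\Ain, \scorefun, \mu)$ only through scenarios analogous to case~(\ref{cas2.1}), whose total contribution is a closed-form polynomial in $M$ and the size of a booster that can be computed and subtracted off. Once this correction is removed, the Vandermonde inversion from the original proofs recovers $\FGMC q(\D)$ unchanged.
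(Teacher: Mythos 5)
Your approach has a genuine gap. The claim that ``the Shapley values of the other endogenous facts (those in $\Dn$ and $S^-$) remain bounded by quantities depending only on $|\D|$ and $|S^-|$, independent of $M$'' is false in the relevant sense: adding $M$ booster copies of $\mu$ does not make $\sigma = \Sh(\mu)$ eventually dominate. As a concrete counterexample, take $q = \exists x, y.\, R(x) \land S(x,y) \land T(y)$ with $S = \{R(a), S(a,b), T(b)\}$, duplicated constant $a$, $S^0 = \{R(a), S(a,b)\}$, $S^- = \{T(b)\}$, $\mu = R(a)$. With $m$ endogenous $\mu$-copies present (each accompanied by its exogenous $S(a^k,b)$ partner) and the single endogenous $T(b)$, satisfying the $S$-side of the query requires $T(b)$ together with at least one $\mu$-copy. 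A direct computation gives $\Sh(T(b)) = \tfrac{m}{m+1}$ and $\Sh(\mu^k) = \tfrac{1}{m(m+1)}$ for each copy; so as $M$ (hence $m$) grows, $T(b)$'s value tends to $1$ while $\sigma$ tends to $0$. The oracle would therefore return $T(b)$'s value, not $\sigma$, and you have no way to recover $\sigma$ from it.

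The paper sidesteps this entirely by a different idea: modify the construction to take $S^0 \defeq S$ and $S^- \defeq \emptyset$, so that $\mu$ is the \emph{only} endogenous fact of the inserted support, making $\{\mu\}$ a singleton generalized support. Then a general lemma about monotone binary cooperative games (\Cref{lem:singleton-gz-sup}) guarantees that a player satisfying $\scorefun(\{s\}) = 1$ always achieves the maximum Shapley value, so a single call to $\maxShapley{q'}$ returns exactly $\Sh(\mu)$ and the rest of the reduction proceeds unchanged. No boosters, no corrections, and no asymptotics in $M$ are needed. Your symmetry observation (all $\mu$-copies share the same value) is correct and harmless, but it does not by itself establish maximality; the missing ingredient is precisely the singleton-support argument.
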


The proof of Proposition \ref{prop:maxshp} utilizes the following lemma that shows that the Shapley value of a fact that is a "generalized support" on its own is always maximal. This property holds for the class of games that are monotone ($B\subseteq B'\Rightarrow \scorefun[](B)\le\scorefun[](B')$) and binary ($\scorefun[]$ has  image $\set{0,1}$).

    \begin{lemma}\label{lem:singleton-gz-sup}
    	Let $(P,\scorefun[])$ be a monotone 
    	    	binary 
    	Shapley "game" that contains some $s\in P$ such that $\scorefun[](\{s\})=1$. Then for every $p \in P$,  $\Sh(P,\scorefun[],p)\le\Sh(P,\scorefun[],s)$.
    \end{lemma}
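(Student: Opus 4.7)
\textbf{Proof plan for Lemma~\ref{lem:singleton-gz-sup}.}
The plan is to use formula~\eqref{formul_sh} for the Shapley value and then directly compare, term by term, the contributions to $\Sh(P,\scorefun[],p)$ and $\Sh(P,\scorefun[],s)$ via an injection of contributing coalitions. Since $\scorefun[]$ is binary and monotone, every marginal contribution $\scorefun[](B \cup \{r\}) - \scorefun[](B)$ lies in $\{0,1\}$ and equals $1$ precisely when $\scorefun[](B) = 0$ and $\scorefun[](B \cup \{r\}) = 1$. Moreover, the weight $|B|!(|P|-|B|-1)!/|P|!$ assigned to a coalition $B$ depends only on its cardinality, not on its identity, which is the key observation for the comparison.

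The first step is to characterize exactly which coalitions contribute to $s$. Using monotonicity and the hypothesis $\scorefun[](\{s\}) = 1$, we have $\scorefun[](B \cup \{s\}) \ge \scorefun[](\{s\}) = 1$, hence $\scorefun[](B \cup \{s\}) = 1$ for every $B \subseteq P \setminus \{s\}$. Therefore a coalition $B \subseteq P \setminus \{s\}$ contributes (with value $1$) to $\Sh(P,\scorefun[],s)$ if and only if $\scorefun[](B) = 0$, giving
\[
\Sh(P, \scorefun[], s) = \sum_{\substack{B \subseteq P \setminus \{s\} \\ \scorefun[](B) = 0}} \frac{|B|!\,(|P|-|B|-1)!}{|P|!}.
\]

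The second step is to show that every coalition that contributes to $\Sh(P,\scorefun[],p)$ also contributes to $\Sh(P,\scorefun[],s)$ with the same weight. If $p = s$ there is nothing to prove, so suppose $p \neq s$, and take any $B \subseteq P \setminus \{p\}$ with $\scorefun[](B \cup \{p\}) - \scorefun[](B) = 1$; in particular $\scorefun[](B) = 0$. By monotonicity, $s \in B$ would force $\scorefun[](B) \ge \scorefun[](\{s\}) = 1$, contradicting $\scorefun[](B) = 0$. Hence $s \notin B$, so $B \subseteq P \setminus \{s\}$, and by the previous paragraph $B$ contributes $1$ to $\Sh(P,\scorefun[],s)$ with the same weight $|B|!\,(|P|-|B|-1)!/|P|!$.

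The conclusion is then immediate: the coalitions contributing to $\Sh(P,\scorefun[],p)$ form a subset (with equal weights) of those contributing to $\Sh(P,\scorefun[],s)$, so $\Sh(P,\scorefun[],p) \le \Sh(P,\scorefun[],s)$. There is no real obstacle here; the only thing to be careful about is correctly using monotonicity in both directions (once to make the marginal contribution of $s$ always positive when $\scorefun[](B)=0$, and once to exclude $s$ from any coalition $B$ with $\scorefun[](B)=0$).
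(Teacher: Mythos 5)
Your proof is correct and uses essentially the same idea as the paper's: monotonicity plus $\scorefun[](\{s\})=1$ forces $s$ out of any coalition with $\scorefun[](B)=0$, so every configuration contributing to $p$ also contributes to $s$. The only cosmetic difference is that you work with formula~\eqref{formul_sh} (coalitions, where the injection is just the identity on sets of equal weight), whereas the paper works with formula~\eqref{def_sh} (permutations) and exhibits the injection explicitly as the swap of $s$ and $p$.
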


\begin{proof}
For every $p\in P$, denote $\W_p \defeq \{\sigma \in \Sym(P) \mid  \scorefun[](\sigma_{<p} \cup \{p\}) - \scorefun[](\sigma_{<p}) = 1\}$. Since the "game" is monotone and binary, $\scorefun[](\emptyset)=0$ and $\scorefun{\{s\}}=1$, the $\W_p$ form a partition of $\Sym(P)$, and by \Cref{def_sh} $\Sh(P,\scorefun[],p)$ is proportional to $|\W_p|$.

Now let $p\in P$ and $\sigma\in\W_p$. Since the "game" is binary, then necessarily we must have $\scorefun[](\sigma_{<p}) = 0$, which implies $s\not\in\sigma_{<p}$ because the "game" is monotone. Therefore, if $\tau$ denotes the function that swaps the positions of $s$ and $p$ in a permutation, then $\scorefun[](\tau(\sigma)_{<s} \cup \{s\}) - \scorefun[](\tau(\sigma)_{<s}) = 1$ because $\tau(\sigma)_{<s}\subseteq\sigma_{<p}$ and $\scorefun[](\sigma_{<p}) = 0$. This means that $\tau_{|\W_p}$ injects in $\W_s$ hence $|\W_p|\le|\W_s|$. It follows that $\Sh(P,\scorefun[],p)\le\Sh(P,\scorefun[],s)$.
%
\end{proof}

\begin{proof}[Proof of Proposition \ref{prop:maxshp}]
    %
%
	Note that if we assume that $\Dx \nvDash q$ (otherwise every fact has a zero hence maximal Shapley value), a fact $\mu$ that is a "generalized support" on its own indeed verifies $\scorefun[](\{s\})=1$. Now if we go back to the constructions used to prove \Cref{lem:pseudoconn,lem:leak,lem:decomposable}, they are equally valid if we take $S^0\defeq S$ and $S^-\defeq\emptyset$ (the only point of $S^-$ was to limit the number of "exogenous" facts). If we do so, then $\mu$, which is the only fact the $\Shapley{q}$ oracle is called on, is a singleton "generalized support" and therefore has a maximal Shapley value.
\end{proof}
%

Observe that $\maxShapley{}$ is not the same problem as the ``maximum $\Shapley q$ contributor problem'', that is, the problem of finding a "fact" with maximum "Shapley value". We believe this is a relevant problem which may, in principle, be simpler to address than $\Shapley{}$.

\subsection{Shapley Value of Constants}
\label{sec:Shapley-constants}
Instead of computing the "Shapley values" of "facts", one could consider a setting in which the players are a set of ``endogenous constants''. Likewise, one could consider supports consisting of constants, or assign independent probabilities to the "constants"  instead of "facts".

As an example, consider a simple "schema" containing two binary relations {\sf Publication(authorID, paperID)} and {\sf Keyword(paperID, keywordStr)} containing publications and related keywords, and the query $q^* = \exists x,y ~ \textsf{Publication}(x,y) \land \textsf{Keyword}(y,\texttt{`Shapley'})$ testing the existence of \texttt{`Shapley'}-related papers. 
One could then 
extract the experience level of authors on this topic by computing the "Shapley value" of author "constants",
treating all "constants" as "exogenous" except those in the \textsf{authorID} column. Note that the Shapley value for facts would be much less informative as an author's expertise may be split across several facts.

This idea gives rise to variants of $\PQE{}$, $\FGMC{}$ and $\Shapley{}$. 
\AP
For any "database" $\D$ and $C \subseteq \Const$, let $\D|_C$ denote the "database" induced by the "constants" $C$, "ie",\siderev{Explicit game definition} \changed{$\D|_C \defeq \set{\alpha \in\D \mid \const(\alpha)\subseteq C}$.}
For a monotone query $q$, 
let $\intro*\ShapleyConst{q}$ be the task of computing, for 
a "database" $\D$, 
a partition $\const(\D) = C_{\subendo} \dcup C_{\subexo}$, and 
a "constant" $c \in C_{\subendo}$, 
the "Shapley value" in the "cooperative game" where the set of players is $C_{\subendo}$
and the "wealth function" assigns $1$ to any subset $C \subseteq C_{\subendo}$ such that $\D|_{C \cup C_{\subexo}} \models q$ and $\D|_{C_{\subexo}} \not\models q$, or $0$ otherwise.
\AP
Similarly, let $\intro*\FGMCConst{q}$ be the task of computing, for any $k \in \Nat$, "database" $\D$ and $\const(\D)=C_{\subendo} \dcup C_{\subexo}$, the number of sets $C \subseteq C_{\subendo}$ of size $k$ such that $\D|_{C \cup C_{\subexo}} \models q$.
\AP The variants $\intro*\ShapleynConst{q}$ and $\intro*\FMCConst{q}$ (and the probabilistic variants) can be defined analogously.


By a simple adaptation of our proof of \Cref{lem:pseudoconn}, we can show a polynomial equivalence result for these variants: 

\begin{proposition}\label{prop-sh-constants}
    For every "hom-closed" query $q$, we have
    $\ShapleyConst{q} \polyeq \FGMCConst{q}$ and $\ShapleynConst{q} \polyeq \FMCConst{q}$.
\end{proposition}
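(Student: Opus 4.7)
The plan is to mirror \Cref{prop:reductions-problems}.\eqref{cas1.3} for the easy direction and to adapt the construction of \Cref{lem:pseudoconn} for the converse. For $\ShapleyConst{q} \polyrx \FGMCConst{q}$, I group the subsets $B \subseteq C_{\subendo} \setminus \{c\}$ in the Shapley formula by their size, expressing $\Sh$ as a linear combination of $\FGMCConst{q}(\D, C_{\subendo} \setminus \{c\}, C_{\subexo} \cup \{c\})$ and $\FGMCConst{q}(\D, C_{\subendo} \setminus \{c\}, C_{\subexo})$ values. The purely endogenous statement $\ShapleynConst{q} \polyrx \FMCConst{q}$ then follows by a constant-based analogue of \Cref{lem:FGMC-MC}, which reduces $\FGMCConst{q}$ with $k$ exogenous constants to $2^k$ calls to $\FMCConst{q}$ via inclusion--exclusion on the exogenous constants.

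The hard direction $\FGMCConst{q} \polyrx \ShapleyConst{q}$ rests on a simplification specific to the constant-based setting: every hom-closed $q$ admits a minimal support whose atoms all use a \emph{single} constant. Indeed, starting from any minimal support $S$ of $q$, the collapse homomorphism $h : \const(S) \to \{a\}$ (for $a$ fresh) yields a support $h(S)$ using only $a$ by hom-closure, and any minimal sub-support $S^* \subseteq h(S)$ then satisfies $\const(S^*) = \{a\}$. Given an input $(\D, C_{\subendo}, C_{\subexo})$---we may assume $\D|_{C_{\subexo}} \not\models q$, else monotonicity gives $\binom{|C_{\subendo}|}{j}$ trivially for every size $j$---I build $\Ai \defeq \D \cup S^{(0)} \cup \dots \cup S^{(i)}$, where each $S^{(k)}$ is an isomorphic copy of $S^*$ obtained by replacing $a$ with a fresh constant $a_k$; I then declare $a_0, \dots, a_i$ endogenous while retaining $\D$'s partition on the remaining constants.

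Since the sets $\const(S^{(k)}) = \{a_k\}$ are pairwise disjoint and disjoint from $\const(\D)$, for any $B \subseteq C_{\subendo}(\Ai) \setminus \{a_0\}$ we have $\Ai|_{B \cup C_{\subexo}} = \D|_{(B \cap C_{\subendo}(\D)) \cup C_{\subexo}} \cup \bigcup_{k \geq 1 :\, a_k \in B} S^{(k)}$, and since each $S^{(k)} \models q$,
\[
    \scorefun(B \cup \{a_0\}) - \scorefun(B) = 1 \iff B \subseteq C_{\subendo}(\D) \text{ and } \D|_{B \cup C_{\subexo}} \not\models q.
\]
Writing $N_j$ for the $j$-th value of $\FGMCConst{q}$ on $\D$ and $n \defeq |C_{\subendo}(\D)|$, the Shapley value of $a_0$ in $\Ai$ then equals $Sh^{(i)} = \sum_{j=0}^{n} \bigl(\binom{n}{j} - N_j\bigr) \cdot \frac{j!\,(n+i-j)!}{(n+i+1)!}$. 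Ranging $i$ over $\{0, \dots, n\}$ yields a square linear system whose coefficient matrix reduces, by column reversal and row/column rescaling, to $((i+j')!)_{i,j'}$, which is invertible by \cite[proof of Theorem 1.1]{bacherDeterminantsMatricesRelated2002}; solving recovers every $N_j$. Moreover, since the construction introduces no exogenous constant---$S^*$ is a ``star'' around $a$, so no analogue of $S^-$ arises---the property $C_{\subexo}(\D) = \emptyset$ is preserved, giving $\FMCConst{q} \polyrx \ShapleynConst{q}$ by the same argument.

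The main obstacle bypassed here is the ``mixing'' worry that forces the island/pseudo-connectedness hypotheses of \Cref{lem:pseudoconn}: a minimal support of $q$ in $\Ai$ could \emph{a priori} span both $\D$ and several copies $S^{(k)}$. In the fact-based setting this is precisely what the island assumption rules out; here, by contrast, the single-constant collapse makes all regions constant-disjoint, so the restriction $\Ai|_{B \cup C_{\subexo}}$ transparently decomposes, making the above characterization of $\scorefun$ exact for every hom-closed $q$.
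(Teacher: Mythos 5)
Your proposal is correct and follows essentially the same approach as the paper: both observe that collapsing a minimal support of a hom-closed query to a single fresh constant turns it into what the paper calls a ``duplicable singleton support,'' so the $\Ai$-style duplicate-and-invert-the-Bacher-matrix machinery of \Cref{lem:pseudoconn} goes through, and both handle the purely endogenous case by noting the construction adds no exogenous constants. Where the paper gestures at the previous lemma, you spell out the linear system explicitly, but the underlying argument is the same.
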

\begin{proof}
    The reduction from $\ShapleyConst{q}$ (resp.\ $\ShapleynConst{q}$) to $\FGMCConst{q}$ (resp.\ $\FMCConst{q}$) can be obtained by directly applying the proofs of \Cref{prop:reductions-problems} and \Cref{cor:shn-tract}. For the other direction, we shall prove the stronger result that for any "$C$-hom-closed" query $q$, $\FGMCConst{q}(\D)$ (resp. $\FMCConst{q}(\D)$) can be computed in polynomial time from an oracle to $\ShapleyConst{q}$ (resp. $\ShapleynConst{q}$) provided that $\D\cap C\subseteq\Dx$. The argument is a simple adaptation of the proof of \Cref{lem:pseudoconn}.
    
    Let $\D$ be an input database such that $\D\cap C\subseteq\Dx$. If all "minimal supports" of $q$ have their constants in $C$, then either all coalitions are "generalized supports" or none are, but either way the computation of $\FGMCConst{q}(\D)$ is trivial. Otherwise, take a "minimal support" $S$ of $q$ such that $\const(S)\nsubseteq C$, and by $C$-"homomorphically" renaming it if necessary we can assume $\const(S)\setminus C$ is reduced to a singleton $\{a_{\mu}\}$ with $a_{\mu}\notin\const(\D)$.
    Since the players are the constants, $S$ is essentially a "duplicable singleton support" (see \Cref{cor:singleton-sup}): in the formula that defines $\Sh(C_{\subendo} \cup \{a_{\mu}\}, \scorefun, a_{\mu})$, any coalition that contains $a_{\mu}$ will satisfy the query, and any that does not will not be contained in the input database.
    
    Finally, note that no "exogenous" element ("constants" instead of "facts" in this setting) is added since $S$, the minimal support used in the construction, only contains a single "constant".
\end{proof}
\noindent In fact, this proposition can be extended to apply more generally to "$C$-hom-closed" queries provided the constants in $C$ are "exogenous" -- in particular, capturing the example query $q^*$. 

We
note that in the "graph database" setting these variants are equivalent to considering "Shapley values" of \emph{nodes} instead of \emph{edges}. This problem was considered in \cite[\S 6]{khalilComplexityShapleyValue2023}, where the authors showed that $\ShapleyConst{q}$ is "shP"-hard for any full "CRPQ" with a ``non-redundant'' atom whose language contains a word of length at least four.

\section{Discussion}
\label{sec:discussion}
After having identified $\FGMC{}$ as the suitable counting analog of $\Shapley{}$, 
we have exhibited reductions from $\FGMC{}$ to $\Shapley{}$.  
As the hypotheses of our reductions are formulated generally, they
can be combined with existing hardness results for model counting or probabilistic query evaluation 
to establish "shP"-hardness for $\Shapley{}$
for a range of query classes. In particular, we have obtained an "FP"/"shP" dichotomy
for constant-free unions of connected CQs, a relevant stepping stone towards 
the conjectured dichotomy for UCQs. 
In some cases, our reductions 
allow us not only to transfer hardness results but also
to establish the polynomial-time equivalence between $\FGMC{q}$ 
and $\Shapley{q}$, providing the strongest evidence as of yet that Shapley value computation is, from a complexity perspective, nothing other 
than a counting problem. 


%

As is apparent from Figure \ref{fig:query-lang}, most of our new results concern constant-free queries, as  
the presence of constants in queries make reductions from $\FGMC{}$ considerably more difficult. While there are known methods for eliminating constants from queries in the probabilistic setting (most notably, the so-called query ``shattering'' \cite[\S 2.5]{dalviDichotomyProbabilisticInference2012}), they interact badly 
with our hypotheses. For example, the shattering of a connected
 "variable-connected" query may become disconnected, and not even "pseudo-connected", as illustrated by \Cref{ex:bad-shattering}.
It is therefore remains an important question for future work how best to eliminate or otherwise handle constants when reducing $\FGMC{}$ to $\Shapley{}$.
\begin{toappendix}
\begin{example}\label{ex:bad-shattering}
	Consider the $CQ$ $R(x,y)\land S(a,x)\land S(x,a) \land T(x,z)$. It is "variable-connected" since every atom contains "x" but its complete shattering \cite{dalviDichotomyProbabilisticInference2012} contains the disjunct $R_{a,*}(y)\land S_{a,a}()\land T_{a,*}(z)$ which isn't "connected".
\end{example}
\end{toappendix}

Aside from extending our reductions to larger classes of queries, there are other interesting questions 
to explore related to the variants from Section \ref{sec:extensions}. 
In particular, we would like to know whether the purely endogenous variant $\Shapleyn{}$ is as hard as $\Shapley{}$, which appears to be related to the open question of whether there exist queries for which $\MC{}$ is tractable but not $\GMC{}$. 
We are also eager to understand better the Shapley value of constants, which we find natural. Due to \Cref{prop-sh-constants}, 
we can alternatively consider the model counting analog $\FGMCConst{q}$,
which is conceptually simpler.

 
\begin{acks}
  Diego Figueira is partially supported by ANR QUID, grant ANR-18-CE40-0031.
  Meghyn Bienvenu and Diego Figueira are partially supported by ANR AI Chair INTENDED, grant ANR-19-CHIA-0014.
  Part of this research was produced while Pierre Lafourcade was supported by the ENS Paris-Saclay.
\end{acks}

\bibliographystyle{ACM-Reference-Format}

\begin{thebibliography}{13}


\ifx \showCODEN    \undefined \def \showCODEN     #1{\unskip}     \fi
\ifx \showDOI      \undefined \def \showDOI       #1{#1}\fi
\ifx \showISBNx    \undefined \def \showISBNx     #1{\unskip}     \fi
\ifx \showISBNxiii \undefined \def \showISBNxiii  #1{\unskip}     \fi
\ifx \showISSN     \undefined \def \showISSN      #1{\unskip}     \fi
\ifx \showLCCN     \undefined \def \showLCCN      #1{\unskip}     \fi
\ifx \shownote     \undefined \def \shownote      #1{#1}          \fi
\ifx \showarticletitle \undefined \def \showarticletitle #1{#1}   \fi
\ifx \showURL      \undefined \def \showURL       {\relax}        \fi
\providecommand\bibfield[2]{#2}
\providecommand\bibinfo[2]{#2}
\providecommand\natexlab[1]{#1}
\providecommand\showeprint[2][]{arXiv:#2}

\bibitem[Amarilli(2023)]%
        {amarilliUniformReliabilityUnbounded2023a}
\bibfield{author}{\bibinfo{person}{Antoine Amarilli}.}
  \bibinfo{year}{2023}\natexlab{}.
\newblock \showarticletitle{Uniform {{Reliability}} for {{Unbounded
  Homomorphism-Closed Graph Queries}}}. In
  \bibinfo{booktitle}{\emph{International Conference on Database Theory
  (ICDT)}} \emph{(\bibinfo{series}{Leibniz International Proceedings in
  Informatics (LIPIcs)}, Vol.~\bibinfo{volume}{255})},
  \bibfield{editor}{\bibinfo{person}{Floris Geerts} {and}
  \bibinfo{person}{Brecht Vandevoort}} (Eds.).
  \bibinfo{publisher}{Leibniz-Zentrum f{\"u}r Informatik},
  \bibinfo{pages}{14:1--14:17}.
\newblock
\urldef\tempurl%
\url{https://doi.org/10.4230/LIPIcs.ICDT.2023.14}
\showDOI{\tempurl}


\bibitem[Bacher(2002)]%
        {bacherDeterminantsMatricesRelated2002}
\bibfield{author}{\bibinfo{person}{Roland Bacher}.}
  \bibinfo{year}{2002}\natexlab{}.
\newblock \showarticletitle{Determinants of Matrices Related to the {{Pascal}}
  Triangle}.
\newblock \bibinfo{journal}{\emph{Journal de théorie des nombres de Bordeaux}}
  \bibinfo{volume}{14}, \bibinfo{number}{1} (\bibinfo{year}{2002}),
  \bibinfo{pages}{19--41}.
\newblock
\showISSN{2118-8572}
\urldef\tempurl%
\url{https://doi.org/10.5802/jtnb.344}
\showDOI{\tempurl}


\bibitem[Barcel\'{o} et~al\mbox{.}(2019)]%
        {barcelo_et_al:LIPIcs.ICALP.2019.104}
\bibfield{author}{\bibinfo{person}{Pablo Barcel\'{o}}, \bibinfo{person}{Diego
  Figueira}, {and} \bibinfo{person}{Miguel Romero}.}
  \bibinfo{year}{2019}\natexlab{}.
\newblock \showarticletitle{{Boundedness of Conjunctive Regular Path Queries}}.
  In \bibinfo{booktitle}{\emph{46th International Colloquium on Automata,
  Languages, and Programming (ICALP 2019)}} \emph{(\bibinfo{series}{Leibniz
  International Proceedings in Informatics (LIPIcs)},
  Vol.~\bibinfo{volume}{132})}, \bibfield{editor}{\bibinfo{person}{Christel
  Baier}, \bibinfo{person}{Ioannis Chatzigiannakis}, \bibinfo{person}{Paola
  Flocchini}, {and} \bibinfo{person}{Stefano Leonardi}} (Eds.).
  \bibinfo{publisher}{Schloss Dagstuhl -- Leibniz-Zentrum f{\"u}r Informatik},
  \bibinfo{address}{Dagstuhl, Germany}, \bibinfo{pages}{104:1--104:15}.
\newblock
\showISBNx{978-3-95977-109-2}
\showISSN{1868-8969}
\urldef\tempurl%
\url{https://doi.org/10.4230/LIPIcs.ICALP.2019.104}
\showDOI{\tempurl}


\bibitem[Dalvi and Suciu(2004)]%
        {dalviEfficientQueryEvaluation2004}
\bibfield{author}{\bibinfo{person}{Nilesh~N. Dalvi} {and} \bibinfo{person}{Dan
  Suciu}.} \bibinfo{year}{2004}\natexlab{}.
\newblock \showarticletitle{Efficient Query Evaluation on Probabilistic
  Databases}. In \bibinfo{booktitle}{\emph{International Conference on Very
  Large Data Bases ({VLDB})}}, \bibfield{editor}{\bibinfo{person}{Mario~A.
  Nascimento}, \bibinfo{person}{M.~Tamer {\"{O}}zsu}, \bibinfo{person}{Donald
  Kossmann}, \bibinfo{person}{Ren{\'{e}}e~J. Miller},
  \bibinfo{person}{Jos{\'{e}}~A. Blakeley}, {and} \bibinfo{person}{K.~Bernhard
  Schiefer}} (Eds.). \bibinfo{publisher}{Morgan Kaufmann},
  \bibinfo{pages}{864--875}.
\newblock
\urldef\tempurl%
\url{https://doi.org/10.1016/B978-012088469-8.50076-0}
\showDOI{\tempurl}


\bibitem[Dalvi and Suciu(2012)]%
        {dalviDichotomyProbabilisticInference2012}
\bibfield{author}{\bibinfo{person}{Nilesh~N. Dalvi} {and} \bibinfo{person}{Dan
  Suciu}.} \bibinfo{year}{2012}\natexlab{}.
\newblock \showarticletitle{The dichotomy of probabilistic inference for unions
  of conjunctive queries}.
\newblock \bibinfo{journal}{\emph{J. ACM}} \bibinfo{volume}{59},
  \bibinfo{number}{6} (\bibinfo{year}{2012}), \bibinfo{pages}{30:1--30:87}.
\newblock
\urldef\tempurl%
\url{https://doi.org/10.1145/2395116.2395119}
\showDOI{\tempurl}


\bibitem[Deutch et~al\mbox{.}(2022)]%
        {deutchComputingShapleyValue2022a}
\bibfield{author}{\bibinfo{person}{Daniel Deutch}, \bibinfo{person}{Nave
  Frost}, \bibinfo{person}{Benny Kimelfeld}, {and}
  \bibinfo{person}{Mika{\"{e}}l Monet}.} \bibinfo{year}{2022}\natexlab{}.
\newblock \showarticletitle{Computing the Shapley Value of Facts in Query
  Answering}. In \bibinfo{booktitle}{\emph{ACM SIGMOD International Conference
  on Management of Data (SIGMOD)}}. \bibinfo{publisher}{ACM Press},
  \bibinfo{pages}{1570--1583}.
\newblock
\urldef\tempurl%
\url{https://doi.org/10.1145/3514221.3517912}
\showDOI{\tempurl}


\bibitem[Fink and Olteanu(2016)]%
        {finkDichotomiesQueriesNegation2016}
\bibfield{author}{\bibinfo{person}{Robert Fink} {and} \bibinfo{person}{Dan
  Olteanu}.} \bibinfo{year}{2016}\natexlab{}.
\newblock \showarticletitle{Dichotomies for Queries with Negation in
  Probabilistic Databases}.
\newblock \bibinfo{journal}{\emph{ACM Trans. Database Syst.}}
  \bibinfo{volume}{41}, \bibinfo{number}{1}, Article \bibinfo{articleno}{4}
  (\bibinfo{date}{feb} \bibinfo{year}{2016}), \bibinfo{numpages}{47}~pages.
\newblock
\showISSN{0362-5915}
\urldef\tempurl%
\url{https://doi.org/10.1145/2877203}
\showDOI{\tempurl}


\bibitem[Kara et~al\mbox{.}(2024)]%
        {karaShapleyValueModel2023}
\bibfield{author}{\bibinfo{person}{Ahmet Kara}, \bibinfo{person}{Dan Olteanu},
  {and} \bibinfo{person}{Dan Suciu}.} \bibinfo{year}{2024}\natexlab{}.
\newblock \showarticletitle{From {S}hapley Value to Model Counting and Back}.
  In \bibinfo{booktitle}{\emph{ACM Symposium on Principles of Database Systems
  (PODS)}}. \bibinfo{publisher}{ACM Press}.
\newblock


\bibitem[Kenig and Suciu(2021)]%
        {kenigDichotomyGeneralizedModel2021}
\bibfield{author}{\bibinfo{person}{Batya Kenig} {and} \bibinfo{person}{Dan
  Suciu}.} \bibinfo{year}{2021}\natexlab{}.
\newblock \showarticletitle{A Dichotomy for the Generalized Model Counting
  Problem for Unions of Conjunctive Queries}. In \bibinfo{booktitle}{\emph{ACM
  Symposium on Principles of Database Systems (PODS)}}. \bibinfo{publisher}{ACM
  Press}, \bibinfo{pages}{312--324}.
\newblock
\urldef\tempurl%
\url{https://doi.org/10.1145/3452021.3458313}
\showDOI{\tempurl}


\bibitem[Khalil and Kimelfeld(2023)]%
        {khalilComplexityShapleyValue2023}
\bibfield{author}{\bibinfo{person}{Majd Khalil} {and} \bibinfo{person}{Benny
  Kimelfeld}.} \bibinfo{year}{2023}\natexlab{}.
\newblock \showarticletitle{The Complexity of the {S}hapley Value for Regular
  Path Queries}. In \bibinfo{booktitle}{\emph{International Conference on
  Database Theory (ICDT)}} \emph{(\bibinfo{series}{Leibniz International
  Proceedings in Informatics (LIPIcs)}, Vol.~\bibinfo{volume}{255})}.
  \bibinfo{publisher}{Leibniz-Zentrum f{\"u}r Informatik},
  \bibinfo{pages}{11:1--11:19}.
\newblock
\urldef\tempurl%
\url{https://doi.org/10.4230/LIPICS.ICDT.2023.11}
\showDOI{\tempurl}


\bibitem[Livshits et~al\mbox{.}(2021)]%
        {livshitsShapleyValueTuples2021}
\bibfield{author}{\bibinfo{person}{Ester Livshits},
  \bibinfo{person}{Leopoldo~E. Bertossi}, \bibinfo{person}{Benny Kimelfeld},
  {and} \bibinfo{person}{Moshe Sebag}.} \bibinfo{year}{2021}\natexlab{}.
\newblock \showarticletitle{The {S}hapley Value of Tuples in Query Answering}.
\newblock \bibinfo{journal}{\emph{Logical Methods in Computer Science (LMCS)}}
  \bibinfo{volume}{17}, \bibinfo{number}{3} (\bibinfo{year}{2021}).
\newblock
\urldef\tempurl%
\url{https://doi.org/10.46298/LMCS-17(3:22)2021}
\showDOI{\tempurl}


\bibitem[Reshef et~al\mbox{.}(2020)]%
        {reshefImpactNegationComplexity2020}
\bibfield{author}{\bibinfo{person}{Alon Reshef}, \bibinfo{person}{Benny
  Kimelfeld}, {and} \bibinfo{person}{Ester Livshits}.}
  \bibinfo{year}{2020}\natexlab{}.
\newblock \showarticletitle{The Impact of Negation on the Complexity of the
  Shapley Value in Conjunctive Queries}. In \bibinfo{booktitle}{\emph{ACM
  Symposium on Principles of Database Systems (PODS)}}. \bibinfo{publisher}{ACM
  Press}, \bibinfo{pages}{285--297}.
\newblock
\urldef\tempurl%
\url{https://doi.org/10.1145/3375395.3387664}
\showDOI{\tempurl}


\bibitem[Shapley(1953)]%
        {shapley:book1952}
\bibfield{author}{\bibinfo{person}{Lloyd~S Shapley}.}
  \bibinfo{year}{1953}\natexlab{}.
\newblock \showarticletitle{A Value for N-{{Person}} Games}.
\newblock In \bibinfo{booktitle}{\emph{Contributions to the Theory of Games
  {{II}}}}, \bibfield{editor}{\bibinfo{person}{Harold~W. Kuhn} {and}
  \bibinfo{person}{Albert~W. Tucker}} (Eds.). \bibinfo{publisher}{{Princeton
  University Press}}, \bibinfo{pages}{307--317}.
\newblock


\end{thebibliography}

\end{document}